\documentclass[preprint,authoryear]{elsarticle}\usepackage[a4paper, left=3cm, right=3cm, top=3cm, bottom=3cm]{geometry}
\usepackage{color}
\usepackage{algorithm}
\usepackage{algpseudocode} % trobule algorithm 
\usepackage[latin1]{inputenc}
\usepackage{xcolor}
\usepackage{subcaption}
\usepackage{ragged2e} % Flush left undefined
\usepackage{enumitem}
\usepackage{amsthm}
\usepackage{setspace}
\usepackage{siunitx}
\usepackage{url}
\usepackage{hyperref}
\hypersetup{breaklinks=true}
\usepackage{pifont}
\usepackage{multirow}
\usepackage{scalerel}
\usepackage{amsmath}
\usepackage{tikz}
\usepackage{acmart-taps}
\usepackage{amssymb}  % For AMS symbols
\usepackage{booktabs}
\usepackage{scalerel} % For scaling math symbols
 % trouble for big dsum
%%%%%%% Style files added by Ames
%
%\usepackage{biped_commands}

%\newtheorem{definition}{Definition}%alles neu
\newtheorem{proposition}{Proposition}
\newtheorem{theorem}{Theorem}
\newtheorem{lemma}{Lemma}

\newtheorem{example}{Example}
\newtheorem{corollary}{Corollary}

\theoremstyle{definition}
\newtheorem{definition}{Definition}[section]

\newcommand{\Gr}[1]{{G_I}}

\newcommand\eatpunct[1]{}
\DeclareMathSymbol{\shortminus}{\mathbin}{AMSa}{"39}

\DeclareMathOperator*{\dsum}{\scalerel*{\boxplus}{\sum}}
\DeclareMathAlphabet\mathbfcal{OMS}{cmsy}{b}{n}

 \newcommand{\secminus}{-5pt}

% Caption format
\captionsetup[figure]{
    width=\linewidth, % width of caption is 90% of current textwidth
    labelfont=bf,        % the label, e.g. figure 12, is bold
    font=small,          % the whole caption text (label + content) is small
    %format=hang,         % no caption text under the label
}
\usepackage{titlesec} % For customizing section titles

% Command to change the color of section titles

% Define a new environment to wrap section content in a specific color

% Comments for algorithm
\makeatletter
\let\OldStatex\Statex
\renewcommand{\Statex}[1][3]{%
  \setlength\@tempdima{\algorithmicindent}%
  \OldStatex\hskip\dimexpr#1\@tempdima\relax}
\makeatother
\sloppy
\begin{document}
\title{Reachability Analysis for Linear Systems with Uncertain Parameters using Polynomial Zonotopes}
\author[1]{Yushen Huang\corref{cor1}\fnref{fn1}}
\ead{yushen.huang@stonybrook.edu}

\author[1]{Ertai Luo\fnref{fn1}}
\ead{erluo@cs.stonybrook.edu}

\author[1]{Stanley Bak}
\ead{sbak@cs.stonybrook.edu}

\author[1]{Yifan Sun}
\ead{yifan.sun@stonybrook.edu}

\fntext[fn1]{These authors contributed equally to this work.}

\cortext[cor1]{Corresponding author}
\address[1]{Stony Brook University, Stony Brook, NY, USA}

\begin{abstract}
In real world applications, uncertain parameters are the rule rather than the exception. 
 We present a reachability algorithm for linear systems with uncertain parameters and inputs using set propagation of polynomial zonotopes.
In contrast to previous methods, our approach is able to tightly capture the non-convexity of the reachable set. 
 Building up on our main result, we show how our reachability algorithm can be extended to handle linear time-varying systems as well as linear systems with time-varying parameters. 
 Moreover, our approach opens up new possibilities for reachability analysis of linear time-invariant systems, nonlinear systems, and hybrid systems. 
 We compare our approach to other state of the art methods, with superior tightness on two benchmarks including a 9-dimensional vehicle platooning system.
Moreover, as part of the journal extension, we investigate through a polynomial zonotope with special structure named multi-affine zonotopes and its optimization problem. 
We provide the corresponding optimization algorithm and experiment over the examples obatined from two benchmark systems, showing the efficiency and scalability comparing to the state of the art method for handling such type of set representation. 
 
\end{abstract}
\begin{keyword}
formal verification \sep reachability analysis \sep polynomial zonotopes \sep linear systems with uncertain parameters
\end{keyword}
\maketitle
% -------------------------------------------------------
% INTRODUCTION
% -------------------------------------------------------
\vspace{\secminus}
\begin{figure}[ht]
   \centering
   \setlength{\belowcaptionskip}{-13pt}   \includegraphics[width=0.7\columnwidth]{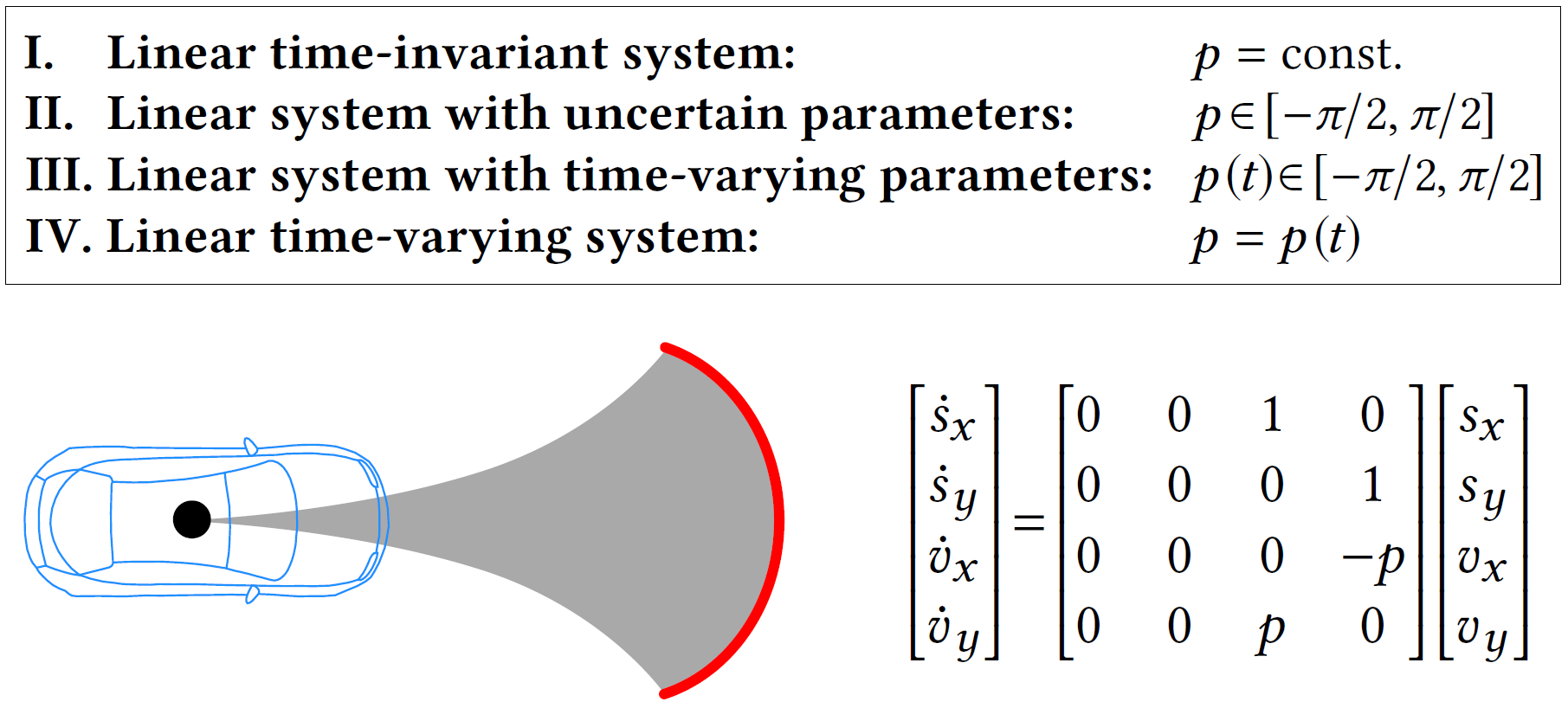}
   \vspace{-6pt}
   \caption{Different classes of linear systems explained on the example of the Dubins car model, where $s_x$, $s_y$ represent the x- and y-position of the car and $v_x$, $v_y$ are the corresponding velocities. The final reachable set is shown in red and the reachable set for the whole time horizon in gray.}
   \label{fig:dubinsCarSchematic}
\end{figure}
\section{Introduction} 
\label{sec:Introduction}
Formal verification is an important aspect of modern cyber-physical systems since many of them, such as autonomous cars, robots collaborating with humans, and power system controllers, are safety-critical. Reachability analysis is one of the most prominent techniques for the formal verification of cyber-physical systems since it can guarantee robust safety in the presence of uncertain parameters and inputs. In particular, if the reachable set of the system does not intersect any unsafe regions, the system is proven to be safe. The focus of reachability analysis is therefore the computation of a tight enclosure of the reachable set, because tightness increases the chances to successfully prove safety and avoid spurious counterexamples. In this paper, we consider linear systems with uncertain parameters and inputs. 
The time-point reachable set for these types of systems is in general non-convex, as it is visualized for the example of the Dubins car in Fig. $\ref{fig:dubinsCarSchematic}$, where the uncertain parameter is the turning rate. While previous methods compute convex enclosures of the reachable set and are consequently limited in their accuracy, we present a new approach that computes very tight non-convex enclosures. 
%Safety verification is an important aspect of modern control system designs (More and cite). We need to guarantee a state will not get into a unsafe region during the control dynamics (More and cite).  Reachability analysis is one the most important techniques to achieve the safety control for dynamic systems.(More about what is RA) In most previous works, convex set representations like ellipsoid, polytope and zonotope (cite) are widely used which can achieve fast and relative accurate results (cite) for several types of problems like linear time-invariant systems (cite). However, other types of problems like linear time-invariant systems with constant uncertainties, will result as non-convex reachable sets. The Dubins car is a great example to demonstrate such non-convex property.(figure) Though current methods using convex set representations exists (cite), the non-convex reachable set pose a accuracy limit to the solution of these convex methods. Therefore, in order to achieve a more accurate non-convex solution, a non-convex set representation is being desired to perform reachability analysis over them. In this work, we use polynomial zonotopes with their sparse representations (cite) as a non-convex set representation. The main contributions of this work are:\\
%\begin{comment}
%\begin{figure}[t!]
\journal{Nonlinear Analysis: Hybrid Systems
}
There are many ways to perform reachability analysis for linear time-invariant systems without uncertain parameters (case I. in Fig.~\ref{fig:dubinsCarSchematic}), which is a well studied problem. The reachable set is either computed based on simulations using star sets \citep{Bak2017b,Duggirala2016}, or based on set-propagation using a variety of set representations such as support functions \citep{LeGuernic2010,Frehse2011}, polytopes \citep{Chutinan2003}, ellipsoids \citep{Kurzhanski2000}, and zonotopes \citep{Girard2005}. 
By using Krylov subspaces \citep{Bak2019,Althoff2019} or block-decomposition \citep{Bogomolov2018}, one can compute the reachable set for very high-dimensional systems with thousands of states. While algorithm parameters such as time step size often have to be tuned manually, a recent development in the field is to automate the corresponding tuning process \citep{Wetzlinger2020,Frehse2013}.

Fewer approaches exist for linear systems with uncertain parameters (case II. in Fig.~\ref{fig:dubinsCarSchematic}). 
One approach~\citep{Ghosh2021ReachabilityOL} computes approximations of sets which contain the reachable set with a certain probability. Hereby, candidate sets are first obtained via reachable set calculation for sampled parameter values and then verified using statistical verification techniques. 
A second method~\citep{Lal2015} divides the domain formed by the parameter space and the considered time horizon into a grid in order to construct a piecewise affine approximation of the flow function, where the analytical solution for autonomous linear systems is utilized to obtain the values for the flow function at the grid points. Since the grid size is refined until the piecewise affine approximation achieves a certain accuracy, an enclosure of the reachable set can be obtained by bloating the set defined by the flow function accordingly. 
Moreover, for discrete-time systems it is possible to obtain an enclosure of the reachable set by computing the convex hull of the reachable sets for all vertices of the parameter set in each time step \citep{Silvestre2022}. 
Other methods use set-propagation techniques to compute reachable sets \citep{Ghosh2019,ghosh2021robustness,Althoff2011b}. For discrete-time parametric systems with a special structure, the reachable set can be represented by a generalized star set with bi-linear constraints \citep{Ghosh2019}. The authors extended this approach to continuous-time systems \citep{ghosh2021robustness}, where they bloat the reachable set for the linear system corresponding to the average parameter values. Here, the required bloating is given by an upper bound for the perturbation of the matrix exponential by the parameter uncertainty, which can be determined using sensitivity analysis. 
The method closest to our approach computes tight enclosures of the reachable set for continuous-time parametric systems with uncertain inputs using zonotopes \citep{Althoff2011b}. 
Common ways to represent the parametric uncertainty are interval matrices \citep{ghosh2021robustness,Althoff2011b}, linear matrix equations \citep{Ghosh2021ReachabilityOL,Ghosh2019}, matrix zonotopes \citep{Althoff2011b,Silvestre2022}, and matrix polytopes \citep{Lal2015}, where linear matrix equations and matrix zonotopes are different representations of the same set. 

All approaches listed above assume that the parameters remain constant over time. However, there also exist some methods \citep{Serry2018,Althoff2011a} that consider the more general case of time-varying uncertain parameters (case III. in Fig.~\ref{fig:dubinsCarSchematic}). The framework of differential inequalities can be utilized to compute a hyper-rectangular enclosure of the reachable set efficiently via simulation \citep{Serry2018}. 
Moreover, ellipsoidal enclosures can be obtained via optimization with respect to linear matrix inequality constraints \citep{ZhangZhihao2020}.
Another approach \citep{Althoff2011a} utilizes the Peano-Baker series, which converges to the transition matrix for time-varying systems. Once an enclosure of the transition matrix is obtained, the reachable set can be computed via set-propagation using zonotopes. 

While time-varying linear systems (case IV. in Fig.~\ref{fig:dubinsCarSchematic}) can be solved with methods for time-varying parameters by enclosing the time-varying system matrix with a matrix set in each time step, some approaches specialize on time-varying systems \citep{Serry2022,balandin2020control}. In particular, the first approach \citep{Serry2022} applies a numerical approximation of the transition matrix for set-propagation using zonotopes, where a correction term is added to account for the approximation error. Moreover, for the special case where the initial set is an ellipsoid, the reachable set can be represented by an ellipsoid whose shape-defining matrix is the solution of a linear matrix differential equation \citep{balandin2020control}. 

In this paper, we present an approach for determining the reachable sets of linear systems with uncertain parameters through the use of polynomial zonotopes. The organization of the paper is as follows: Section \ref{sec: Preliminaries} outlines the necessary notations and definitions essential for comprehending the proposed method. Section \ref{sec:Set Operations} examines the fundamental properties of these definitions. Section \ref{sec:ReachabilityAnalysis} describes the algorithm for calculating the reachable set for linear system with uncertain parameters. Section \ref{sec:Application} extends the reachability algorithm to linear systems with time-varying parameters, linear time-varying systems, nonlinear systems and explore potential applications in the filed of hybrid systems. It also provides numerical experiments showing the accuracy and scalability of our reachability algorithms. Section \ref{sec::scalable optimization}, explores an optimization algorithm for multi-affine zonotope, which is a special polynomial zonotope. Leveraging the unique structure of the set, we experimentally shown its efficiency and scalability comparing to state-of-the-art method for handling such type of set representation. Finally, Section \ref{sec:conclusion} concludes the paper and suggests directions for future research.

This paper extends the work presented in the conference version \cite{luo2023reachability} with significant enhancements, detailed as follows:
\begin{itemize}
    \item Identification of additional structural properties in polynomial zonotopes that are beneficial for our algorithm, specifically noticing they are multi-affine zonotopes.
    \item Development of improved optimization algorithms that utilized the identified structural properties for more efficient computation.
    \item Evaluation of the proposed algorithm with two examples to demonstrate its performance improvements and practical applicability.
    \item We provide a detail proof of Lemma \ref{lemma:time}.
\end{itemize}

\vspace{\secminus}

\vspace{2pt}
\vspace{\secminus}
\section{Preliminaries} 
\label{sec: Preliminaries}
\subsection{Notation}
\label{subsec:notation}
We use lowercase letters to represent vectors $v \in \mathbb{R}^{n}$, uppercase letters to represent matrices $A\in \mathbb{R}^{m \times n}$, uppercase calligraphic letter to represent sets $\mathcal{S} \subset \mathbb{R}^{n}$, and uppercase calligraphic bold letters to represent sets of matrices $\bm{\mathcal{A}} \subset \mathbb{R}^{m \times n}$. %and let $|\mathcal{S}|$ denotes the cardinality of the set. 
Given a vector $v \in \mathbb{R}^{n}$, $v_{(i)}$ represents the $i$-th entry of the vector. %Given a set $\mathcal{H} = \{h_1,\ldots, h_k\}\subset \mathbb{Z}^{+}$ with $|\mathcal{H}| \leq n$, $v_{(\mathcal{H})}$ represent the vector $[v_{(h_1)}, \ldots, v_{(h_k)}]$. 
%An uppercase letter is used to represent a numerical matrix. 
For a matrix $A\in \mathbb{R}^{m \times n}$, $A_{(i,j)}$ represents the element at $i$-th row and $j$-th column, $A_{(i,~)}$ denotes the $i$-th row, and $A_{(~,j)}$ denotes the $j$-th column. Given two matrices $A_{1} \in \mathbb{R}^{m\times n_1}$ and $A_{2} \in \mathbb{R}^{m\times n_2}$, their horizontal concatenation is represented by $[A_1~A_2] \in \mathbb{R}^{m\times (n_1 + n_2)}$. The identity matrix of size $n \times n$ is denoted by $I_n$, and $\boldsymbol{0}_{m \times n}$ or $\boldsymbol{1}_{m \times n}$ represents a $m \times n$ matrix consisting of $0$s or $1$s, respectively. 
%We use uppercase calligraphic bold letter to represent a set of matrices $\bm{\mathcal{A}} \subset \mathbb{R}^{m \times n}$. 
The infinity norm of a matrix set is defined as $||\bm{\mathcal{A}}||_{\infty} = \max(||A||_{\infty}~|~A\in \bm{\mathcal{A}})$. %$abs(c)$ denotes the absolute value of $c\in \mathbb{R}$ %$\textproc{CH}$($\mathbfcal{S}$) representing the convex set enclosure of set $\mathcal{S}$. We use $\implies$ to represent the logic of $\textit{implies}$ and use $\equiv$ to represent the logic of $\textit{if and only if}$. 
% -------------------------------------------------------
% Preliminaries
% -------------------------------------------------------
\subsection{Definition}

We first present several definitions and operations which are required throughout the paper. One commonly used set representation for reachability analysis are zonotopes, since they can represent sets very compactly using generator vectors:
\begin{definition}
\label{def:zonotope}
(Zonotope) Given center $c \in \mathbb{R}^n$ and a generator matrix $G \in \mathbb{R}^{n \times h}$, a zonotope is defined as
\begin{equation*}
    \begin{split}
        \mathcal{Z} = \bigg\{ & c+ \sum _{i=1}^{h} \alpha _i\, G_{(~,i)}~ \bigg| ~\alpha_i\in [-1,1] \bigg\}.
    \end{split}
\end{equation*}
We use shorthand notation ~$\mathcal{Z}=\langle c, G\rangle _{Z}$.
\end{definition}
\begin{figure}[ht]
   \centering
   \setlength{\belowcaptionskip}{-10pt}
\includegraphics[width=0.5\columnwidth]{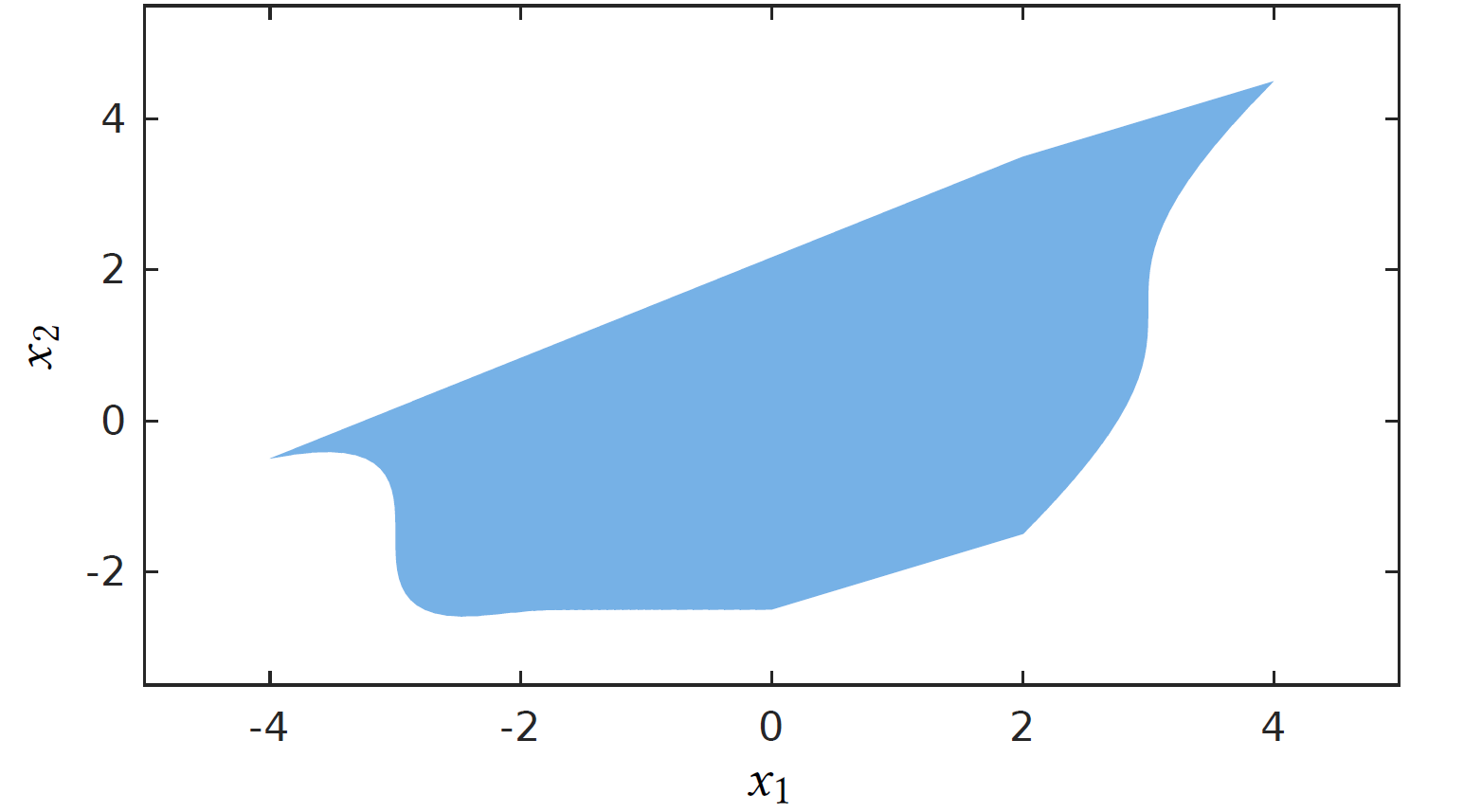}
   \vspace{-18pt}
   \caption{Visualization of the polynomial zonotope in Example \ref{example: single PZ}.}
   \label{fig:Single PZ Filled}
\end{figure}
While zonotopes are always convex, polynomial zonotopes \citep{Althoff2013a} can represent non-convex sets, which allows us to tightly capture the non-convex shape of the reachable set. In this work, we use the sparse representation of polynomial zonotopes \citep{Kochdumper2019a}\footnote{In contrast to \cite[Def.~1]{Kochdumper2019a}, we separate the constant offset $c$ from $G$}:
\begin{definition}
\label{def:polynomialZonotope}
(Polynomial Zonotope) Given a constant offset $c \in \mathbb{R}^n$, a generator matrix of dependent generators $G \in \mathbb{R}^{n \times h}$, a generator matrix of independent generators $G_I \in \mathbb{R}^{n \times q}$, and an exponent matrix $E \in \mathbb{N}_0^{p \times h}$, a polynomial zonotope is defined as
\begin{equation*}
  	\begin{split}
    \mathcal{PZ} \hspace{-2pt}= \hspace{-2pt}\bigg\{ & c+\hspace{-2pt}\sum _{i=1}^h \bigg( \prod _{k=1}^p \alpha _k ^{E_{(k,i)}} \bigg) G_{(~,i)}  + \sum _{j=1}^{q} \beta _j\,G_{I(~,j)}~ \bigg|~\alpha_k, \beta_j \in [-1,1] \bigg\}.
    \end{split}
\end{equation*}
We use the shorthand notation~$\mathcal{PZ}=\langle c, G, G_I, E, id\rangle _{PZ}$, where $id \in \mathbb{N}^{p}$ is a list of non-repeated natural numbers storing a unique identifier for each dependent factor $\alpha_k$.
\end{definition}
Let us demonstrate polynomial zonotopes by an example:
\begin{example}
\label{example: single PZ}
The polynomial zonotope with sparse representation
\begin{equation*}
    \mathcal{PZ} = \bigg\langle 
    \begin{bmatrix} 0 \\ 0 \end{bmatrix},
    \begin{bmatrix} 2 & 0 & 1 \\ 1 & 2 & 1\end{bmatrix},
    \begin{bmatrix} 1 \\ 0.5\end{bmatrix},
    \begin{bmatrix} 1 & 0 & 1 \\ 0 & 1 & 3\end{bmatrix},
    \begin{bmatrix} 1\\ 2 \end{bmatrix}
    \bigg\rangle_{PZ}
\end{equation*}
defines the non-convex set:
\begin{equation*}
    \mathcal{PZ}\hspace{-2pt}= \hspace{-2pt}\bigg\{\hspace{-2pt}
    \begin{bmatrix} 0 \\ 0 \end{bmatrix} + \alpha_1 \hspace{-1.5pt}\begin{bmatrix} 2 \\ 1 \end{bmatrix} + 
    \alpha_2 \hspace{-1.5pt} \begin{bmatrix} 0 \\ 2 \end{bmatrix} + 
    \alpha_1\alpha_2^{3} \hspace{-1.5pt} \begin{bmatrix} 1 \\ 1\end{bmatrix} +
    \beta_1 \hspace{-1.5pt} \begin{bmatrix} 1 \\ 0.5 \end{bmatrix}\hspace{0.5pt}\bigg |~\alpha_1, \alpha_2, \beta_1 \hspace{-2pt}\in\hspace{-2pt} [-1, 1]
    \bigg\},
\end{equation*}
which is visualized in Fig.$~\ref{fig:Single PZ Filled}$. 

%\begin{comment}
%\begin{figure}[!tb] 
%    \centering
%    \setlength{\belowcaptionskip}{-10pt}
%    \psfragfig[width=0.99\columnwidth]{./Figures/polyZonotopeExample}{
%    \psfrag{a}[c][c]{$x_1$}
%  	\psfrag{b}[c][c]{\rotatebox{180}{$x_2$}}
%  	}
%  	\vspace{-8pt}
%    \caption{Visualization of the polynomial zonotope in Example \ref{example: single PZ}.}
%    \label{fig:Single PZ Filled}
%\end{figure}
%\end{comment}

\end{example}
The parameter uncertainty for linear systems can be represented using sets of matrices. In this work we use the two different matrix set representations interval matrices and matrix zonotopes.
\begin{definition}
\label{def:interval matrix}
(Interval Matrix) Given a pair of lower and upper matrices $\underline{A}, \overline{A} \in \mathbb{R}^{m \times n}$, an interval matrix is defined as
\begin{equation*}
    \begin{split}
         \bm{\mathcal{A}} = \begin{bmatrix} [\underline{A}_{(1,1)}, \overline{A}_{(1,1)}]& \ldots &[\underline{A}_{(1,n)}, \overline{A}_{(1,n)}] \\ \vdots & \ddots & \vdots\\ [\underline{A}_{(m,1)}, \overline{A}_{(m,1)}]& \ldots &[\underline{A}_{(m,n)}, \overline{A}_{(m,n)}]\end{bmatrix}.
    \end{split}
\end{equation*}
We use the shorthand notation $\bm{\mathcal{A}} = \langle\underline{A}, \overline{A}\rangle_{IM}$.
\end{definition}
Similar to how zonotopes generalize intervals, matrix zonotopes can be used to generalize interval matrices.
\begin{definition}
(Matrix Zonotope) Given center matrix $A^{(0)} \in \mathbb{R}^{m \times n}$ and generator matrices $A^{(1)}, \ldots,A^{(w)} \in \mathbb{R}^{m \times n}$, a matrix zonotope is defined as
\label{def:matrixZonotope}
    \begin{equation*}
    \bm{\mathcal{A}} = \bigg\{ A^{(0)} + \sum _{l=1}^w \rho_l\, A^{(l)}~ \bigg| \ \rho_l \in [-1,1] \bigg\}.
    \end{equation*}
    We use shorthand notation $\bm{\mathcal{A}} = \langle A^{(0)}, A^{(1)}, \ldots,A^{(w)},id\rangle_{MZ}$
    with $id \in \mathbb{N}^w$ storing the unique identifiers for each factor $\rho_l$.
\end{definition}

The identifiers allow us to preserve dependencies between sets. To achieve that, we require the operation $\textproc{mergeID}$ \cite[Prop.~1]{Kochdumper2019a} to bring the exponent matrices and id lists into a common format. 
\begin{definition}
Given two exponent matrices $E_1 \in \mathbb{R}^{p_1 \times h_1},$ $E_2 \in \mathbb{R}^{p_2 \times h_2}$ and two identifier lists $id_1 \in \mathbb{R}^{p_1}$, $id_2 \in \mathbb{R}^{p_2}$, {\normalfont \textproc{mergeID}} returns two aligned matrices $\overline{E}_1,~ \overline{E}_2$ and a list of adjusted identifiers $\overline{id}$:
\begin{equation*}
    \overline{E}_1,~ \overline{E}_2,~\overline{id} \gets {\normalfont \textproc{mergeID}}(id_1, id_2, E_1, E_2)
\end{equation*}
with
\begin{equation*}
    \begin{split}
        &\mathcal{K} = \big \{i ~|~id_{2(i)} \not\in id_1 \big\} := \{i_1,~\ldots,~ i_{k}\},\\
        &\overline{id} = \big[id_1^{T} \hspace{5pt} id_{2(i_1)} \ldots ~id_{2(i_k)} \big]^{T},\hspace{5pt} \overline{E}_1 = 
        \begin{bmatrix}E_1\\\boldsymbol{0}_{k \times h_1}\end{bmatrix} \in \mathbb{R}^{(p_1 + k) \times h_1},\\
        &\overline{E}_{2(i, ~)} = \begin{cases} E_{2(j,~)}, & \text{if}\hspace{5pt}\exists j~\,\overline{id}_{(i)} = id_{2(j)}\\ \boldsymbol{0}_{1 \times h_2}, & \text{otherwise} \end{cases},~i = 1, \ldots, p_1 + k.
    \end{split}
\end{equation*}
\end{definition}
Sometimes we also need to destroy dependencies to guarantee the correctness of our reachability algorithm. 
For this we introduce the operation $\textproc{uniqueID}(p)$, which returns a vector of unique and non-existing identifiers with length $p$. 
Moreover, we define a convenience operation $\textproc{fresh}(\mathcal{PZ})$ $= \langle c, G, G_{I}, E,$ $\textproc{uniqueID}(p)\rangle_{PZ}$  replacing all identifiers of $\mathcal{PZ}$. The $\textproc{fresh}$ operator can equally be applied to matrix zonotopes. For the ease of notation we denote by $\textproc{fresh}(\mathcal{PZ},\bm{\mathcal{A}})$ that all dependent factors of $\mathcal{PZ}$ that are not part of the matrix zonotope $\bm{\mathcal{A}}$ are equipped with new unique identifiers. The operation $\textproc{eval}(\mathcal{PZ}, id, val)$ replaces each dependent factor of $\mathcal{PZ}$ that has the identifier $\textit{id}$ with a constant value $val \in [-1, 1]$. The operation also applies to matrix zonotopes. As an example, performing $\textproc{eval}(\mathcal{PZ}, 2, 0.5)$ on the polynomial zonotope from Example $\ref{example: single PZ}$ yields the set
\begin{equation*}
%\begin{split}
    %& 
    \textproc{eval}(\mathcal{PZ}, 2, 0.5) = \\
    %& \hspace{10pt}
    \bigg\{
    \alpha_1\begin{bmatrix} 2 \\ 1 \end{bmatrix} + 
    0.5\begin{bmatrix} 0 \\ 2 \end{bmatrix} + 
    \alpha_1(0.5)^{3}\begin{bmatrix} 1 \\ 1\end{bmatrix} +
    \beta_1\begin{bmatrix} 1 \\ 0.5 \end{bmatrix}~\bigg |~
    \alpha_1, \beta_1 \in [-1, 1]\bigg\}.\\
    %&=\bigg\langle \begin{bmatrix} \mathbf{0} \\ \mathbf{1} \end{bmatrix}, \begin{bmatrix} \mathbf{2.125} \\ \mathbf{1.125} \end{bmatrix},\begin{bmatrix} \mathbf{1} \\ \mathbf{0.5} \end{bmatrix}, [1], [1]\bigg\rangle_{PZ}
    %&= \bigg\{\begin{bmatrix} \mathbf{0} \\ \mathbf{1} \end{bmatrix} +
    %\alpha_1\begin{bmatrix} \mathbf{2.125} \\ \mathbf{1.125} \end{bmatrix} + 
    %\beta_1\begin{bmatrix} \mathbf{1} \\ \mathbf{0.5} \end{bmatrix}~\bigg |~\alpha_1, \beta_1 \in [-1, 1]\bigg\}
%\end{split}
\end{equation*}
In this paper we use set operations to perform linear maps with either a matrix or a matrix set, as well as sums that can preserve or drop dependencies. 
Given sets $\mathcal{S}_1, \mathcal{S}_2 \subset \mathbb{R}^n$, a numerical matrix $A \in \mathbb{R}^{m\times n}$ and a matrix set $\bm{\mathcal{A}} \subset \mathbb{R}^{m \times n}$, these operations are:
\begin{align}
    A\,\mathcal{S}_1 & = \{A\,s~|~s \in \mathcal{S}_1\}, \label{def:numerical matrix multiply a set of states} \\
    \bm{\mathcal{A}}~\mathcal{S}_1 & = \{ A\,s~|~A\in \bm{\mathcal{A}}, s \in \mathcal{S}_1\}, \label{def:matrix set multiply a set of states} \\
    \mathcal{S}_1 \oplus \mathcal{S}_2 & = \{s_1 + s_2~|~s_1 \in \mathcal{S}_1, s_2\in\mathcal{S}_2\}. \label{def:Minkowski sum between two set of states}
\end{align}
Given two polynomial zonotopes $\mathcal{PZ}_1=\langle c_1, G_1, G_{I1}, E_1, id_1\rangle_{PZ}$, $\mathcal{PZ}_2=\langle c_2, G_2, G_{I2}, E_2, id_2\rangle_{PZ}$ $\subset \mathbb{R}^n$ and a zonotope $\mathcal{Z} = \langle c_z, G_z \rangle_Z$ $\subset \mathbb{R}^n$, the Minkowski sum can be computed as \cite[Prop.~9]{Kochdumper2019a}:
\begin{eqnarray*}
    \label{op:minkowski sum with PZ}
   % \begin{split}
        \mathcal{PZ}_1 \oplus \mathcal{PZ}_2 &=&  \hspace{1pt} \bigg \langle c_1+c_2,[G_1~ G_2],[G_{I1}~ G_{I2}],\;
        %\\& 
        %\hspace{3pt} 
        \begin{bmatrix} E_1 & \mathbf{0} \\ \mathbf{0} & E_2 \end{bmatrix},\textproc{uniqueID}(id_1,id_2)~\bigg \rangle_{PZ},\\
%    \end{split}
%\end{equation}
%\begin{equation}
    \label{op:minkowski sum with Z}
    \mathcal{PZ}_1 \oplus \mathcal{Z} &=& \langle c_1+c_z, G_1,[G_{I1}~G_z], E_1, id_1 \rangle_{PZ}.
\end{eqnarray*}
While Minkowski sum drops dependencies between sets, the exact sum operation $\boxplus$ explicitly keeps these dependencies \cite[Prop. 10]{Kochdumper2019a}:
%To achieve this, the operation $\textproc{mergeID}(id_1, id_2, E_1, E_2)$ is first applied. Then assign the aligned exponent matrices and merged identifiers to the polynomial zonotopes, $\mathcal{PZ}_1 \gets\langle c_1, G_1, G_{I1}, \overline{E}_1, id\rangle_{PZ}$ and $\mathcal{PZ}_2\gets\langle c_2, G_2, G_{I2},\overline{E}_2, id\rangle_{PZ}$. After that, the exact sum can be computed as:
\begin{equation}
    \label{op:exact sum with PZ}
    \begin{split}
        &\mathcal{PZ}_1~\boxplus~\mathcal{PZ}_2 = \big\langle c_1+c_2, [G_1~ G_2], [G_{I1}~G_{I2}], [\overline{E}_1~\overline{E}_2],\overline{id} \big \rangle_{PZ},
    \end{split}
\end{equation}
\begin{figure}[ht]
   \centering
   \setlength{\belowcaptionskip}{-12pt}
   \hspace{-10pt}
   \includegraphics[width=0.5\columnwidth]{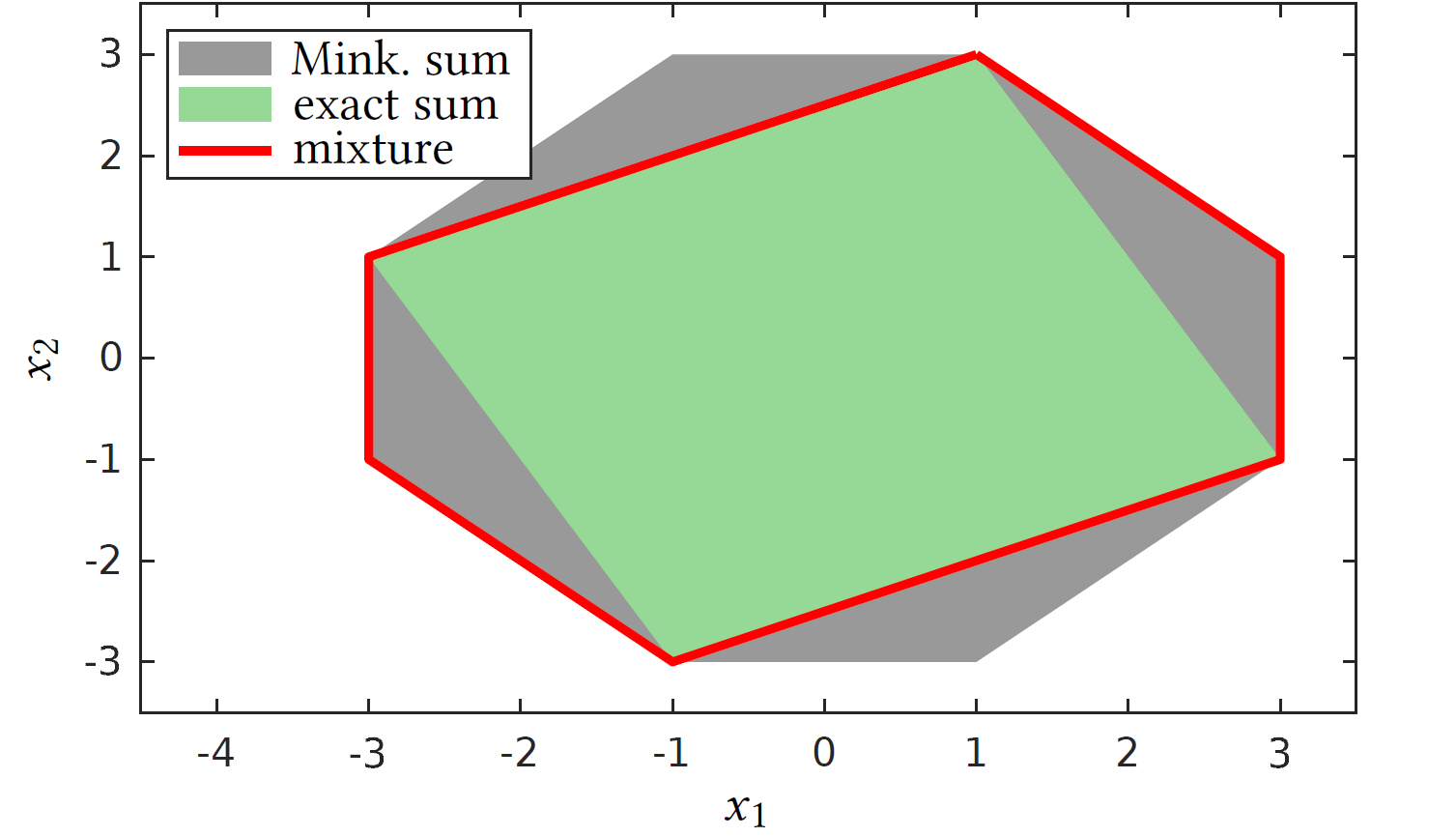}
   \vspace{-19pt}
   \caption{Comparison from Example~\ref{example: compare minkowski sum and direct sum} between Minkowski sum $A~\mathcal{PZ} \oplus \mathcal{PZ}$, exact sum $A~ \mathcal{PZ} \boxplus \mathcal{PZ}$, and a mixture of both obtained by destroying the dependencies for the second dependent factor.}
   \label{fig:Minkowski and Exact Sum}
\end{figure}
where the exponent matrices and identifier are first merged using $\overline{E}_1,~ \overline{E}_2,~\overline{id} \gets \textproc{mergeID}(id_1, id_2, E_1, E_2)$. 
The operation $\textproc{compact}$ \cite[Prop.~2]{Kochdumper2019a} is applied after each exact sum to remove redundancies such as equal columns in the exponent matrix. 
%
%The difference between Minkowski sum and exact sum can be seen with an example:
Let us demonstrate the difference between Minkowski sum and exact sum:
\begin{example}
\label{example: compare minkowski sum and direct sum}
We consider the polynomial zonotope $\mathcal{PZ} = \langle \mathbf{0}_{2 \times 1},I_n, \linebreak[3] [~],I_n,[1 \hspace{5pt} 2]^T\rangle_{PZ}$ together with the numerical matrix $A = [[1 \hspace{5pt} 1]^{T} \hspace{5pt} \linebreak[3] [-1 \hspace{5pt} 1]^{T}]$. The comparison of the Minkowski sum $A~\mathcal{PZ} \oplus \mathcal{PZ}$ with the exact sum $A~ \mathcal{PZ} \boxplus \mathcal{PZ}$ visualized in Fig. $\ref{fig:Minkowski and Exact Sum}$ demonstrates that the Minkowski sum yields an over-approximation of the exact result since the dependencies between the sets $A~\mathcal{PZ}$ and $\mathcal{PZ}$ are destroyed. Moreover, if we only destroy the dependencies for some factors, as for example done by the operation ${\normalfont \textproc{fresh}}(\mathcal{PZ},\bm{\mathcal{A}})$, we obtain a mixture between Minkowski sum and exact sum. 
\end{example}
%\begin{figure}[!tb] 
%    \centering
%    \psfragfig[width=0.99\columnwidth]{./Figures/exampleMinkSum}{
%    \psfrag{a}[c][c]{$x_1$}
%  	\psfrag{b}[c][c]{\rotatebox{180}{$x_2$}}
%  	\psfrag{c}[l][c]{Minkowski sum}
%  	\psfrag{e}[l][c]{exact sum}
%  	}
%    \caption{Comparison between Minkowski sum $A~\mathcal{PZ} \oplus \mathcal{PZ}$ and exact sum $A~ \mathcal{PZ} \boxplus \mathcal{PZ}$ from Example $\ref{example: compare minkowski sum and direct sum}$.}
%    \label{fig:Minkowski and Exact Sum}
%\end{figure}
%\begin{comment}
%\begin{figure}[!tb] 
%    \centering
%    \setlength{\belowcaptionskip}{-10pt}
%    \psfragfig[width=0.99\columnwidth]{./Figures/exampleMinkSumNew}{
%    \psfrag{a}[c][c]{$x_1$}
%  	\psfrag{b}[c][c]{\rotatebox{180}{$x_2$}}
%  	\psfrag{c}[l][c]{Mink. sum}
%  	\psfrag{e}[l][c]{exact sum}
%  	\psfrag{u}[l][c]{mixture}
%  	}
%  	\vspace{-8pt}
%    \caption{\boldmath Comparison from Example~\ref{example: compare minkowski sum and direct sum} between Minkowski sum $A~\mathcal{PZ} \oplus \mathcal{PZ}$, exact sum $A~ \mathcal{PZ} \boxplus \mathcal{PZ}$, and a mixture of both obtained by destroying the dependencies for the second dependent factor.}
%    \label{fig:Minkowski and Exact Sum}
%\end{figure}
%\end{comment}

We additionally require several other operations that are provided in a previous paper on polynomial zonotopes \cite{Kochdumper2019a}: The operator $\textproc{zonotope}(\mathcal{PZ})$ \cite[Prop.~ 5]{Kochdumper2019a} returns a tight enclosing zonotope of a polynomial zonotope. Moreover, $~\textproc{reduce}(\mathcal{PZ}, \rho)$ \cite[Prop.~ 16]{Kochdumper2019a} reduces the representation size and returns a polynomial zonotope with order smaller or equal to $\rho$ which encloses $\mathcal{PZ}$. 

We now introduce a special kind of polynomial zonotope
\begin{definition}
A \emph{multi-affine zonotope} is a  polynomial zonotope in which the exponent matrix $E$ exclusively composed of $0$ or $1$ elements.
The multi-affine zonotope has similar shorthand notation ~$\mathcal{MAZ}=\langle c, G, G_I, E, id\rangle _{MAZ}$.
\label{def:multiaffine-zonotope}
\end{definition}

\vspace{2pt}
\vspace{\secminus}
\section{SET OPERATIONS}
\label{sec:Set Operations}
For our proposed reachability algorithm,  we require additional operations on polynomial zonotopes, which we derive in this section.
We begin with the multiplication of a matrix zonotope with a polynomial zonotope, for which we consider the case without independent generators first and present the general case afterwards.
\begin{proposition}
(Matrix Zonotope Multiplication) Given a matrix zonotope $\bm{\mathcal{A}} =\langle A^{(0)}, A^{(1)}, \ldots,A^{(w)},id_{mz}\rangle_{MZ} \subset \mathbb{R}^{m \times n}$ and a polynomial zonotope $\mathcal{PZ} = \langle c, G, [~], E, id_{pz}\rangle_{PZ} \subset \mathbb{R}^n$, their multiplication is
\label{prop:mtimes}
\begin{equation*}
\label{cal:mtimes cal}
\begin{split}
    \bm{\mathcal{A}}~\mathcal{PZ}= \hspace{1pt} &\big\langle A^{(0)}c, [A^{(0)}G \hspace{5pt} A^{(1)}c ~\ldots ~ A^{(w)}c \hspace{5pt} A^{(1)}G ~ \ldots ~ A^{(w)}G],\\
    & \hspace{95pt} [~],[\widehat{E}_1 \hspace{5pt} \widehat{E}_2 \hspace{5pt} \overline{E}_1 ~ \ldots ~ \overline{E}_w], \widehat{id} \big\rangle_{PZ}
\end{split}
\end{equation*}
with
\begin{equation}
    \label{cal:mtimes new exponent matrices}
    \overline{E}_l = \widehat{E}_{2(~,l)}\cdot\boldsymbol{1}_{1 \times h}+ \widehat{E}_{1}, \quad  l =1,\ldots, w,
\end{equation}
where $\widehat{E}_1, \widehat{E}_2, \widehat{id} \gets \textproc{mergeID}(id_{pz}, id_{mz},E,I_w)$ brings the exponent matrices into a common format.
\end{proposition}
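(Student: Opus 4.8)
The plan is to establish the claimed identity by a direct bilinear expansion of the set product in~\eqref{def:matrix set multiply a set of states}, followed by a verification that the resulting monomials and their exponent vectors coincide with the stated exponent matrices once the two identifier lists have been aligned by $\textproc{mergeID}$.

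First I would unfold the two parametrizations. Writing a generic element of $\bm{\mathcal{A}}$ as $A^{(0)}+\sum_{l=1}^w \rho_l A^{(l)}$ and a generic element of $\mathcal{PZ}$ as $c+\sum_{i=1}^h m_i\,G_{(~,i)}$, where $m_i := \prod_{k=1}^p \alpha_k^{E_{(k,i)}}$ abbreviates the $i$-th dependent monomial, the product $A\,s$ expands by distributivity into exactly four groups of terms: (i) the constant $A^{(0)}c$; (ii) the terms $m_i\,(A^{(0)} G_{(~,i)})$ for $i=1,\ldots,h$; (iii) the terms $\rho_l\,(A^{(l)}c)$ for $l=1,\ldots,w$; and (iv) the mixed terms $\rho_l m_i\,(A^{(l)} G_{(~,i)})$ for $l=1,\ldots,w$, $i=1,\ldots,h$. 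Reading these off identifies the center $A^{(0)}c$ and the three generator blocks $A^{(0)}G$, $[A^{(1)}c\ \ldots\ A^{(w)}c]$, and $[A^{(1)}G\ \ldots\ A^{(w)}G]$ exactly as listed, so the only remaining task is to show that each generator's monomial is encoded by the correct column of the combined exponent matrix.

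For this I would interpret the matrix zonotope $\bm{\mathcal{A}}$ itself as carrying the trivial exponent matrix $I_w$ (its $l$-th generator is multiplied by the pure monomial $\rho_l$), which is precisely why $\textproc{mergeID}(id_{pz}, id_{mz}, E, I_w)$ is the correct alignment to invoke. After merging, the factor list $\widehat{id}$ ranges over the $\alpha_k$ followed by the $\rho_l$, the aligned matrix $\widehat{E}_1$ zero-pads $E$ in the $\rho$-rows so that block (ii) retains its monomials $m_i$, and $\widehat{E}_2$ (the aligned $I_w$) places a single $1$ in the $\rho_l$-row of its $l$-th column so that block (iii) produces exactly $\rho_l$. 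The mixed block (iv) is where the product structure surfaces: the monomial of the term indexed by $(l,i)$ is $\rho_l m_i$, whose exponent vector is the sum $\widehat{E}_{2(~,l)}+\widehat{E}_{1(~,i)}$; collecting these over $i$ yields precisely the column-broadcast formula $\overline{E}_l = \widehat{E}_{2(~,l)}\cdot\boldsymbol{1}_{1\times h}+\widehat{E}_1$ of~\eqref{cal:mtimes new exponent matrices}. I would then close with the two set inclusions: every $A\,s$ has the claimed form, and conversely every choice of the merged factors in $[-1,1]$ corresponds to an admissible pair $(\rho,\alpha)$.

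The main obstacle is not the algebra but the bookkeeping in this last step, because the statement is an \emph{exact} set equality rather than an over-approximation. Equality requires that the $\alpha_k$ and $\rho_l$ be genuinely independent factors, i.e. that $id_{pz}$ and $id_{mz}$ be disjoint. I would make this assumption explicit — it is exactly the independence already built into the set product~\eqref{def:matrix set multiply a set of states}, where $A\in\bm{\mathcal{A}}$ and $s\in\mathcal{PZ}$ are chosen independently — and then check that $\textproc{mergeID}$ honours it by assigning the $\rho_l$ their own rows rather than collapsing them onto the $\alpha$-rows. Only under this condition do the mixed monomials $\rho_l m_i$ sweep out the full bilinear product rather than a dependent subset, which is what turns the inclusion into an equality.
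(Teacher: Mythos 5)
Your proposal is correct and follows essentially the same route as the paper's proof: a direct bilinear expansion of the set product into the four blocks of terms (center, $A^{(0)}G$, $A^{(l)}c$, and the mixed block), the observation that a matrix zonotope carries the trivial exponent matrix $I_w$, and the identification of the mixed terms' exponents with the column-broadcast formula \eqref{cal:mtimes new exponent matrices} after applying $\textproc{mergeID}$. Your closing remark that exact set equality with \eqref{def:matrix set multiply a set of states} presupposes disjoint identifier lists $id_{pz}$ and $id_{mz}$ is a legitimate refinement that the paper leaves implicit; when identifiers are shared, the same construction instead yields the dependent product, which is exactly what the paper exploits in Corollary~\ref{corollary: higher order multiplication}.
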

\begin{proof}
Inserting the definitions of matrix zonotopes and polynomial zonotopes in Def. $\ref{def:polynomialZonotope}$ and Def. $\ref{def:matrixZonotope}$ into the definition of the multiplication with a matrix set in \eqref{def:matrix set multiply a set of states} yields:
\allowdisplaybreaks
\begin{align*}
    &\bm{\mathcal{A}} ~ \mathcal{PZ} \overset{\eqref{def:matrix set multiply a set of states}}{=} \big\{ As \ \big|\ A \in \bm{\mathcal{A}}, s \in \mathcal{PZ} \big \}\\[5pt]
    & = \bigg\{ \bigg(A^{(0)} \hspace{-3pt}+\hspace{-3pt} \sum _{l=1}^w \rho_l A^{(l)}\bigg)\bigg( c\hspace{-2pt}+\hspace{-3pt} \sum _{i=1}^h \bigg( \prod _{k=1}^m \alpha _k ^{E_{(k,i)}} \bigg) G_{(~,i)} \bigg)~\bigg|~\rho_l, \alpha_k \in [-1,1] \bigg\}\\[5pt]
    &= \bigg\{ A^{(0)}c + \sum _{i=1}^h \bigg( \prod _{k=1}^m \alpha _k ^{E_{(k,i)}} \bigg) A^{(0)} G_{(~,i)} + \sum _{l=1}^w \rho_l A^{(l)}c \hspace{1pt}+\\
    &\hspace{20pt}\sum _{l=1}^w\sum _{i=1}^h \bigg( \prod _{k=1}^m \alpha _k ^{E_{(k,i)}} \bigg) \rho_l  A^{(l)}G_{(~,i)}~ \bigg|~\rho_l, \alpha_k, \in [-1,1] \bigg\} \\[10pt]
        & =\big \langle A^{(0)}c, [A^{(0)}G \hspace{5pt} A^{(1)}c ~ \ldots ~ A^{(w)}c \hspace{5pt} A^{(1)}G ~ \ldots ~ A^{(w)}G],[~],\\
        &\hspace{135pt}[\widehat{E}_1 \hspace{5pt} \widehat{E}_2 \hspace{5pt} \overline{E}_1 ~ \ldots ~ \overline{E}_w], \widehat{id} \big\rangle_{PZ}.
\end{align*}
Here, we exploit that any matrix zonotope has an exponent matrix $I_w$ because each generator matrix is only multiplied with a single factor. Since the operation $\textproc{mergeID}$ brings the exponent matrices to a common representation, the multiplication of the factors $\big( \prod _{k=1}^m \alpha _k ^{\substack{E_{(k,i)} \\ \vspace{-5pt}}}\big)~\rho_l$ can be considered by an addition of the $l$-th column of exponent matrix $\widehat{E}_2$ to $i$-th column of $\widehat{E}_1$ as done in $\eqref{cal:mtimes new exponent matrices}$.
\end{proof}
%Computing the higher order multiplication set $\{A^k s~|~A\in \bm{\mathcal{A}}, s \in \mathcal{PZ}\} = \bm{\mathcal{A}}^k~\mathcal{PZ}$ is a common scenario in our reachability algorithm. The goal is to avoid introducing the concept of matrix polynomial zonotope which need extra effort to maintain an exponent matrix for the matrix zonotope while keeping the higher order factors in $\bm{\mathcal{A}}^k$ other than over-approximate it as a larger single matrix zonotope $\cite[Eq.~ 9]{Althoff2011a}$. Observing the fact that the multiplication result in Prop. $\ref{prop:mtimes}$ between a matrix zonotope and a polynomial zonotope is another polynomial zonotope which preserves the factors $\rho_l$s of the matrix zonotope. The goal can be simply achieved by a sequence of k multiplications.
Based on Prop. $\ref{prop:mtimes}$, the multiplication with powers of matrix sets can be computed as follows:
\begin{corollary}
\label{corollary: higher order multiplication}
(Higher Order Multiplication) Given a matrix zonotope $\bm{\mathcal{A}}$ and a polynomial zonotope $\mathcal{PZ}$, the set 
\begin{equation*}
    \big\{A^k s ~\big |~ A\in \bm{\mathcal{A}}, s \in \mathcal{PZ} \big \} = \bm{\mathcal{A}}^{k}~\mathcal{PZ}
\end{equation*} 
can be computed using $k$ multiplications $\bm{\mathcal{A}} \, \big(~\ldots ~ \big(\bm{\mathcal{A}}~(\bm{\mathcal{A}}~\mathcal{PZ})\big)\big)$.
\end{corollary}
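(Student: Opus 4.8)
The plan is to argue by induction on $k$, using Prop.~$\ref{prop:mtimes}$ both as the base case and as the engine of the inductive step, while tracking carefully how the dependent factors $\rho_1,\dots,\rho_w$ of $\bm{\mathcal{A}}$ are shared across the repeated multiplications. For $k=1$ the claim is immediate: by the definition of matrix-set multiplication in \eqref{def:matrix set multiply a set of states} we have $\{A^{1}s \mid A\in\bm{\mathcal{A}},\, s\in\mathcal{PZ}\} = \bm{\mathcal{A}}\,\mathcal{PZ}$, which is exactly what Prop.~$\ref{prop:mtimes}$ computes, so no dependency bookkeeping is yet required.

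For the inductive step, I would assume that the nested product of $k-1$ copies, say $\mathcal{Q}_{k-1} = \bm{\mathcal{A}}(\ldots(\bm{\mathcal{A}}\,\mathcal{PZ}))$, already equals $\{A^{k-1}s \mid A\in\bm{\mathcal{A}},\, s\in\mathcal{PZ}\}$, and moreover that in its sparse representation the factors $\rho_l$ still carry the original identifiers $id_{mz}$ of $\bm{\mathcal{A}}$, so that fixing these factors to values $\hat\rho_l$ selects precisely the matrix $\hat A = A^{(0)} + \sum_l \hat\rho_l A^{(l)}$ and the associated point is $\hat A^{k-1}\hat s$. Applying Prop.~$\ref{prop:mtimes}$ once more with the \emph{same} matrix zonotope $\bm{\mathcal{A}}$, the internal call to $\textproc{mergeID}$ aligns the identifiers of $\mathcal{Q}_{k-1}$ with $id_{mz}$ and recognises that $id_{mz}$ already occurs in $\mathcal{Q}_{k-1}$. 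Consequently the column additions in \eqref{cal:mtimes new exponent matrices} increment the exponents of the already-present factors $\rho_l$ rather than introducing fresh copies: a monomial of $\mathcal{Q}_{k-1}$ carrying $\rho_l^{e}$ becomes $\rho_l^{e+1}$ after left-multiplication by the term $\rho_l A^{(l)}$. Evaluating at $\hat\rho_l$ therefore yields $\hat A\cdot(\hat A^{k-1}\hat s) = \hat A^{k}\hat s$ with one and the same $\hat A$, and ranging over all admissible factor values gives $\{A^{k}s \mid A\in\bm{\mathcal{A}},\, s\in\mathcal{PZ}\}$, closing the induction.

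The delicate point — and the step I expect to be the main obstacle — is precisely this coupling through shared identifiers. If each multiplication were instead performed with a freshly renamed copy of $\bm{\mathcal{A}}$ (as $\textproc{fresh}$ would produce), the nested product would compute the strictly larger set $\{A_k A_{k-1}\cdots A_1\, s \mid A_i\in\bm{\mathcal{A}},\, s\in\mathcal{PZ}\}$, in which the $k$ matrix factors vary independently. The correctness of the corollary thus hinges on showing that retaining $id_{mz}$ unchanged forces $\textproc{mergeID}$ to accumulate the exponent of each $\rho_l$ across all $k$ applications, so that the single scalar $\hat\rho_l$ consistently parameterises the same matrix in every factor of the product. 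I would make this rigorous by exhibiting, for an arbitrary choice of scalars $\hat\rho_l$ and $\hat\alpha_k$, the explicit point the nested polynomial zonotope produces and verifying it equals $\hat A^{k}\hat s$; the reverse containment follows symmetrically since every point of $\{A^{k}s\}$ arises from some such choice. The case where $\mathcal{PZ}$ carries independent generators proceeds analogously using the general form of the matrix-zonotope multiplication.
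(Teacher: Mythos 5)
Your proof is correct and takes exactly the route the paper intends: the paper states this corollary without any explicit proof, treating it as an immediate consequence of Prop.~\ref{prop:mtimes}, and your induction---with the strengthened hypothesis that the factors $\rho_l$ retain the original identifiers $id_{mz}$, so that $\textproc{mergeID}$ accumulates their exponents and evaluation at fixed $\hat\rho_l$ yields $\hat A^{k}\hat s$ with one and the same matrix $\hat A$---supplies precisely the dependency-preservation argument the paper leaves implicit. Your remark that renaming via $\textproc{fresh}$ would instead produce the strictly larger set $\{A_k \cdots A_1 s \mid A_i \in \bm{\mathcal{A}}\}$ correctly identifies why identifier sharing is the essential ingredient (your only loose end is the independent-generator case, where Prop.~2 gives an enclosure rather than equality, but the corollary as placed in the paper concerns the exact case of Prop.~\ref{prop:mtimes}).
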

Any polynomial zonotope can equivalently be represented as a polynomial zonotope without independent generators by introducing new identifiers for the independent factors and expanding the exponent matrix \cite[Prop. 1]{Kochdumper2020a}. Therefore, also for the case with independent generators the multiplication with a matrix zonotope can always be performed exactly. In practice, for computational efficiency and based on the assumption that the independent generators are relatively small, we compute a tight enclosure of the multiplication with the independent generators: 
\begin{proposition}
(Matrix Zonotope Multiplication) Given a matrix zonotope $\bm{\mathcal{A}} =\langle A^{(0)}, A^{(1)}, \ldots,A^{(w)},id_{mz}\rangle_{MZ} \subset \mathbb{R}^{m \times n}$ and a polynomial zonotope $\mathcal{PZ}$ = $\langle c, G, G_I, E, id_{pz}\rangle_{PZ} \subset \mathbb{R}^n$, their multiplication can be tightly enclosed by
\begin{equation*}
\begin{split}
    \bm{\mathcal{A}}~\mathcal{PZ} \subseteq \big \langle \widehat{c},\widehat{G},[A^{(0)}G_I ~ \dots ~ A^{(w)}G_I], \widehat{E},\widehat{id} \big \rangle_{PZ},
\end{split}
\end{equation*}
where 
\begin{equation*}
    \big \langle \widehat{c},\widehat{G}, [~], \widehat{E},\widehat{id} \big \rangle_{PZ} = \bm{\mathcal{A}}~ \big \langle c, G, [~], E, id_{pz} \big \rangle_{PZ}
\end{equation*}
is computed using Prop. $\ref{prop:mtimes}$.
\end{proposition}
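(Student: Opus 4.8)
The plan is to decompose each element of $\bm{\mathcal{A}}~\mathcal{PZ}$ into a \emph{dependent} contribution that is handled exactly by Prop.~\ref{prop:mtimes} and an \emph{independent} contribution coming from $G_I$ that is over-approximated. Starting from \eqref{def:matrix set multiply a set of states}, a generic element is $As$ with $A = A^{(0)} + \sum_{l=1}^w \rho_l A^{(l)} \in \bm{\mathcal{A}}$ and $s = s_0 + s_I \in \mathcal{PZ}$, where $s_0 = c + \sum_{i=1}^h \big(\prod_{k=1}^p \alpha_k^{E_{(k,i)}}\big) G_{(~,i)}$ collects the center and dependent generators, while $s_I = \sum_{j=1}^q \beta_j G_{I(~,j)}$ collects the independent generators. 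By linearity of the matrix--vector product, $As = A s_0 + A s_I$, and I would treat the two summands separately.

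First I would observe that $As_0$ ranges exactly over $\bm{\mathcal{A}}~\langle c, G, [~], E, id_{pz}\rangle_{PZ}$, which by Prop.~\ref{prop:mtimes} equals $\langle \widehat{c}, \widehat{G}, [~], \widehat{E}, \widehat{id}\rangle_{PZ}$; this accounts for the center, the dependent generator matrix $\widehat{G}$, the exponent matrix $\widehat{E}$, and the identifier list $\widehat{id}$ of the claimed result. Next I would expand the independent contribution as
\[
A s_I = \Big(A^{(0)} + \sum_{l=1}^w \rho_l A^{(l)}\Big)\sum_{j=1}^q \beta_j G_{I(~,j)}
= \sum_{j=1}^q \beta_j A^{(0)} G_{I(~,j)} + \sum_{l=1}^w \sum_{j=1}^q \rho_l \beta_j\, A^{(l)} G_{I(~,j)}.
\]
The $A^{(0)}$ term is captured exactly by the block $A^{(0)}G_I$ with the independent factors $\beta_j \in [-1,1]$. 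For each cross term with $l \ge 1$, the coefficient of the generator $A^{(l)} G_{I(~,j)}$ is a product $\rho_l \beta_j$ of two numbers in $[-1,1]$, hence itself lies in $[-1,1]$. The key step is to over-approximate each such coefficient by a \emph{fresh independent} factor $\gamma_{lj} \in [-1,1]$, so that the blocks $A^{(1)}G_I, \ldots, A^{(w)}G_I$ together with $A^{(0)}G_I$ form exactly the independent generator matrix $[A^{(0)}G_I ~ \ldots ~ A^{(w)}G_I]$. Because $\rho_l\beta_j \in [-1,1]$ for every admissible choice, every achievable $As$ lies in the set generated by $\widehat{G}$ (dependent part) plus $[A^{(0)}G_I ~ \ldots ~ A^{(w)}G_I]$ (independent part), which is the claimed inclusion.

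The main obstacle --- and the reason the statement asserts only ``$\subseteq$'' rather than equality --- is precisely the treatment of these cross terms. Replacing each $\rho_l \beta_j$ by an independent $\gamma_{lj}$ discards two kinds of coupling: the sharing of a single factor $\rho_l$ across all indices $j$, and, more importantly, the sharing of $\rho_l$ with the dependent part $As_0$, where it remains a tracked factor through $\widehat{id}$. This decoupling is exactly what turns the exact image into an over-approximation; its validity rests only on the elementary fact that $[-1,1]\cdot[-1,1]\subseteq[-1,1]$, so no coefficient ever leaves the admissible range. I would finally remark that an \emph{exact} computation is available by the conversion noted before the proposition (promoting each $\beta_j$ to a dependent factor with a fresh identifier and expanding $E$), and that the enclosure above is adopted for efficiency under the assumption that the independent generators $G_I$ are small, which makes the discarded dependency negligible in practice. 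The remaining bookkeeping --- that the $A^{(0)}G_I$ block may retain the original $\beta_j$ while the other blocks use fresh factors --- is routine and does not affect the inclusion.
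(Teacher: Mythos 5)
Your proof is correct and follows essentially the same route as the paper: decompose $\mathcal{PZ}$ into its dependent part $\mathcal{PZ}_D$ and the zero-centered zonotope of independent generators, handle the dependent part exactly via Prop.~\ref{prop:mtimes}, and over-approximate the interaction of the matrix uncertainty with the independent generators by absorbing each product $\rho_l\beta_j \in [-1,1]$ into a fresh independent factor, which is precisely the dependency loss that makes the result an enclosure rather than an equality. The only difference is presentational: the paper phrases this as the set-level inclusion $\bm{\mathcal{A}}\,\mathcal{PZ}\subseteq(\bm{\mathcal{A}}\,\mathcal{PZ}_D)\oplus(\bm{\mathcal{A}}\,\mathcal{Z}_I)$ and delegates the zonotope part to the cited result \cite[Prop.~4]{Althoff2011a}, whereas you expand that multiplication inline element-wise, making the argument self-contained but not different in substance.
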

\begin{proof}
The independent generators of $\mathcal{PZ}$ can be viewed as a zero-centered zonotope,
so $\mathcal{PZ} = \mathcal{PZ}_D \oplus \mathcal{Z}_I$ is given by the Minkowski sum between the dependent part $\mathcal{PZ}_D$ = $\langle c, G, [~], E, \\id_{pz}\rangle_{PZ}$ and the independent part $\mathcal{Z}_I = \langle \boldsymbol{0}_{n \times 1}, G_I \rangle_{Z}$. Therefore, the multiplication can be over-approximated by:
\begin{equation*}
 \bm{\mathcal{A}}~ \mathcal{PZ}\subseteq(\bm{\mathcal{A}}~ \mathcal{PZ}_D) \oplus (\bm{\mathcal{A}}~ \mathcal{Z}_I),   
\end{equation*}
where the result for the multiplication between a matrix zonotope with a zonotope is given by \cite[Prop.~4]{Althoff2011a}.
\end{proof}
The core operation for reachability analysis is the linear map with the matrix exponential. In particular, we require the multiplication with the matrix exponential of two matrix sets $\bm{\mathcal{A}}$ and $\bm{\mathcal{B}}$:
\begin{proposition}
(Multiplication with Matrix Exponential) Given two matrix zonotopes $\bm{\mathcal{A}}$, $\bm{\mathcal{B}} \subset \mathbb{R}^{n \times n}$, a polynomial zonotope $\mathcal{PZ} \subset \mathbb{R}^{n}$, and a Taylor order $\kappa$ satisfying $\epsilon=\frac{||\bm{\mathcal{A}}||_{\infty}||\bm{\mathcal{B}}||_{\infty}}{\kappa+2}<1$, the set    $e^{\bm{\mathcal{A}}\bm{\mathcal{B}}}\mathcal{PZ}$ can be tightly enclosed by
\label{prop:raise_to_e}
\begin{equation*}
    \begin{split}
    \label{op:over-approx exp}
        e^{\bm{\mathcal{A}}\bm{\mathcal{B}}}~\mathcal{PZ}\subseteq \bigg(\dsum_{i=0}^{\kappa}\frac{\bm{\mathcal{A}}^i\bm{\mathcal{B}}^i}{i!}~\mathcal{PZ}\bigg)\oplus \Big(\bm{\mathcal{E}}~ {\normalfont \textproc{zonotope}}(\mathcal{PZ})\Big),
    \end{split}
\end{equation*}
where the interval matrix
\begin{equation*}
    \begin{split}
        \label{def:taylorremainder}
        &\bm{\mathcal{E}} = \langle-\boldsymbol{1}_{n \times n}, \boldsymbol{1}_{n \times n}\rangle_{IM}\frac{(||\bm{\mathcal{A}}||_{\infty}||\bm{\mathcal{B}}||_{\infty})^{\kappa+1}}{(\kappa+1)!}\frac{1}{1-\epsilon}
    \end{split}
\end{equation*}
encloses the remaining terms of the truncated Taylor series. The sets $\frac{\bm{\mathcal{A}}^i\bm{\mathcal{B}}^i}{i!}~\mathcal{PZ}$ are calculated according to Corollary $\ref{corollary: higher order multiplication}$, %Using exact sum allows us to keep the dependencies on $\bm{\mathcal{A}}$ and $\bm{\mathcal{B}}$ over the addition of the sets. 
and the multiplication of an interval matrix with a zonotope using \cite[Thm.~4]{Althoff2007c}.
\end{proposition}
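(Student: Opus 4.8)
The plan is to reduce the statement to the classical Taylor remainder estimate for the matrix exponential, lifted from single matrices to the matrix sets and then propagated through the polynomial zonotope. First I would fix arbitrary $A \in \bm{\mathcal{A}}$, $B \in \bm{\mathcal{B}}$ and $s \in \mathcal{PZ}$ and write the convergent series $e^{AB} = \sum_{i=0}^{\infty} \frac{(AB)^i}{i!}$, splitting it into the truncated part $T(A,B) = \sum_{i=0}^{\kappa} \frac{(AB)^i}{i!}$ and the tail $R(A,B) = \sum_{i=\kappa+1}^{\infty}\frac{(AB)^i}{i!}$, so that $e^{AB}s = T(A,B)\,s + R(A,B)\,s$. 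The whole argument then consists of (i) enclosing the collection of truncated terms by the first summand and (ii) enclosing the tail by $\bm{\mathcal{E}}\,\textproc{zonotope}(\mathcal{PZ})$, after which the two enclosures are glued together with a Minkowski sum.

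For the truncated part, each power $(AB)^i$ is obtained by applying the two matrix sets alternately, so that $\frac{(AB)^i}{i!}\mathcal{PZ}$ is computable by the repeated matrix-zonotope multiplication of Corollary~\ref{corollary: higher order multiplication} together with Prop.~\ref{prop:mtimes}; here one relies on the identification $(AB)^i = A^iB^i$ (which holds under the commuting structure arranged in the construction), and this is exactly what lets the $i$-th term be written as $\frac{\bm{\mathcal{A}}^i\bm{\mathcal{B}}^i}{i!}\mathcal{PZ}$. Because the same factors of $A$, $B$ and $s$ are shared across all $\kappa+1$ terms, the exact sum $\boxplus$ (hence the operator $\dsum$) preserves these dependencies and yields a valid enclosure of $\{T(A,B)\,s\}$; dropping the dependency between the truncated part and the remainder through the final $\oplus$ is sound and only introduces over-approximation, in the spirit of Example~\ref{example: compare minkowski sum and direct sum}.

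The crux is the tail bound. Writing $\mu = \|\bm{\mathcal{A}}\|_\infty\|\bm{\mathcal{B}}\|_\infty$ and using submultiplicativity of the infinity norm together with the set-norm definition, I would show $\|(AB)^i\|_\infty \le \|A\|_\infty^i\|B\|_\infty^i \le \mu^i$ uniformly in $A,B$, hence $\|R(A,B)\|_\infty \le \sum_{i=\kappa+1}^{\infty}\frac{\mu^i}{i!}$. Re-indexing by $j=i-(\kappa+1)$ and bounding the factorial ratio by $\frac{(\kappa+1)!}{(\kappa+1+j)!}\le (\kappa+2)^{-j}$, the tail collapses to a geometric series with ratio $\epsilon=\frac{\mu}{\kappa+2}<1$, giving $\|R(A,B)\|_\infty \le \frac{\mu^{\kappa+1}}{(\kappa+1)!}\frac{1}{1-\epsilon}=:r$. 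Since an infinity-norm bound of $r$ forces every entry of $R(A,B)$ to lie in $[-r,r]$, it follows that $R(A,B)\in\bm{\mathcal{E}}$ for every admissible pair, so $R(A,B)\,s \in \bm{\mathcal{E}}\,s \subseteq \bm{\mathcal{E}}\,\textproc{zonotope}(\mathcal{PZ})$, with the interval-matrix/zonotope product evaluated via \citep{Althoff2007c}.

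I expect the convergence estimate to be the main obstacle: one must justify the uniform norm bound over the matrix sets, carry out the factorial-to-geometric reduction (which is precisely where the hypothesis $\epsilon<1$ is consumed), and argue that a bound on $\|\cdot\|_\infty$ legitimately passes to the entrywise interval matrix $\bm{\mathcal{E}}$. A secondary subtlety worth stating explicitly is the identity $(AB)^i = A^iB^i$ required to rewrite the truncated terms as $\frac{\bm{\mathcal{A}}^i\bm{\mathcal{B}}^i}{i!}\mathcal{PZ}$, together with the careful bookkeeping of dependent factors so that the exact sum is genuinely exact on the finite part while the closing Minkowski sum remains a sound over-approximation of $e^{\bm{\mathcal{A}}\bm{\mathcal{B}}}\mathcal{PZ}$.
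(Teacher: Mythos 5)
Your proposal is correct and follows essentially the same route as the paper's proof: split the Taylor series at order $\kappa$, keep the truncated part via the exact sum $\dsum_{i=0}^{\kappa}\frac{\bm{\mathcal{A}}^i\bm{\mathcal{B}}^i}{i!}\,\mathcal{PZ}$, and absorb the tail into $\bm{\mathcal{E}}\,\textproc{zonotope}(\mathcal{PZ})$ through a closing Minkowski sum. The only difference is that you prove the tail estimate yourself (uniform submultiplicative bound, factorial-to-geometric reduction consuming $\epsilon<1$, and the entrywise passage from the $\|\cdot\|_\infty$ bound to $\bm{\mathcal{E}}$), whereas the paper delegates exactly this remainder enclosure to a citation of \cite[Thm.~3.2]{Althoff2010a}; your explicit remark that $(AB)^i=A^iB^i$ needs the commuting structure (in the paper's use case, $\bm{\mathcal{B}}=\bm{\mathcal{T}}$ consists of scalar multiples of $I_n$) addresses a point the paper sidesteps by writing the series with $A^iB^i$ from the outset.
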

\begin{proof}
The matrix exponential is defined by its Taylor series:
\begin{equation*}
    \begin{split}
        & e^{\bm{\mathcal{A}}\bm{\mathcal{B}}}~\mathcal{PZ}
        =\bigg\{\bigg (\sum_{i=0}^{\infty}\frac{A^iB^i}{i!} \bigg )\,s~\bigg|~A \in \bm{\mathcal{A}}, B \in \bm{\mathcal{B}}, s\in \mathcal{PZ}\bigg\} \overset{\text{\cite[Thm.~3.2]{Althoff2010a}}}{=}\\[5pt]
        &\bigg\{\bigg(\sum_{i=0}^{\kappa}\frac{A^iB^i}{i!}\bigg)\,s + \underbrace{\bigg( \sum_{i=\kappa+1}^{\infty}\frac{A^iB^i}{i!}\bigg)}_{\subseteq \bm{\mathcal{E}}}s~\bigg|~A \in \bm{\mathcal{A}}, B \in \bm{\mathcal{B}}, s\in \mathcal{PZ}\bigg\} \subseteq \\
        &\bigg\{\bigg(\sum_{i=0}^{\kappa}\frac{A^iB^i}{i!}\bigg)\,s~\bigg|~A \in\hspace{-2pt} \bm{\mathcal{A}}, B \in\hspace{-2pt} \bm{\mathcal{B}}, s \in\hspace{-2pt} \mathcal{PZ}\bigg\} \oplus \big\{E\,s~\big|~E \in \bm{\mathcal{E}},~ s\in \mathcal{PZ} \big\}\\[5pt]
        &\subseteq \bigg(\dsum_{i=0}^{\kappa}\frac{\bm{\mathcal{A}}^{i}\bm{\mathcal{B}}^i}{i!}\mathcal{PZ}\bigg)\oplus \big(\bm{\mathcal{E}}~ \textproc{zonotope}(\mathcal{PZ})\big).
    \end{split}
\end{equation*}
Since the remainder interval matrix $\bm{\mathcal{E}}$ is expected to be small for a sufficiently large Taylor order, we add the corresponding vectors to the independent generators by applying a zonotope enclosure. %add its multiplication result with the zonotope enclosure of $\mathcal{PZ}$ to the independent generators which is equivalent to perform the Minkowski addition. 
%Regarding the remainder calculation, although we have the option to use a matrix zonotope with $n \times n$ generator matrices and multiply it to $\mathcal{PZ}$ to have a more precise representation. As long as proper selection of Taylor order $\kappa$, the remainder will be significantly small. Therefore, for computational benefits, we can over-approximate $\mathcal{PZ}$ as a zonotope, utilize interval matrices to enclose the remainder matrix, perform the multiplication between them by $\cite[Theorem  3.3]{Althoff2010a}$, then added the result zonotope to the independent generators of the finite part based on formula $\eqref{op:minkowski sum with PZ}$. 
\end{proof}
% -------------------------------------------------
% REACHABILITY ANALYSIS 
% -------------------------------------------------

\vspace{\secminus}
\section{Reachability Analysis}
\label{sec:ReachabilityAnalysis}
We now explain how to compute reachable sets for linear systems with uncertain parameters (case II. in Fig.~\ref{fig:dubinsCarSchematic}) using the operations on polynomial zonotopes derived in the previous section.

\vspace{\secminus}
\subsection{Problem Definition}
We consider a linear system with uncertain parameters
\begin{equation}
    \label{def:probdef}
    \dot x(t) = A \, x(t) + B \, u(t),\quad A \in \bm{\mathcal{A}}, B \in \bm{\mathcal{B}}~,
\end{equation}
where $\bm{\mathcal{A}} \subset \mathbb{R}^{n\times n}$, $\bm{\mathcal{B}}\subset\mathbb{R}^{n\times m}$ are matrix zonotopes. The initial state $x(0)$ is uncertain within $\mathcal{X}_0 \subset \mathbb{R}^n$, and the input $u(t)$ is uncertain within $\mathcal{U} \subset \mathbb{R}^{m}$. The reachable set is defined as follows: %Let's define the $\mathcal{V} = \bm{\mathcal{B}}~\mathcal{U} \subset \mathbb{R}^n$ for later usage. 
%Time t sits in the range of $[0, t_{end}]$. 
\begin{definition}
(Reachable Set) Given an initial set $\mathcal{X}_0$, an input set $\mathcal{U}$, and matrix zonotopes $\bm{\mathcal{A}}$ and $\bm{\mathcal{B}}$, the reachable set for the system in \eqref{def:probdef} at time $t \ge 0$ is
\begin{equation*}
   % \begin{split}
    \mathcal{R}(t) = \big \{\xi(A,B,t, x(0),u(\cdot))~ \big |~ 
    %&
    A\in \bm{\mathcal{A}},~ B \in \bm{\mathcal{B}}, ~ x(0)\in \mathcal{X}_0,
    %\\
    %& \hspace{13pt} 
    \forall s\in[0, t]: ~ u(\hspace{1pt}s\hspace{1pt}) \in \mathcal{U} \big \},
   % \end{split}
\end{equation*}
where $\xi(A, B, t, x(0),u(\cdot))$ denotes the solution of \eqref{def:probdef} at time t for numerical matrices $A$ and $B$, initial state $x(0)$, and input signal $u(\cdot)$.
\end{definition}
The goal is to calculate the reachable set for a finite time interval $[0, t_{end}]$, where time starts at $0$ without loss of generality. Since the exact reachable set cannot be computed in general, we instead aim to calculate a tight enclosure of the reachable set.
%The exact reachable set cannot be computed in general. Therefore, we aim to calculate a tight enclosure of the reachable set for a finite time interval $[0, t_{end}]$ instead, where time starts at 0 without loss of generality. 
As commonly done in reachability analysis, we compute the reachable set for consecutive time intervals $\tau_k = [k\Delta t, (k+1)\Delta t]$ with time step $\Delta t$, so that the reachable set for the whole time horizon is given as $\mathcal{R}([0,t_{end}]) = \bigcup_{k=0}^{(t_{end} / \Delta t) - 1} \mathcal{R}(\tau_k)$. 
According to the superposition principle, the analytical solution to $\eqref{def:probdef}$ for numerical matrices $A$ and $B$ is given by the addition of the homogeneous solution for the initial state and the particular solution due to inputs:
\begin{equation}
    \label{def:singlepointsol}
    \xi(A, B, t, x(0),u(\cdot)) = e^{At}x(0)+ \int_{0}^te^{A(t-s)}B \, u(s) \, ds.
\end{equation}
This concept also carries over to the case with uncertain parameters, so that we can first compute the homogeneous solution $\mathcal{H}(\tau_k)$ and the particular solution $\mathcal{P}(\tau_k)$ separately, and later combine them to obtain the reachable set $\mathcal{R}(\tau_k)$.

\vspace{\secminus}
\subsection{Homogeneous Solution}
\label{section:homogeneous solution}
First, we consider the homogeneous solution $\mathcal{H}(\tau_0)$ in the first time interval $\tau_0 = [0, \Delta t]$, which can be calculated as
\begin{equation*}
    \label{alg:homogeneous sol}
    \begin{split}
        \mathcal{H}(\tau_0)=\big\{e^{At}x(0)~\big|~A\in \bm{\mathcal{A}}, t \in \tau_0, x(0) \in \mathcal{X}_{0}\big\}=e^{\bm{\mathcal{A}}\bm{\mathcal{T}}}\, \mathcal{X}_0,
    \end{split}
\end{equation*}
where we represent time interval $\tau_0 =  [0, \Delta t]$ by a matrix zonotope
\begin{equation}
\label{def: time matrix zonotope}
\begin{split}
&\bm{\mathcal{T}} = [0, \Delta t] \cdot I_n= \big \langle T, T, \textproc{uniqueID}(1) \big \rangle_{MZ},\quad T = 0.5 \, \Delta t \, I_n.
    %&\bm{\mathcal{T}} = [0, \Delta t]\cdot I_n =\langle T, T, 1\rangle_{MZ},\quad     T = \frac{1}{2}\Delta t~ I_n
\end{split}
\end{equation}
We can compute a tight enclosure of the set $e^{\bm{\mathcal{A}}\bm{\mathcal{T}}}\, \mathcal{X}_0$ using Prop. $\ref{prop:raise_to_e}$. Given the homogeneous solution for the first time interval, the solutions for all remaining time intervals can be computed by propagating the initial solution forward in time via multiplication with the matrix exponential $e^{\bm{\mathcal{A}}\Delta t}:$
\begin{equation*}
    \mathcal{H}(\tau_k) = \underbrace{e^{\bm{\mathcal{A}}\Delta t}~\ldots~ e^{\bm{\mathcal{A}}\Delta t}}_{k}~ e^{\bm{\mathcal{A}}\bm{\mathcal{T}}}~ \mathcal{X}_0.
\end{equation*}
\vspace{-3pt}
Note that even if $\mathcal{X}_0$ is convex, $\mathcal{H}(\tau_k)$ is generally non-convex.

\vspace{\secminus}
\subsection{Particular Solution}
\label{section:partiuclar solution}
According to \cite[Appendix~A]{Althoff2010a}, the particular solution at time $t$ for numerical matrices $A\in \bm{\mathcal{A}}$ and $B \in \bm{\mathcal{B}}$ can be enclosed by
\begin{equation}
    %\begin{split}
    \label{def: input solution set}
       \mathcal{P}_{A,B}(t) 
       %&
       = \bigg\{\int_{0}^te^{A(t-s)}~B\,u(t)\,ds~\bigg|~ u(t) \in \mathcal{U}\bigg\}
       %\\
       %&
       \subseteq \bigoplus_{i=0}^{\kappa}\bigg(\frac{A^{i}t^{i+1}}{(i+1)!}~B\,\mathcal{U}\bigg)\oplus \big (t \, \bm{\mathcal{E}}_t \,  \textproc{zonotope}(B\,\mathcal{U})\big ),
    %\end{split}    
\end{equation}
where 
\begin{equation}
    \label{def: remainder at time t}
    \bm{\mathcal{E}}_t = \langle-\boldsymbol{1}_{n \times n}, \boldsymbol{1}_{n \times n}\rangle_{IM}\frac{(||\bm{\mathcal{A}}||_{\infty} t)^{\kappa+1}}{(\kappa+1)!}\frac{1}{1-\epsilon},~ \epsilon=\frac{||A||_{\infty} t}{\kappa+2}<1.
\end{equation}
Evaluating $\eqref{def: input solution set}$ for matrix sets $\bm{\mathcal{A}}$ and $\bm{\mathcal{B}}$ yields
\begin{equation}
\label{def: time-point particular solution}
%\begin{split}
     \mathcal{P}(t)\subseteq \bigg(\hspace{-2pt} \dsum_{i=0}^{\kappa}\hspace{-2pt}\frac{\bm{\mathcal{A}}^{i}t^{i+1}}{(i+1)!}~ \bm{\mathcal{B}}\hspace{1pt}
     %&
     \textproc{fresh}(\mathcal{U})\bigg) 
     %\\
     %& 
     \oplus \big ( t \, \bm{\mathcal{E}}_{t} \, \textproc{zonotope}(\bm{\mathcal{B}}\,\mathcal{U})\big),
%\end{split}
\end{equation}
where we use exact sum to keep the dependencies between the matrix zonotopes $\bm{\mathcal{A}}$ and $\bm{\mathcal{B}}$. At the same time, we apply the operator $\textproc{fresh}(\mathcal{U})$ to destroy the dependencies between the input sets $\mathcal{U}$, which ensures that we maintain the over-approximate nature of $\eqref{def: input solution set}$. Moreover, based on $\eqref{def: time-point particular solution}$, the particular solution for the initial time interval is given by
\begin{equation}
\label{def: time-continuous particular solution}
%\begin{split}
     \mathcal{P}(\tau_0)\subseteq \bigg( \hspace{-2pt} \dsum_{i=0}^{\kappa}\hspace{-2pt}\frac{\bm{\mathcal{A}}^{i}\bm{\mathcal{T}}^{i+1}}{(i+1)!}~ \bm{\mathcal{B}}\hspace{1pt} 
     %&
     \textproc{fresh}(\mathcal{U})\bigg)
     %\\
     %& 
     \oplus \big (\Delta t \, \bm{\mathcal{E}}_{\Delta t} \, \textproc{zonotope}(\bm{\mathcal{B}}\,\mathcal{U})\big),
%\end{split}
\end{equation}
where the matrix zonotope $\bm{\mathcal{T}}$ from $\eqref{def: time matrix zonotope}$ equivalently represents the time interval $\tau_0 = [0, \Delta t].$ Given the particular solution for the first time interval, the particular solutions for the remaining time intervals can according to \cite[Prop.~3.2]{Althoff2010a} be computed by the following propagation scheme:
\begin{equation}
\label{def: particular solution recurrence relation}
\mathcal{P}_{A, B}(\tau_k) = e^{A\Delta t}\mathcal{P}_{A, B}(\tau_{k-1}) \oplus \mathcal{P}_{A, B}(\Delta t),
\end{equation}
which considers numerical matrices $A$ and $B$.
Evaluating $\eqref{def: particular solution recurrence relation}$ for matrix sets $\bm{\mathcal{A}}$ and $\bm{\mathcal{B}}$ yields
\begin{equation*}
\begin{split}
 \mathcal{P}(\tau_k) \subseteq~&e^{\bm{\mathcal{A}}\Delta t}~\mathcal{P}(\tau_{k-1})\boxplus \textproc{fresh}\big(\mathcal{P}(\Delta t), \bm{\mathcal{A}}, \bm{\mathcal{B}}\big),
\end{split}
\end{equation*}
where we again use the exact sum to preserve the dependencies between the matrix sets $\bm{\mathcal{A}}$ and $\bm{\mathcal{B}}$, but destroy dependencies between the particular solutions at different points in time using $\textproc{fresh}$ to realize the Minkowski sum in $\eqref{def: particular solution recurrence relation}$.

\vspace{\secminus}
\subsection{Reachability Algorithm}
Combining the homogeneous solution from Sec. \ref{section:homogeneous solution} and the particular solution from Sec. \ref{section:partiuclar solution}, we obtain the reachability algorithm summarized in Alg. \ref{alg:mainReach}:
%\vspace{-5pt}
\begin{algorithm}[h!tb]
    \caption{Reachability algorithm (constant parameters)}
    \label{alg:mainReach}
    
    \begin{FlushLeft}
        \textbf{Require:} Linear parametric system $\eqref{def:probdef}$ with matrix zonotopes $\bm{\mathcal{A}}$ and $\bm{\mathcal{B}}$, initial set $\mathcal{X}_{0}$, input set $\mathcal{U}$, time horizon $t_{end}$, time step size $\Delta t$, Taylor order $\kappa$, desired zonotope order $\rho_d$
    \end{FlushLeft}
    \begin{FlushLeft}
        \textbf{Ensure:} Reachable set $\mathcal{R}([0, t_{end}])$
    \end{FlushLeft}

    \begin{algorithmic}[1]
    %\setstretch{1.2}
    \State $\bm{\mathcal{T}}\gets\eqref{def: time matrix zonotope}$
    \vspace{1.2pt}
    \State $\mathcal{H}(\tau_0) \gets e^{\bm{\mathcal{A}}\bm{\mathcal{T}}} \, \mathcal{X}_0$ \label{algo1: initizal homo} \Comment{see Prop.~\ref{prop:raise_to_e}}
    \vspace{1.2pt}
    \State $\mathcal{P}(\tau_0) \gets \eqref{def: time-continuous particular solution},~\mathcal{P}(\Delta t) \gets \eqref{def: time-point particular solution}$
    \label{algo1: initizal input}
    %\State $\widetilde{\mathcal{P}}(\tau_0)\gets \mathcal{P}(\tau_0)$, $\widehat{\mathcal{P}}(\tau_0) \gets \mathcal{P}(\Delta t)$, $\overline{\mathcal{P}}(\tau_0) \gets None$
    \vspace{1.2pt}
    \State $\mathcal{R}(\tau_0) \gets \mathcal{H}(\tau_0) \boxplus \mathcal{P}(\tau_0)$
    \label{alg1:homo add input}
    \vspace{1.2pt}
    \For{$k = 1, \ldots, (t_{end} / \Delta t)-1$}
        \vspace{1.2pt}
        \State $\mathcal{H}(\tau_k) \gets \textproc{reduce}\big(e^{\bm{\mathcal{A}}\Delta t}\mathcal{H}(\tau_{k-1}),~\rho_d \big)$
        \vspace{1.2pt}
        \State $\mathcal{P}(\Delta t)\gets \textproc{fresh}\big(\mathcal{P}(\Delta t),\bm{\mathcal{A}}, \bm{\mathcal{B}} \big) $
        \vspace{1.2pt}
        \State $\mathcal{P}(\tau_k) \gets \textproc{reduce}\big( e^{\bm{\mathcal{A}}\Delta t}\mathcal{P}(\tau_{k-1})\boxplus \mathcal{P}(\Delta t),~\rho_d \big)$
        %\State $\overline{\mathcal{P}}(\tau_{k}) \gets \overline{\mathcal{P}}(\tau_{k-1})\boxplus\widehat{\mathcal{P}}(\tau_{k-1})$
        %\State $\widehat{\mathcal{P}}(\tau_{k}) \gets \textproc{reduce}(e^{\bm{\mathcal{A}}\Delta t}\textproc{fresh}(\widehat{\mathcal{P}}(\tau_{k-1}), \bm{\mathcal{A}}, \bm{\mathcal{B}}))$
        \vspace{1.2pt}
        \State $\mathcal{R}(\tau_k) \gets \mathcal{H}(\tau_k) \boxplus \mathcal{P}(\tau_k)$
        \label{alg1: homo add input in rounds}
        \vspace{1.2pt}
    \EndFor
    \vspace{1.2pt}
    \State $\mathcal{R}([0, t_{end}]) \gets \bigcup_{k=0}^{(t_{end} / \Delta t) - 1}\mathcal{R}(\tau_k)$
    \end{algorithmic}
\end{algorithm}
%\vspace{-3pt}
%\newline 
\begin{figure}[ht]
   \centering
   \setlength{\belowcaptionskip}{-12pt}
   \includegraphics[width=0.5\columnwidth]{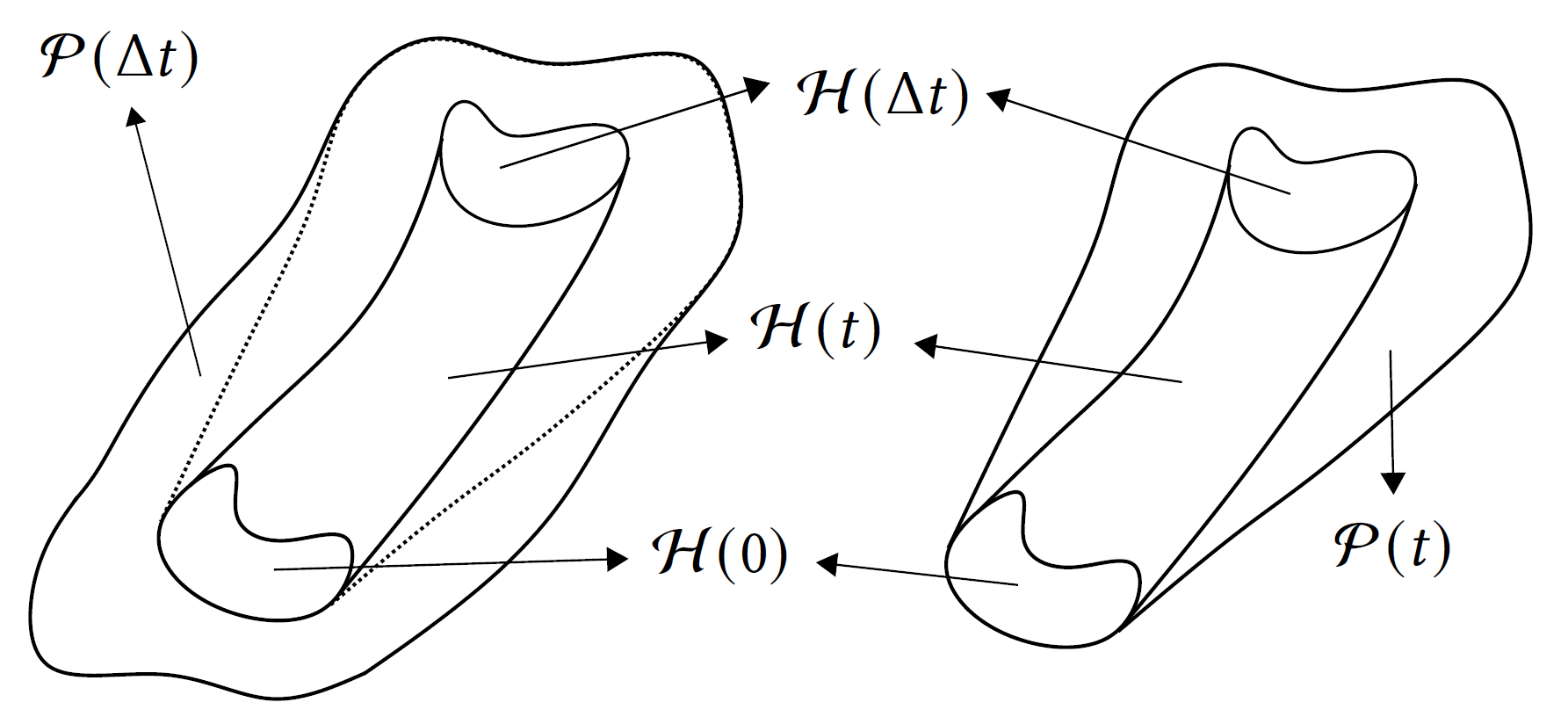}
   \vspace{-8pt}
   \caption{Comparison between combining the homogeneous solution $\mathcal{H}(\tau_0)$ and particular solution $\mathcal{P}(\tau_0)$ via Minkowski sum (left) and via exact sum (right).}
   \label{fig:homo and particualr solution direct addition}
\end{figure}

The algorithm takes advantage of dependency preservation for polynomial zonotopes and is therefore able to compute very tight non-convex enclosures of the reachable set. In Line $\ref{algo1: initizal homo}$ and $\ref{algo1: initizal input}$, the homogeneous solution and particular solution for the first time interval $\tau_0 = [0, \Delta t]$ are calculated. Since both of them preserve dependency on time, we can combine them using the exact sum in Line $\ref{alg1:homo add input}$. As visualized in Fig. $\ref{fig:homo and particualr solution direct addition}$, this results in a much tighter set compared to adding the particular solution $\mathcal{P}(\Delta t)$ at the end of the time interval to every part of the homogeneous solution, which is the technique applied by other reachability algorithms \cite{Althoff2011b,Frehse2011}. Moreover, the property of time preservation not only holds for the initial step, but also for all consecutive steps since forward-propagation via multiplication with $e^{\bm{\mathcal{A}}\Delta t}$ does not destroy dependency on time. Therefore, we can also combine the propagated homogeneous and particular solution for all remaining time intervals with the exact sum in Line $\ref{alg1: homo add input in rounds}$. Moreover, since dependencies on time are preserved, we can extract the reachable set $\mathcal{R}(t)$ for a specific time point $t \in \tau_k$ from the reachable set $\mathcal{R}(\tau_k)$ of the corresponding time interval $\tau_k$ using the \textproc{eval} operation: 
\begin{equation*}
    \forall t \in \tau_k: \quad \mathcal{R}(t) = \textproc{eval}\big ( \mathcal{R}(\tau_k), id_t, \alpha_t \big), \quad \alpha_t = 2 \, \frac{t - k\,\Delta t}{\Delta t} - 1,
\end{equation*}
where $id_t$ is the unique id of the dependent factor that represents time and $\alpha_t \in [-1,1]$ maps the time $t$ to the corresponding factor. 

Finally, to reduce the over-approximation introduced by the remainder $\bm{\mathcal{E}}$ in Prop.~\ref{prop:raise_to_e} one can evaluate additional Taylor terms with interval arithmetic \cite{Jaulin2006} before truncating the Taylor series. Moreover, one can apply restructuring as described in \cite[Sec.~III.A]{Kochdumper2019a} to move large independent generators to the dependent generators, which improves accuracy.
%In addition, the reachable set at a specific time point $t = k\Delta t + \overline{t}$, where $k = 0, \ldots, (\frac{t_{end}}{\Delta t}-1)$, $\overline{t} \in [0, \Delta t]$, can be obtained by $\textproc{eval}$ $(\mathcal{R}(\tau_k), \bm{\mathcal{T}}.id, \widehat{t}\hspace{2pt})$, where $\widehat{t} = (\overline{t} - \frac{1}{2}\Delta t) / (\frac{1}{2}\Delta t)$ is the corresponding normalized value.
%\begin{comment}
%\begin{figure}[!tb] 
%    \centering
%    \setlength{\belowcaptionskip}{-10pt}
%    \psfragfig[width=0.9\columnwidth]{./Figures/new_input_compare}{
%    \psfrag{a}[c][c]{$\mathcal{H}(\Delta t)$}
%  	\psfrag{b}[c][c]{$\mathcal{H}(t)$}
%  	\psfrag{c}[c][c]{$\mathcal{H}(0)$}
%  	\psfrag{d}[c][c]{$\mathcal{P}(\Delta t)$}
%  	\psfrag{e}[c][c]{$\mathcal{P}(t)$}
%  	\psfrag{f}[c][c]{$\mathcal{P}(\Delta t)$}
%  	}
%    \vspace{-8pt}
%    \caption{\boldmath Comparison between combining the homogeneous solution $\mathcal{H}(\tau_0)$ and particular solution $\mathcal{P}(\tau_0)$ via Minkowski sum (left) and via exact sum (right).}
%    \label{fig:homo and particualr solution direct addition}
%\end{figure}
%\end{comment}

\vspace{\secminus}
\section{Applications and Reachability Evaluation}
\label{sec:Application}
There exist several extensions and applications for the reachability algorithm from the previous section, which we present next. In particular, we first show how to extend the algorithm to linear systems with time-varying parameters (Sec. $\ref{sec:time varying parameter}$) and linear time-varying systems (Sec. $\ref{sec:time varying system}$). Next, we demonstrate the advantages our approach has for linear time-invariant systems without uncertain parameters (Sec. $\ref{sec:LTI single step}$), before we explain how it can be applied to hybrid systems (Sec. $\ref{sec:hybrid system}$) and nonlinear systems (Sec. $\ref{sec:nonlinear system}$).

\vspace{\secminus}
\subsection{Time-Varying Parameters}
\label{sec:time varying parameter}
While our basic reachability algorithm presented in Sec. $\ref{sec:ReachabilityAnalysis}$ considers systems where the parameters are constant over time, it can be easily extended to also handle time-varying parameters. 
%In the case, given a matrix set $\bm{\mathcal{A}}$, we only have the knowledge that $\forall t \in [0, t_{end}],~ A(t) \in \bm{\mathcal{A}}$. According to $\cite[Thm.~1, Prop. 2]{Althoff2011a}$ along with the remainder enclosure $\cite[Thm.~3.2]{Althoff2010a}$, one can compute the over-approximation for the state transition matrix $\bm{\mathcal{M}}(t)$ at some time $t$ with some truncation order $\kappa$ as:
For this, we simply have to replace the state transition matrix $e^{\bm{\mathcal{A}} t}$ for the case with constant parameters by an enclosure of state transition matrix $\bm{\mathcal{M}}(t)$ for the case with time varying parameters. According to \cite[Thm.~1]{Althoff2011a}, such an enclosure can be computed as:
\begin{equation}
\begin{split}
    &\bm{\mathcal{M}}(t) \subseteq \bigoplus_{i = 0}^{\kappa}\overline{\bm{\mathcal{M}}}_i(t)~\oplus~\bm{\mathcal{E}}_t, \quad \overline{\bm{\mathcal{M}}}_i(t) = \frac{t^i}{i!}\textproc{conv}\big(\textproc{fresh}^i(\bm{\mathcal{A}})\big),
\end{split}
\label{def:transition matrix}
\end{equation}
where $\kappa \geq 1$ is the user-defined Taylor order, the interval matrix $\bm{\mathcal{E}}_t$ is defined as in \eqref{def: remainder at time t}, and operation $\textproc{conv}(\bm{\mathcal{A}})$ represents the convex hull of a matrix set $\bm{\mathcal{A}}$ which is defined as
\begin{equation}
    \label{def:convex hull}
    \textproc{conv}(\bm{\mathcal{A}}) = \big\{\lambda \, A_1 + (1 - \lambda) \, A_2~\big|~ A_1, A_2 \in \bm{\mathcal{A}}, ~ \lambda \in [0, 1] \big\}.
\end{equation}
Moreover, the notation 
\begin{equation*}
\begin{split}
    \textproc{fresh}^i(\bm{\mathcal{A}}) = \underbrace{\textproc{fresh}(\bm{\mathcal{A}}) ~ \ldots ~ \textproc{fresh}(\bm{\mathcal{A}})}_{i}
    %& \overline{\bm{\mathcal{M}}}_i(t) = \frac{t^i}{i!}\textproc{conv}\bigg(\prod^i \textproc{fresh}(\bm{\mathcal{A}})\bigg),
    %\overline{\bm{\mathcal{M}}}_i(t), \quad \overline{\bm{\mathcal{M}}}_i(t) = \frac{t^i}{i!}\textproc{conv}(\prod^i \textproc{fresh}(\mathcal{A})),
\end{split}
\end{equation*}
emphasizes that each matrix zonotope is equipped with different unique identifiers before the multiplication\footnote{Note that the notation $\textproc{fresh}^i(\bm{\mathcal{A}})$ is different from $(\textproc{fresh}(\bm{\mathcal{A}}))^i$, where the latter one represents $\bm{\mathcal{B}}^i$ with $\bm{\mathcal{B}} = \textproc{fresh}(\bm{\mathcal{A}})$. \label{footnote2}}. 
For our reachability algorithm, we require the following proposition:
\begin{proposition}
\label{prop: convex hull computation}
(Multiplication with Convex Hull) Given a matrix zonotope $\bm{\mathcal{A}} \subset \mathbb{R}^{n \times n}$ and a polynomial zonotope $\mathcal{PZ}\subset \mathbb{R}^{n}$, the multiplication between ${\normalfont \textproc{conv}}(\bm{\mathcal{A}})$ and $\mathcal{PZ}$ can be computed as
\begin{equation*}
\begin{split}
    {\normalfont \textproc{conv}}\big(\bm{\mathcal{A}} \big) \, \mathcal{PZ} =
    \bm{\mathcal{A}} ~ \bm{\mathcal{L}}_1 \, \mathcal{PZ}~\boxplus ~{\normalfont \textproc{fresh}}(\bm{\mathcal{A}}) ~ \bm{\mathcal{L}}_2 ~ \mathcal{PZ}
    %&\underbrace{\textproc{fresh}(\bm{\mathcal{A}}) \hfill.\hfill.\hfill.\hfill \textproc{fresh}(\bm{\mathcal{A}})}_{i} \mathbf{\Lambda}_i \mathcal{PZ}~\boxplus \\
    %&\underbrace{\textproc{fresh}(\bm{\mathcal{A}}) \hfill.\hfill.\hfill.\hfill \textproc{fresh}(\bm{\mathcal{A}})}_{i} \widetilde{\mathbf{\Lambda}}_i\mathcal{PZ},
\end{split}    
\end{equation*}
with
\begin{equation}
    \label{def: lambda matrix zonotope}
    \bm{\mathcal{L}}_1 =  \big\langle 0.5 \, I_n, 0.5 \, I_n, {\normalfont\textproc{uniqueID}}(1)\big \rangle_{MZ},~~\bm{\mathcal{L}}_2 = I_n - \bm{\mathcal{L}}_1
\end{equation}
which can be evaluated using Prop. ~\ref{prop:mtimes}.
\end{proposition}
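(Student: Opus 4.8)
The plan is to prove the identity at the level of sets by tracking how each dependent factor propagates through the two matrix-zonotope multiplications and the exact sum, and then matching the result against the definition of $\textproc{conv}$ in \eqref{def:convex hull}. First I would unpack $\bm{\mathcal{L}}_1$ and $\bm{\mathcal{L}}_2$. By Def.~\ref{def:matrixZonotope}, the single-generator matrix zonotope $\bm{\mathcal{L}}_1$ consists of the matrices $0.5(1+\rho)\,I_n$ for $\rho \in [-1,1]$; the substitution $\lambda = 0.5(1+\rho)$ is a bijection of $[-1,1]$ onto $[0,1]$, so $\bm{\mathcal{L}}_1 = \{\lambda I_n \mid \lambda \in [0,1]\}$. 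Since $\bm{\mathcal{L}}_2 = I_n - \bm{\mathcal{L}}_1$ reuses the \emph{same} identifier, for each value of $\rho$ (equivalently $\lambda$) it evaluates to $(1-\lambda)\,I_n$, so $\bm{\mathcal{L}}_1$ and $\bm{\mathcal{L}}_2$ remain coupled through this one shared factor and jointly realise the pair $(\lambda I_n, (1-\lambda)I_n)$ with a single common $\lambda$.

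Next I would evaluate the two summands using Prop.~\ref{prop:mtimes} and the set definition \eqref{def:matrix set multiply a set of states}. The first is $\bm{\mathcal{A}}\,\bm{\mathcal{L}}_1\,\mathcal{PZ} = \{\lambda A_1 s \mid A_1 \in \bm{\mathcal{A}}, \lambda \in [0,1], s \in \mathcal{PZ}\}$, carrying the identifiers of $\bm{\mathcal{A}}$, the factor of $\bm{\mathcal{L}}_1$, and the factors of $\mathcal{PZ}$. The second is $\textproc{fresh}(\bm{\mathcal{A}})\,\bm{\mathcal{L}}_2\,\mathcal{PZ} = \{(1-\lambda) A_2 s \mid A_2 \in \bm{\mathcal{A}}, \lambda \in [0,1], s \in \mathcal{PZ}\}$, where $\textproc{fresh}$ has replaced the identifiers of $\bm{\mathcal{A}}$ by new ones, while the factor of $\bm{\mathcal{L}}_2$ and the factors of $\mathcal{PZ}$ keep their original identifiers.

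The decisive step is the exact sum \eqref{op:exact sum with PZ}, which by construction preserves exactly the shared identifiers. Here these are the $\bm{\mathcal{L}}$-factor (forcing one common $\lambda$, hence a genuine convex weighting) and the factors of $\mathcal{PZ}$ (forcing a single common $s$ to multiply both terms), whereas the $\bm{\mathcal{A}}$-factors differ thanks to $\textproc{fresh}$, so $A_1$ and $A_2$ range independently over $\bm{\mathcal{A}}$. The exact sum therefore equals $\{\lambda A_1 s + (1-\lambda) A_2 s \mid A_1, A_2 \in \bm{\mathcal{A}},\, \lambda \in [0,1],\, s \in \mathcal{PZ}\} = \{(\lambda A_1 + (1-\lambda) A_2)\,s \mid \ldots\}$, which is exactly $\{M s \mid M \in \textproc{conv}(\bm{\mathcal{A}}), s \in \mathcal{PZ}\} = \textproc{conv}(\bm{\mathcal{A}})\,\mathcal{PZ}$ by \eqref{def:convex hull} and \eqref{def:matrix set multiply a set of states}.

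I expect the main obstacle to be the dependency bookkeeping through the exact sum rather than the algebra: one must verify both inclusions, namely that the shared $\bm{\mathcal{L}}$-identifier genuinely couples $\lambda$ with $1-\lambda$ so that no spurious pairs arise, and that $\textproc{fresh}$ genuinely decouples the two copies of $\bm{\mathcal{A}}$ so that every combination $\lambda A_1 + (1-\lambda) A_2$ is attained. A secondary point is exactness: Prop.~\ref{prop:mtimes} is exact only without independent generators, so for a general $\mathcal{PZ}$ I would first invoke that any polynomial zonotope admits an equivalent representation without independent generators, after which every operation above is exact and the claimed set identity holds with equality.
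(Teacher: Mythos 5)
Your proposal is correct and takes essentially the same route as the paper's proof, which likewise establishes $\textproc{conv}(\bm{\mathcal{A}}) = \bm{\mathcal{A}}\,\bm{\mathcal{L}}_1 \boxplus \textproc{fresh}(\bm{\mathcal{A}})\,\bm{\mathcal{L}}_2$ by observing that $\bm{\mathcal{L}}_1$ represents the domain $\lambda \in [0,1]$, that $\textproc{fresh}$ decouples the two copies of $\bm{\mathcal{A}}$ so $A_1,A_2$ range independently, and that the exact sum preserves the coupling between $\bm{\mathcal{L}}_1$ and $\bm{\mathcal{L}}_2$. If anything, your write-up is more explicit than the paper's on two points it leaves implicit: that the shared identifiers of $\mathcal{PZ}$ force a single common $s$ in both summands of the exact sum, and that exactness for general $\mathcal{PZ}$ requires first rewriting it without independent generators before invoking Prop.~\ref{prop:mtimes}.
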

\begin{proof}
According to the definition of the convex hull of a matrix set in \eqref{def:convex hull} we have
\begin{equation*}
    \begin{split}
        \textproc{conv}(\bm{\mathcal{A}}) & = \big\{\lambda \, A_1 + (1 - \lambda) \, A_2~\big|~ A_1, A_2 \in \bm{\mathcal{A}}, ~ \lambda \in [0, 1] \big\} \\
        & = \bm{\mathcal{A}} ~ \bm{\mathcal{L}}_1~\boxplus ~\textproc{fresh}(\bm{\mathcal{A}}) \, \big(I_n - \bm{\mathcal{L}}_1 \big),
    \end{split}
\end{equation*}
where the matrix zonotope $\bm{\mathcal{L}}_1$ defined as in \eqref{def: lambda matrix zonotope} represents the domain $\lambda \in [0,1]$. While we need to apply the \textproc{fresh} operation to $\bm{\mathcal{A}}$ to ensure that we realize two independent matrices $A_1,A_2 \in \bm{\mathcal{A}}$ as demanded by the definition in \eqref{def:convex hull}, we use the exact sum to keep the dependencies between $\bm{\mathcal{L}}_1$ and $\bm{\mathcal{L}}_2$.
\end{proof}

Combining \eqref{def:transition matrix}, Prop. $\ref{prop: convex hull computation}$, and the concept of preserving dependency on time used in Sec. $\ref{sec:ReachabilityAnalysis}$, the homogeneous solution for the first time interval can be calculated as follows:
\begin{align}
     \mathcal{H}(\tau_0) &= \bm{\mathcal{M}}(\tau_0) \, \mathcal{X}_0\\ &\subseteq \big(\bm{\mathcal{E}}_{\Delta t}\textproc{zonotope}(\mathcal{X}_0) \big) ~ \oplus  \bigg( \dsum_{i = 0}^{\kappa}  \frac{1}{i!}\textproc{fresh}^i(\bm{\mathcal{A}}) \, \bm{\mathcal{L}}_1 \, \bm{\mathcal{T}}^i ~ \mathcal{X}_0 \boxplus \frac{1}{i!}\textproc{fresh}^i(\bm{\mathcal{A}}) \, \bm{\mathcal{L}}_2 \, \bm{\mathcal{T}}^i ~ \mathcal{X}_0 \bigg),
\label{comp:time interval transition formula}
\end{align}
where $\bm{\mathcal{L}}_1$ and $\bm{\mathcal{L}}_2$ are given by $\eqref{def: lambda matrix zonotope}$, $\bm{\mathcal{T}}$ is given by $\eqref{def: time matrix zonotope}$, and $\bm{\mathcal{E}}_{\Delta t}$ is given by \eqref{def: remainder at time t}.
Based on \cite[Thm. 2]{Althoff2011a}, we can similarly enclose the particular solution at time $t$ by:
\begin{equation}
\begin{split}
\label{def:time varying input}
& \mathcal{P}(t) \subseteq \bigg( \dsum_{i=0}^{\kappa} \frac{t^{i+1}}{(i+1)!}\textproc{fresh}^i(\bm{\mathcal{A}}) \, \bm{\mathcal{L}}_1 \, \textproc{fresh}(\bm{\mathcal{B}} \, \mathcal{U}) ~ \boxplus \\
&  \frac{t^{i+1}}{(i+1)!} \textproc{fresh}^i(\bm{\mathcal{A}}) \, \bm{\mathcal{L}}_2 \, \textproc{fresh}(\bm{\mathcal{B}} \, \mathcal{U})\bigg) \hspace{-1pt} \oplus \hspace{-1pt} \bigg ( \frac{t}{\kappa + 2} \,\bm{\mathcal{E}}_{t} \, \big \langle \widetilde{u},[~] \big\rangle_Z \bigg ),
\end{split}
\end{equation}
where $\widetilde{u} = [\widetilde{u}_1 ~ \ldots ~ \widetilde{u}_n]^T$ with $\widetilde{u}_j = \max\{|u_{(j)}|~ | ~u\in ~\bm{\mathcal{B}} \,\mathcal{U}\}$, $j = 1,\dots,n$. Moreover, based on \eqref{def:time varying input} the particular solution for the first time interval is then given as
\begin{equation}
\begin{split}
\label{def:time varying input interval}
\mathcal{P}(\tau_0) \subseteq & ~\big( \Delta t /(\kappa + 2) \,\bm{\mathcal{E}}_{\Delta t} \, \langle \widetilde{u},[~] \rangle_Z \big) ~ \oplus \\ 
& ~\bigg( \dsum_{i=0}^{\kappa} \frac{1}{(i+1)!}\textproc{fresh}^i(\bm{\mathcal{A}}) \, \bm{\mathcal{T}}^{i+1} \, \bm{\mathcal{L}}_1 \, \textproc{fresh}(\bm{\mathcal{B}} \, \mathcal{U}) \\
& \hspace{20pt} \boxplus\frac{1}{(i+1)!} \textproc{fresh}^i(\bm{\mathcal{A}}) \, \bm{\mathcal{T}}^{i+1} \, \bm{\mathcal{L}}_2 \, \textproc{fresh}(\bm{\mathcal{B}} \,\mathcal{U})\bigg).
\end{split}
\end{equation}
%Slight modification of $\cite[Alg. 1]{Althoff2011a}$ can compute the reachable set with Alg. $\ref{alg:tvReach}$. In Line $\ref{alg2:propogation}$, we can compute $\bm{\mathcal{M}}(\Delta t)\mathcal{R}(\tau_{k-1})$ by $\eqref{comp:time interval transition formula}$ simply replacing the matrix zonotope $\bm{\mathcal{T}}^i$ with a constant $\Delta t^i$. Also, since $\mathcal{P}(\Delta t)$ has destroyed the dependency over time in Line $\ref{alg2:prepare inputs}$, and there is no dependency need to be preserve during the propagation, the Minkowski sum in Line $\ref{alg2:propogation}$ is sufficient. 
Finally, a tight enclosure of the reachable set can be computed using Alg.~\ref{alg:tvReach}, which is a modification of \cite[Alg. 1]{Althoff2011a}.
Note that since we have to destroy dependencies on $\bm{\mathcal{A}}$ and $\bm{\mathcal{B}}$ to ensure correctness, we can use a simpler propagation scheme compared to Alg.~\ref{alg:mainReach}.

\begin{algorithm}[h!tb]
    \caption{Reachability algorithm (time-varying parameters)}
    \label{alg:tvReach}
    \begin{FlushLeft}
        \textbf{Require:} Linear system $\eqref{def:probdef}$ with time-varying parameters represented by the matrix zonotopes $\bm{\mathcal{A}}$ and $\bm{\mathcal{B}}$, initial set $\mathcal{X}_0$, input set $\mathcal{U}$, time horizon $t_{end}$, time step size $\Delta t$, Taylor order $\kappa$, desired zonotope order $\rho_d$
    \end{FlushLeft}
    \begin{FlushLeft}
        \textbf{Ensure:} Reachable set $\mathcal{R}([0, t_{end}])$
    \end{FlushLeft}

    \begin{algorithmic}[1]
    %\setstretch{1.2}
        \State $\mathcal{H}(\tau_0) \gets \bm{\mathcal{M}}(\tau_0)~\mathcal{X}_0$ \Comment{see \eqref{comp:time interval transition formula}}
        \vspace{1.2pt}
        \State $\mathcal{P}(\tau_0) \gets \eqref{def:time varying input interval}$, $~\mathcal{P}(\Delta t) \gets \eqref{def:time varying input}$
        \label{alg2:prepare inputs}
        \vspace{1.2pt}
        \State $\mathcal{R}(\tau_0) \gets \mathcal{H}(\tau_0) \boxplus \mathcal{P}(\tau_0)$
        \For{$k = 1, \ldots, (t_{end} / \Delta t)-1$}
        \vspace{1.2pt}
        \State $\mathcal{R}(\tau_k) \gets \textproc{reduce}\big(\bm{\mathcal{M}}(\Delta t) \, \mathcal{R}(\tau_{k-1}) \oplus \mathcal{P}(\Delta t), \rho_d \big)$
        \label{alg2:propogation}
        %\Comment{$\bm{\mathcal{M}}(\Delta t)\mathcal{R}(\tau_{k-1})$ can be computed by $\eqref{comp:time interval transition formula}$ }
        \vspace{1.2pt}
        \EndFor
        \vspace{1.2pt}
        \State $\mathcal{R}([0, t_{end}]) \gets \bigcup_{k=0}^{(t_{end} / \Delta t) - 1}\mathcal{R}(\tau_k)$
    \end{algorithmic}
\end{algorithm}

\vspace{\secminus}
\subsection{Linear Time-Varying Systems}
\label{sec:time varying system}

Based on the reachability algorithm for linear systems with time-varying parameters presented in the previous subsection, we can also compute tight enclosures of reachable sets for time-varying linear systems. In particular, given a time-varying linear system
\begin{equation*}
    \dot x(t) = A(t) \, x(t) + B(t) \, u(t) 
\end{equation*}
with $A:~\mathbb{R}_{\geq 0} \to \mathbb{R}^{n \times n}$ and $B:~\mathbb{R}_{\geq 0} \to \mathbb{R}^{n \times m}$, we can for each reachability time step $\tau_k$ abstract it by the linear parametric system
\begin{equation*}
    \forall t \in \tau_k: ~~ \dot x(t) \in \bm{\mathcal{A}} \, x(t) + \bm{\mathcal{B}} \, u(t),
\end{equation*}
where 
\begin{equation}
    \bm{\mathcal{A}} = \big\{ A(t) ~\big|~ t \in \tau_k \big\}, ~~ \bm{\mathcal{B}} = \big\{ B(t) ~\big|~ t \in \tau_k \big\}.
    \label{eq:MatTimeVarying}
\end{equation}
To tightly enclose the matrix sets $\bm{\mathcal{A}}$ and $\bm{\mathcal{B}}$ in \eqref{eq:MatTimeVarying} by matrix zonotopes, range bounding using affine arithmetic \cite{deFigueiredo2004} can be applied. Finally, we can simply use Alg.~\ref{alg:tvReach} to compute the reachable set, where we update the matrices $\bm{\mathcal{A}}$ and $\bm{\mathcal{B}}$ according to \eqref{eq:MatTimeVarying}
in each time step. 
We need to use the algorithm for time-varying parameters instead of Alg.~\ref{alg:mainReach} since $A(t)$ and $B(t)$ change over time.
\begin{figure}[ht]
   \centering
   \setlength{\belowcaptionskip}{-10pt}
   \includegraphics[width=0.7\columnwidth]{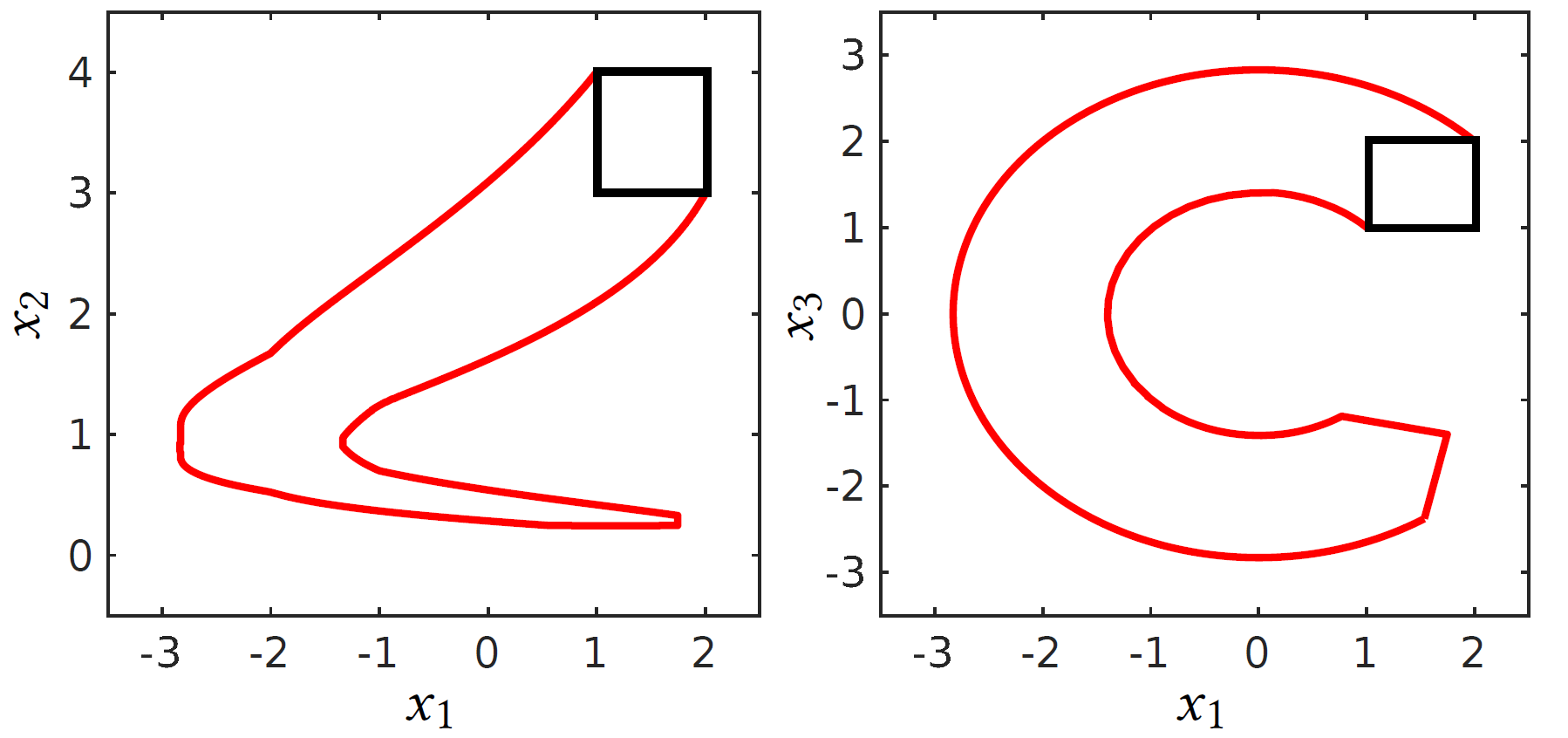}
   \vspace{-8pt}
   \caption{Reachable set for the linear system from Example~\ref{example:linear} (red), where the initial set is depicted in white with a black border.}
    \label{fig:linear}
\end{figure}
\vspace{\secminus}
\subsection{Linear Time-Invariant Systems}
\label{sec:LTI single step}
Even though reachability analysis for linear time-invariant systems $\dot x(t) = A\, x(t) + B \, u(t)$ is a well studied topic, our approach provides unique advantages for this problem. As summarized in a recent survey \cite{Forets2022}, existing methods compute the homogeneous time interval solution either via optimization \cite{Frehse2011} or by calculating the convex hull of the time-point solutions \cite{Girard2005,LeGuernic2010}, where an additional correction term is used to account for the trajectory curvature. All of these methods are only accurate if the time step size $\Delta t$ is small. With our approach, on the other hand, we can compute the homogeneous time interval solution very elegantly by simply evaluating $\mathcal{H}(\tau_0) = e^{A\bm{\mathcal{T}}} \, \mathcal{X}_0$ with $\bm{\mathcal{T}}$ from \eqref{def: time matrix zonotope} using Prop. $\ref{prop:raise_to_e}$. Moreover, the representation size required to compute the set with an arbitrary user-defined precision remains moderate even for extremely large time intervals, as we  demonstrate by the following lemma:
%Moreover, the resulting enclosure stays very accurate even for extremely large time intervals, as we  demonstrate by the following lemma:
\begin{lemma}
    Fix an arbitrary error bound \textcolor{red}{$\psi > 0$}.
    % modify this to published version
    Then the number of Taylor terms $\kappa$ required to ensure that the over-approximation error for the enclosure of the set $\mathcal{H}([0,\Delta t]) = e^{A\bm{\mathcal{T}}} \, \mathcal{X}_0$ (calculated using Prop.~\ref{prop:raise_to_e}) stays below $\psi$ only grows linearly with respect to the time step size $\Delta t$. 
    \label{lemma:time}
\end{lemma}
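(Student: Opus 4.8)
The plan is to isolate the over-approximation incurred by Prop.~\ref{prop:raise_to_e} and to show that, for fixed $\psi$, it can be driven below $\psi$ with a Taylor order $\kappa$ that is linear in $\Delta t$. First I would specialize Prop.~\ref{prop:raise_to_e} to the quantity at hand by taking $\bm{\mathcal{A}} = A$ (a degenerate single-point matrix zonotope) and $\bm{\mathcal{B}} = \bm{\mathcal{T}}$ from \eqref{def: time matrix zonotope}. Since every truncated term $\frac{A^i \bm{\mathcal{T}}^i}{i!}\,\mathcal{X}_0$ is computed exactly via Corollary~\ref{corollary: higher order multiplication}, the sole source of over-approximation is the remainder set $\bm{\mathcal{E}}\,\textproc{zonotope}(\mathcal{X}_0)$. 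Writing $a = ||A||_{\infty}$ and noting that $||\bm{\mathcal{T}}||_{\infty} = \Delta t$, the remainder interval matrix satisfies $||\bm{\mathcal{E}}||_{\infty} = n\,\frac{(a\Delta t)^{\kappa+1}}{(\kappa+1)!}\,\frac{1}{1-\epsilon}$ with $\epsilon = \frac{a\Delta t}{\kappa+2}$. Bounding the error set by $||\bm{\mathcal{E}}||_{\infty}\,c_0$, where $c_0 = \max_{z \in \textproc{zonotope}(\mathcal{X}_0)}||z||_{\infty}$ is a constant depending only on $\mathcal{X}_0$, reduces the lemma to finding the least $\kappa$ with
\begin{equation*}
\frac{(a\Delta t)^{\kappa+1}}{(\kappa+1)!}\,\frac{1}{1-\epsilon} \le \frac{\psi}{n\,c_0}.
\end{equation*}

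Next I would exhibit an explicit linear choice of $\kappa$ that already meets this bound, which suffices since the minimal required order can only be smaller. Setting $x = a\Delta t$ and fixing any constant $\gamma > e$, I would take $\kappa$ so that $\kappa + 1 = \lceil \gamma x\rceil$. For this choice $\epsilon = \frac{x}{\kappa+2} \le \frac{1}{\gamma} < 1$, so the admissibility hypothesis $\epsilon < 1$ of Prop.~\ref{prop:raise_to_e} holds and the factor $\frac{1}{1-\epsilon}$ stays bounded by $\frac{\gamma}{\gamma-1}$ independently of $\Delta t$. The core estimate is then the elementary factorial bound $m! \ge (m/e)^m$, which yields
\begin{equation*}
\frac{x^{\kappa+1}}{(\kappa+1)!} \le \Big(\frac{e\,x}{\kappa+1}\Big)^{\kappa+1} \le \Big(\frac{e}{\gamma}\Big)^{\gamma x} = r^{\,x}, \qquad r = \Big(\frac{e}{\gamma}\Big)^{\gamma} < 1 .
\end{equation*}

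Combining the two estimates, the over-approximation error with this linear $\kappa$ is at most $n\,c_0\,\frac{\gamma}{\gamma-1}\,r^{\,a\Delta t}$, which tends to $0$ as $\Delta t \to \infty$. Hence there is a threshold $\Delta t_0$ beyond which $\kappa + 1 = \lceil \gamma a \Delta t \rceil$ already forces the error below $\psi$; on the bounded remaining range $\Delta t \in [0,\Delta t_0]$ a single finite order $\kappa_0$ suffices, since the error is monotone in $\Delta t$ and the worst case $\Delta t = \Delta t_0$ needs only finitely many terms. The minimal admissible order is therefore bounded above by $\max\{\kappa_0,\ \lceil \gamma a \Delta t\rceil\}$, a function linear in $\Delta t$, which proves the claim.

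I expect the main obstacle to be the factorial control: one must argue that once $\kappa$ passes the linear threshold $\sim e\,a\Delta t$ the Taylor tail decays \emph{geometrically} in $\Delta t$ rather than merely vanishing, and simultaneously confirm that $\frac{1}{1-\epsilon}$ does not blow up along this linear trajectory. Everything else is routine norm bookkeeping, namely the reduction of the error to $||\bm{\mathcal{E}}||_{\infty}\,c_0$ and the identification of $||\bm{\mathcal{T}}||_{\infty} = \Delta t$.
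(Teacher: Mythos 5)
Your proof is correct and follows essentially the same route as the paper's: both isolate the remainder set $\bm{\mathcal{E}}~\textproc{zonotope}(\mathcal{X}_0)$ as the sole source of over-approximation, use $||\bm{\mathcal{T}}||_{\infty} = \Delta t$ together with the Stirling-type bound $(\kappa+1)! \geq \big((\kappa+1)/e\big)^{\kappa+1}$, and exhibit a choice of $\kappa$ linear in $\Delta t$ that simultaneously keeps $\epsilon$ bounded away from $1$ and makes the factorial term small. The only real difference is in how the linear coefficient is allocated: the paper takes slope $e\,\|A\|_{\infty}/\zeta$ with $\zeta = \min(\psi/2, 0.5)$, which yields a one-shot bound valid uniformly for every $\Delta t$, whereas you fix a $\psi$-independent slope $\gamma > e$, observe that the error then decays geometrically as $r^{\,\|A\|_{\infty}\Delta t}$, and patch the bounded range $[0,\Delta t_0]$ with a single constant order --- a marginally sharper asymptotic statement (slope independent of $\psi$) at the cost of a two-regime argument.
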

\begin{proof}
According to Prop.~\ref{prop:raise_to_e}, the enclosure is calculated as follows:
\begin{equation}
        e^{A \bm{\mathcal{T}}}\, \mathcal{X}_0 \subseteq \underbrace{\bigg(\dsum_{i=0}^{\kappa}\frac{A^i\bm{\mathcal{T}}^i}{i!}~\mathcal{X}_0\bigg) }_{\text{exact}} \oplus \underbrace{\Big(\bm{\mathcal{E}}~ \textproc{zonotope}(\mathcal{X}_0)\Big)}_{\substack{\text{over-approximation} \\ \vspace{-4pt} \\ \text{error}}}.
\label{eq:proof1}
\end{equation}
Since the first term is evaluated exactly using the exact sum, the over-approximation error is given by the second term. 
Note that $||\bm{\mathcal{T}}||_{\infty} = \Delta t$ holds since $\bm{\mathcal{T}} = [0, \Delta t] \cdot I_n$ according to \eqref{def: time matrix zonotope}. Therefore,
 the interval matrix $\bm{\mathcal{E}}$ is according to Prop.~\ref{prop:raise_to_e} defined as
\begin{equation}
        \bm{\mathcal{E}} = \langle-\boldsymbol{1}_{n \times n}, \boldsymbol{1}_{n \times n}\rangle_{IM} \frac{|| A ||_{\infty}^{\kappa+1} \Delta t ^{\kappa+1}}{(\kappa+1)!}  \frac{1}{1-\epsilon}  
\label{eq:proof2}
\end{equation}
with the convergence requirement for the remainder term, such that
\begin{equation*}
    \epsilon = \frac{\| A \|_{\infty} \Delta t}{\kappa + 2} < 1. \label{eq:proof3}
\end{equation*}
For any $\psi > 0 $, define $\zeta = \min(\frac{\psi}{2},0.5)$. By selecting $\kappa$ to satisfy
\begin{equation}
    \kappa > \frac{\| A \|_{\infty} \Delta t}{\zeta \frac{1}{e}} - 1, \label{eq:condition}
\end{equation}
we obtain
\begin{equation*}
    (\kappa + 1)! \overset{(a)}{>} \left(\frac{1}{e}\right)^{\kappa + 1}(\kappa + 1)^{\kappa + 1} \overset{\eqref{eq:condition}}{>}  \frac{\| A \|_{\infty}^{\kappa + 1} \Delta t^{\kappa + 1}}{\zeta^{\kappa + 1}},
\end{equation*}
where (a) follows from Stirling's approximation,
\begin{equation*}
    \frac{1}{e} \cdot (\kappa + 1) < \frac{1}{e}(2\pi (\kappa + 1))^{\frac{1}{2(\kappa + 1)}} \cdot e^{\frac{1}{12(\kappa + 1)^2 + (\kappa + 1)}} \cdot (\kappa + 1) < \sqrt[\kappa + 1]{(\kappa + 1)!}.
\end{equation*}
This leads to that 
\begin{equation*}
    \frac{\| A \|_{\infty}^{\kappa + 1} \Delta t^{\kappa + 1}}{(\kappa + 1)!} < \zeta^{\kappa + 1} < \zeta.
\end{equation*}
Additionally, considering the definition of $\epsilon$, we find
\begin{equation*}
    \epsilon = \frac{\| A \|_{\infty} \Delta t}{\kappa + 2} < \frac{\| A \|_{\infty} \Delta t}{\kappa + 1} \overset{\eqref{eq:condition}}{<}   \frac{\zeta}{e} < \frac{1}{2} < 1,
\end{equation*}
which implies that $\frac{1}{1 - \epsilon} < 2$. As a result, it leads to the desired result
\[
\frac{|| A ||_{\infty}^{\kappa+1} \Delta t ^{\kappa+1}}{(\kappa+1)!}  \frac{1}{1-\epsilon}  < \psi. 
\]

\end{proof}

Since the matrix zonotope $\bm{\mathcal{T}}$ has only a single generator according to \eqref{def: time matrix zonotope}, it holds that the enclosure for $\mathcal{H}(\tau_0) = e^{A\bm{\mathcal{T}}} \, \mathcal{X}_0$ calculated using Prop.~\ref{prop:raise_to_e} has $(\kappa + 1) h$ dependent generators if $\mathcal{X}_0$ has $h$ dependent generators. According to Lemma~\ref{lemma:time}, the number of generators required to enclose the homogeneous solution $\mathcal{H}([0,\Delta t])$ with a certain precision therefore only grows linearly with respect to the time step. 
Consequently, we can make the time step size $\Delta t$ very large (for example, the entire time bound) since the representation size grows slowly.

\begin{example}
    We consider the linear system 
    \begin{equation*}
        \begin{bmatrix} \dot x_1 \\ \dot x_2 \\ \dot x_3 \end{bmatrix} = \begin{bmatrix} 0 & 0 & -0.9 \\  0 & -0.5 & 0 \\ 0.9 & 0 & 0\end{bmatrix} \begin{bmatrix} x_1 \\ x_2 \\ x_3 \end{bmatrix}   
    \end{equation*}
    together with the initial set $\mathcal{X}_0 = [1,2] \times [3,4] \times [1,2]$. As shown in Fig.~\ref{fig:linear}, with our approach we can represent the reachable set for the entire time horizon of $t_{end} = 5$s by a single polynomial zonotope. 
    \label{example:linear}
\end{example}

%\vspace{-10pt}
\vspace{\secminus}
\subsection{Hybrid Systems}
\label{sec:hybrid system}

Hybrid systems divide the state space into several regions with different dynamic equations, where the regions are separated by so-called guard sets. 
\begin{figure}[ht]
   \centering
   \setlength{\belowcaptionskip}{-15pt}
   \includegraphics[width=0.7\columnwidth]{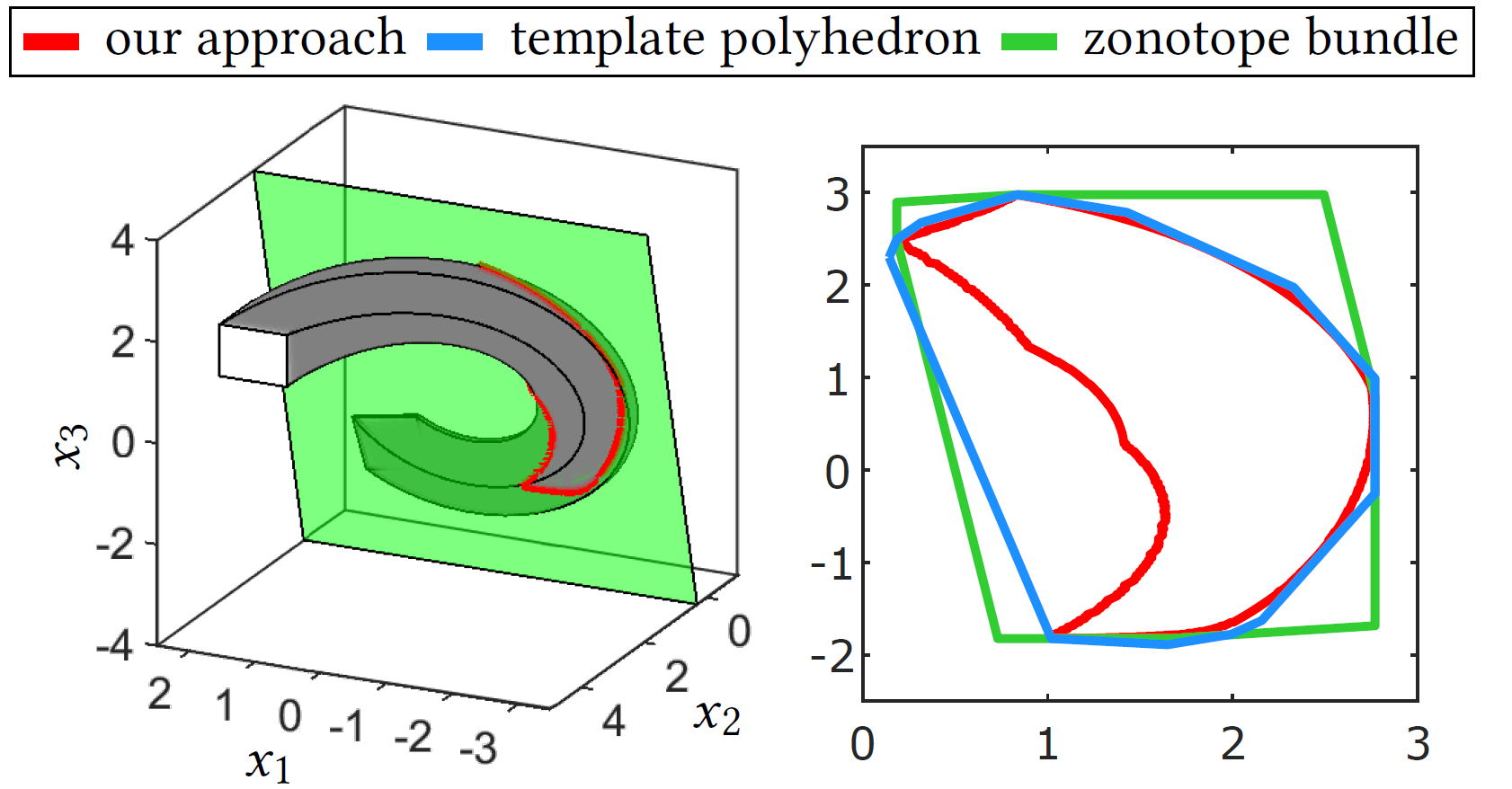}
   \vspace{-8pt}
   \caption{The left plot visualizes the intersection (red) between the reachable set of the linear system from Example~\ref{example:linear} (gray) and the guard set in (\ref{eq:guardSet}) (green), where the initial set is depicted in white with a black border. The resulting guard intersection enclosure computed with different set representations visualized on the plane defined by the guard set is shown on the right.}
    \label{fig:hybrid}
\end{figure}As a simple example for a hybrid system we consider the linear system from Example~\ref{example:linear} with the guard set 
\begin{equation}
    \mathcal{G} = \big\{ x \in \mathbb{R}^3~\big|~ x_{(2)} - 0.2 \, x_{(3)} = 1.1 \big\},
    \label{eq:guardSet}
\end{equation}
which is visualized in Fig.~\ref{fig:hybrid}. The common procedure for reachability analysis of hybrid systems is to use a reachability algorithm for continuous dynamics to compute the reachable set until it has fully crossed the guard set and is therefore located outside of the current region. The initial set for the next region is then determined by calculating the intersection between the reachable set and the guard set, before the whole procedure is repeated until the time horizon is reached. Handling those guard intersections efficiently is one of the main challenges for hybrid system reachability analysis. In particular, most reachability algorithms for continuous systems require a very small time step size $\Delta t$ to provide accurate results. Consequently, often the reachable sets for many consecutive time intervals will intersect the guard set. If the intersection between the guard set and each of these sets would be treated as a separate initial set for the next region, one would have to solve not only one but many reachability problems, which is computationally demanding. Moreover, since usually multiple consecutive guard intersections occur during the overall time horizon, the number of parallel sets would grow exponentially in this case \cite{Duggirala2019}. To avoid these issues, existing approaches for hybrid system reachability unite the sets for different time intervals by enclosing the corresponding guard intersections by convex set representations such as template polyhedra \cite{Girard2008}, zonotopes \cite{Althoff2012a}, or zonotope bundles \cite{Althoff2011f}. However, as shown in Fig.~\ref{fig:hybrid}, even for very simple linear dynamics the intersection between the reachable set and the guard set often yields a highly non-convex set, so that convex enclosures are often very conservative. 
%\begin{comment}
%\begin{figure}
%    \centering
%    \setlength{\belowcaptionskip}{-15pt}
%    \psfragfig[width=0.99\columnwidth]{./Figures/Fig6}{
%    \psfrag{x}[c][c]{$x_1$}
%  	\psfrag{y}[c][c]{$x_2$}
%  	\psfrag{z}[c][c]{\rotatebox{90}{$x_3$}}
%  	\psfrag{a}[l][c]{our approach}
%  	\psfrag{b}[l][c]{template polyhedron}
%  	\psfrag{c}[l][c]{zonotope bundle}
%  	}
%  	\vspace{-10pt}
%    \caption{The left plot visualizes the intersection (red) between the reachable set of the linear system from Example~\ref{example:linear} (gray) and the guard set in (\ref{eq:guardSet}) (green), where the initial set is depicted in white with a black border. The resulting guard intersection enclosure computed with different set representations visualized on the plane defined by the guard set is shown on the right.}
%    \label{fig:hybrid}
%\end{figure}
%\end{comment}

\begin{figure}[ht]
   \centering
   \setlength{\belowcaptionskip}{-13pt}
   \includegraphics[width=0.7\columnwidth]{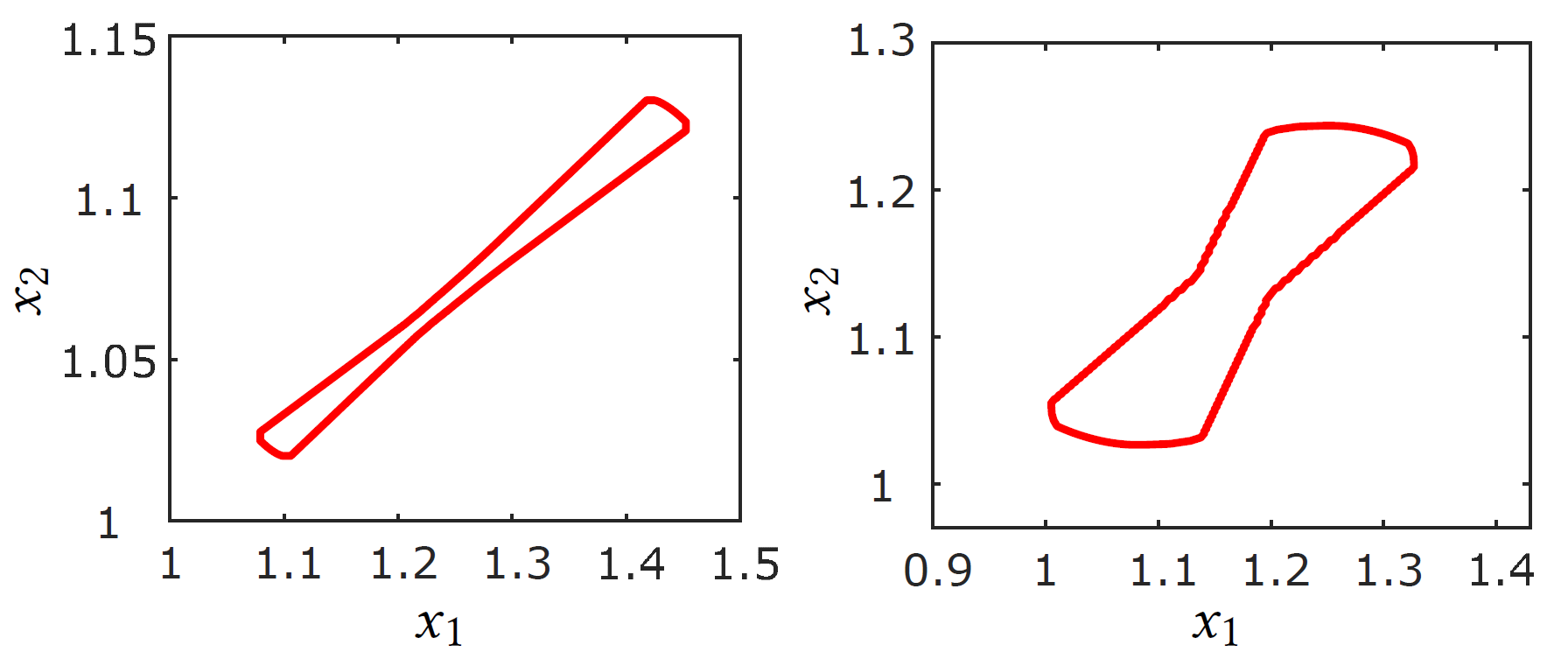}
   \vspace{-8pt}
    \caption{Reachable set enclosure for the Lotka-Volterra system in Example~\ref{ex:nonlinear} at time $t = 0.25$s (left) and time $t = 0.5$s (right).}
    \label{fig:nonlinear}
\end{figure}

As we discussed in Sec.~\ref{sec:LTI single step}, with our reachability approach we can represent the reachable set for the entire time horizon with a single set. Consequently, with our approach one can completely avoid the unification step when computing guard intersections. Moreover, by converting the polynomial zonotope that represents the reachable set to a constrained polynomial zonotope \cite{Kochdumper2020}, the intersection between the reachable set and the guard set can be computed exactly as a single set, even for nonlinear guard set such as polynomial level sets \cite[Prop.~3.2.24]{Kochdumper2022b}. Due to the similarity of polynomial zonotopes and constrained polynomial zonotopes, the reachability algorithm presented in this paper can easily be extended to compute with constrained polynomial zonotopes, so that we can simply use the resulting constrained polynomial zonotope as the initial set for the next region. 
Finally, checking if a polynomial zonotope intersects a guard set can use the computationally efficient polynomial zonotope refinement approach~\cite{Bak2022a}.
%
%\begin{comment}
%    \begin{figure}[htbp]
%    \centering
%    \begin{subfigure}{0.49\columnwidth}
%        \centering
%        \psfragfig[width=\columnwidth]{./Figures/Fig7-1}{
%        \psfrag{a}[c][c]{$x_1$}
%  	\psfrag{b}[c][c]{\rotatebox{-90}{$x_2$}}
%        }
%    \end{subfigure}
%    \hspace{0mm}
%    \begin{subfigure}{0.49\columnwidth}
%        \centering
%        \psfragfig[width=\columnwidth]{./Figures/Fig7-2}{
%        \psfrag{c}[c][c]{$x_1$}
%  	\psfrag{d}[c][c]{\rotatebox{-90}{$x_2$}}
%        }
%    \end{subfigure}
%    \caption{\boldmath Reachable set enclosure for the Lotka-Volterra system in Example~\ref{ex:nonlinear} at time $t = 0.25 \si{\second}$ (left) and time $t = 0.5 \si{\second}$ (right).}
%    \label{fig:nonlinear}
%\end{figure}
%\end{comment}
%%
%
%\begin{comment}

%\begin{figure}[t!]
%    \centering
%    \setlength{\belowcaptionskip}{-13pt}
%    \psfragfig[width=0.99\columnwidth]{./Figures/nonlinear}{
%    \psfrag{a}[c][c]{$x_1$}
%    \psfrag{b}[c][c]{\rotatebox{0}{$x_2$}}
%    \psfrag{c}[c][c]{$x_1$}
%    \psfrag{d}[c][c]{\rotatebox{0}{$x_2$}}
%    }
%    \vspace{-8pt}
%    \caption{\boldmath Reachable set enclosure for the Lotka-Volterra system in Example~\ref{ex:nonlinear} at time $t = 0.25$s (left) and time $t = 0.5$s (right).}
%    \label{fig:nonlinear}
%\end{figure} 
%\end{comment}
%
\vspace{\secminus}
\subsection{Nonlinear Systems}
\label{sec:nonlinear system}

%One common approach for reachability analysis of nonlinear systems is to linearize the system either during preprocessing or on-the-fly, and treat the linearization error as an additional uncertain input. 

%Based on our novel approach for linear systems with parametric uncertainties, we can also construct a new reachability algorithm for nonlinear systems

%Our reachability algorithm for linear systems with parametric uncertainties can also be used to construct a new approach for reachability of nonlinear systems. 

The approach presented in this paper can also be used to construct a new reachability algorithm for nonlinear systems.
This algorithm linearizes the system in each time step and captures the linearization error with uncertain parameters in order construct a guaranteed enclosure of the reachable set.
In particular, given the nonlinear system
\begin{equation*}
    \dot x(t) = f\big(x(t),u(t)\big)
\end{equation*}
with $f:~\mathbb{R}^n \times \mathbb{R}^m \to \mathbb{R}^n$, we can for each reachability time step $\tau_k$ enclose it by a linear parametric system using a first-order Taylor series expansion of the function $f$:
\begin{equation*}
    \forall t \in \tau_k: ~~ \dot x(t) \in f(x_l,u_l) + \bm{\mathcal{A}} \big( x(t) - x_l \big) + \bm{\mathcal{B}} \big( u(t) - u_l \big), ~ 
\end{equation*}
where
\begin{equation}
\begin{split}
    \bm{\mathcal{A}} = & \bigg \{ \frac{\partial f(x,u)}{\partial x} ~\bigg|~ x \in \mathcal{R}(\tau_k), u \in \mathcal{U} \bigg \}, \\
    \bm{\mathcal{B}} = & \bigg \{ \frac{\partial f(x,u)}{\partial u} ~\bigg|~ x \in \mathcal{R}(\tau_k), u \in \mathcal{U} \bigg \}.
\end{split}
\label{eq:MatNonlinear}
\end{equation}
The expansion points $x_l$ and $u_l$ can be heuristically chosen as $x_l = x_c + 0.5 \, \Delta t f(x_c,u_c)$ and $u_l = u_c$, where $x_c$ is the center of $\mathcal{R}(t_k)$ and $u_c$ is the center of $\mathcal{U}$. To tightly enclose the matrix sets $\bm{\mathcal{A}}$ and $\bm{\mathcal{B}}$ in \eqref{eq:MatNonlinear} by matrix zonotopes, we can apply range bounding using affine arithmetic \cite{deFigueiredo2004}. One problem we are facing here is that we require the time interval reachable set $\mathcal{R}(\tau_k)$ to compute $\bm{\mathcal{A}}$ and $\bm{\mathcal{B}}$, but also require $\bm{\mathcal{A}}$ and $\bm{\mathcal{B}}$ to compute $\mathcal{R}(\tau_k)$. To resolve this mutual dependence, we apply the following strategy \cite{Althoff2013a}: We compute $\bm{\mathcal{A}}$ and $\bm{\mathcal{B}}$ using an estimate $\widehat{\mathcal{R}}(\tau_k)$ for $\mathcal{R}(\tau_k)$, and  then compute $\mathcal{R}(\tau_k)$ using the resulting $\bm{\mathcal{A}}$ and $\bm{\mathcal{B}}$. If the resulting set satisfies $\mathcal{R}(\tau_k) \subseteq \widehat{\mathcal{R}}(\tau_k)$, we have a guaranteed enclosure of the reachable set. If not, we update our estimate by bloating the resulting set $\widehat{\mathcal{R}}(\tau_k) = x_c + \eta \, (\mathcal{R}(\tau_k) - x_c)$, where $\eta > 1$ is a bloating factor and $x_c$ is the center of $\mathcal{R}(\tau_k)$. This process is repeated until $\mathcal{R}(\tau_k) \subseteq \widehat{\mathcal{R}}(\tau_k)$ is satisfied at some point. As an initial estimate $\widehat{\mathcal{R}}(\tau_k)$ we can use the final reachable set from the previous time step $\mathcal{R}(k \Delta t)$. Since containment checks as well as affine arithmetic cannot be directly applied to polynomial zonotopes, we use an interval or oriented hyper-rectangle enclosure of $\mathcal{R}(\tau_k)$ to compute $\bm{\mathcal{A}}$ and $\bm{\mathcal{B}}$. Moreover, we have to use the algorithm for time-varying parameters in Alg.~\ref{alg:tvReach} to compute the reachable set since the abstraction error represented by the matrix sets $\bm{\mathcal{A}}$ and $\bm{\mathcal{B}}$ does not remain constant over time. Let us demonstrate reachability analysis for nonlinear systems by an example:
\begin{figure}
   \centering
   \setlength{\belowcaptionskip}{-15pt}
   
   % First image on the left
   \begin{subfigure}[b]{0.49\columnwidth}
      \centering
      \includegraphics[width=\linewidth]{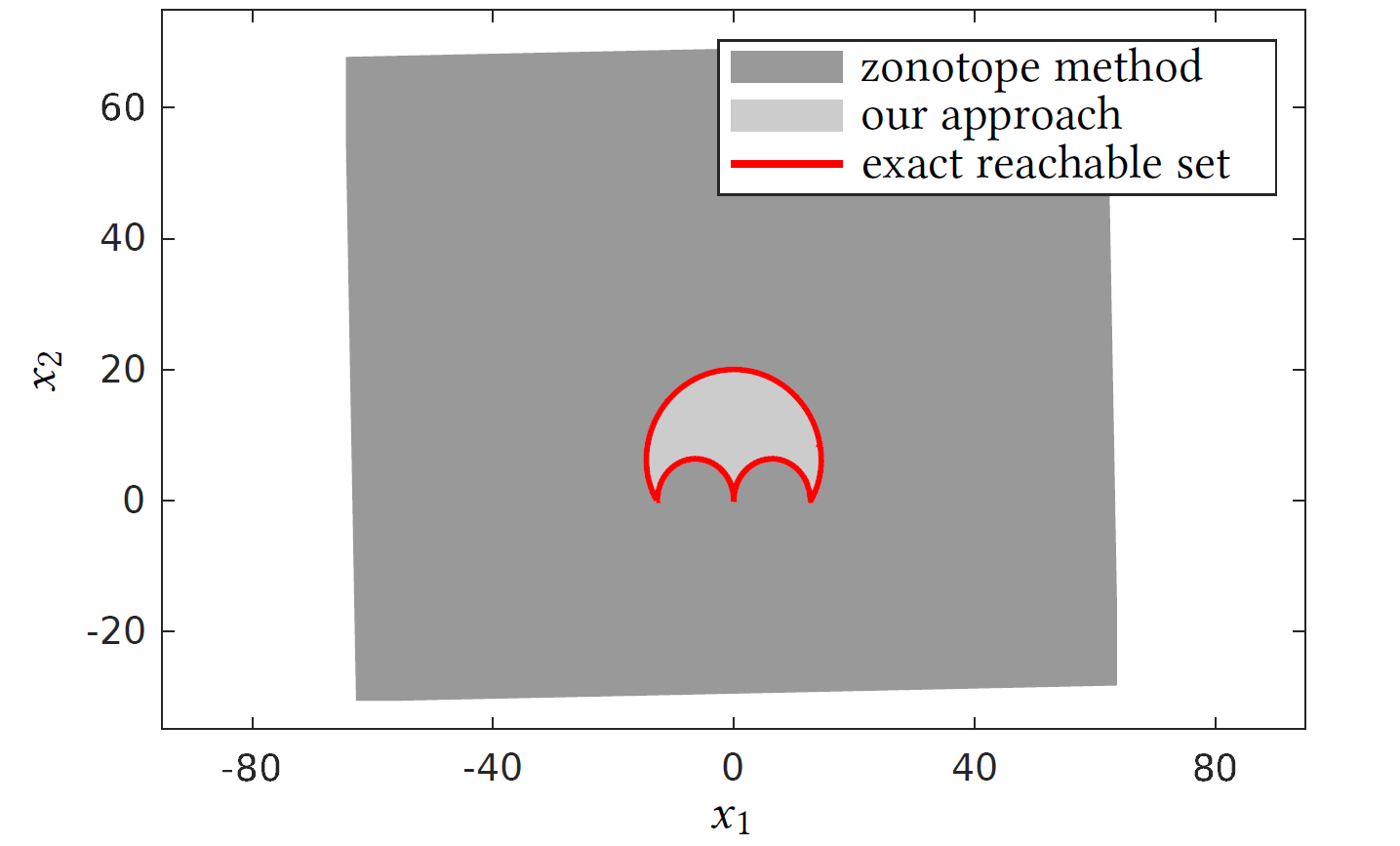}
   \end{subfigure}
   \hfill % This adds space between the two figures if needed
   % Second image on the right
   \begin{subfigure}[b]{0.49\columnwidth}
      \centering
      \includegraphics[width=\linewidth]{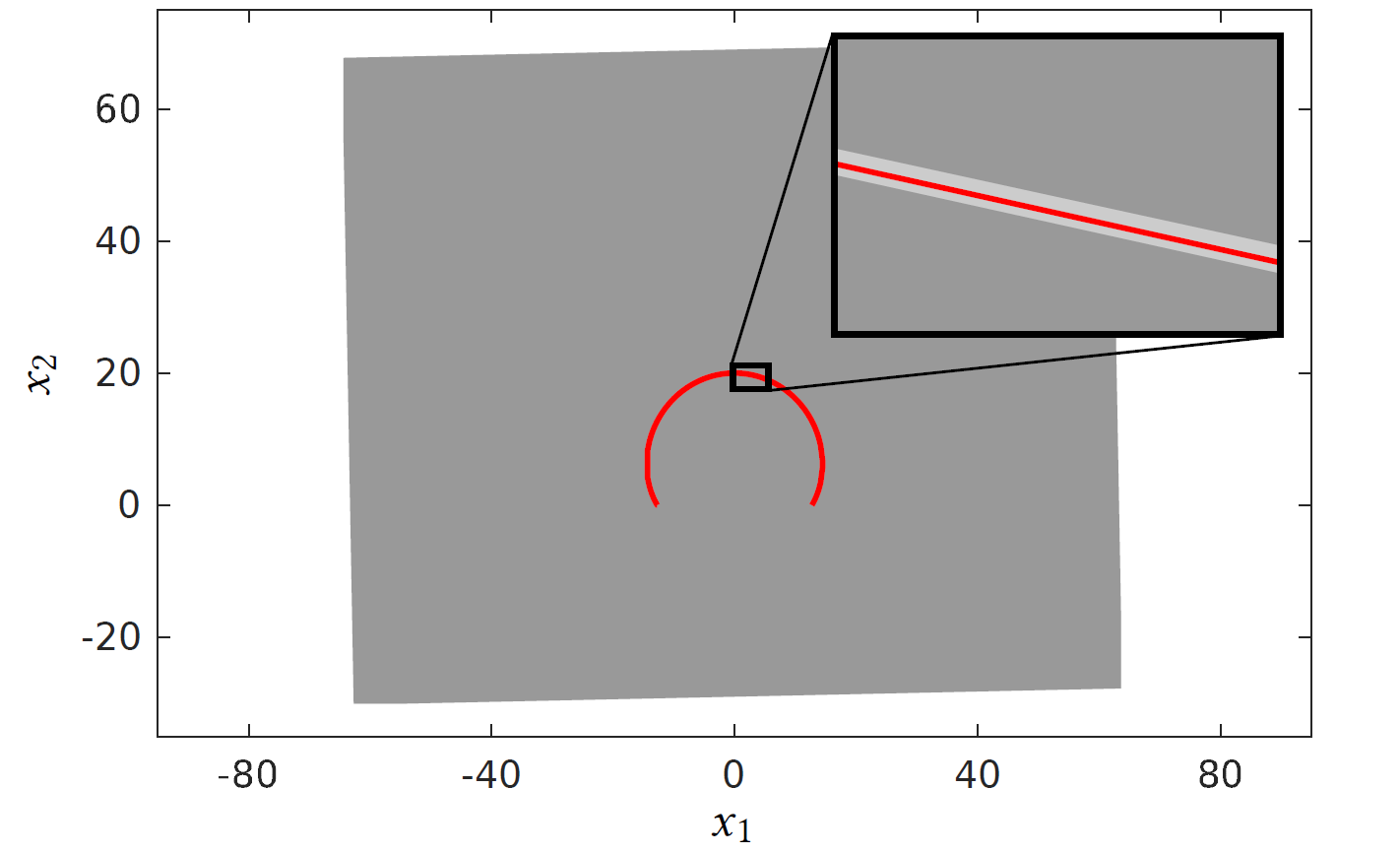}
   \end{subfigure}
   
   \vspace{-10pt}
   \caption{Comparison of reachable set enclosures for the Dubins car computed with our approach and with the zonotope method \cite{Althoff2011b}, where the time interval reachable set is shown on the top and the final reachable set is shown on the bottom.}
   \label{fig:dubin car comparison time interval}
\end{figure}

%\begin{comment}
%\begin{figure}[t!]
%    \centering
%    \setlength{\belowcaptionskip}{-15pt}
%    \psfragfig[width=0.99\columnwidth]{./Figures/dubinsCar}{
%    \psfrag{a}[c][c]{$x_1$}
%  	\psfrag{b}[c][c]{\rotatebox{-180}{$x_2$}}
%    \psfrag{c}[l][c]{zonotope method}
%    \psfrag{v}[l][c]{our approach}
%    \psfrag{e}[l][c]{exact reachable set}
%    }
%    \vspace{-8pt}
%    \caption{Comparison of reachable set enclosures for the Dubins car computed with our approach and with the zonotope method \cite{Althoff2011b}, where the time interval reachable set is shown on the top and the final reachable set is shown on the bottom.}
%    \label{fig:dubin car comparison time interval}
%\end{figure} 
%\end{comment}
%
\begin{example}
We consider the Lotka-Volterra system
    \begin{equation*}
    \begin{split}
        \dot x_1 &= 3 \, x_1 - 3 \, x_1 \, x_2 \\
        \dot x_2 &= x_1 \, x_2 - x_2
    \end{split}
    \end{equation*}
    together with  the initial set $\mathcal{X}_0 = [1.1,1.5] \times 1$ and the time horizon $t_{end} = 0.5\si{\second}$. The results visualized in Fig.~\ref{fig:nonlinear} show that our approach is able to compute a non-convex enclosure of the reachable set.
    \label{ex:nonlinear}
\end{example}
\vspace{\secminus}
\label{sec:NumericalEvaluation}

In this section we compare our reachability algorithm to other state of the art methods using two benchmark systems. We implemented our approach in MATLAB, and all computations are carried out on a Apple M2 Pro processor with 16GB memory. %2.9GHz quad-core i7 processor with 32GB memory. 
%Moreover, we integrated our implementation into the CORA toolbox \cite{Althoff2015a} and published a repeatable package on CodeOcean\footnote{\url{https://codeocean.com/capsule/8884603/tree/v1}}.

%In this section, we use two example linear time-invariant parameterized systems, Dubins car $\cite{fedotov2018}$ and Platoon Benchmark $\cite{ARCH21:linear}$, to demonstrate the accuracy and scalability that Alg. $\ref{alg:mainReach}$ can achieve. We compare the reachable sets to the result of a classic method $\cite{Althoff2015a}$ solving this type of problem using zonotope as a set representation.
%\begin{comment}
%\begin{figure}[t!]
%    \centering
%    \setlength{\belowcaptionskip}{-20pt}
%    \psfragfig[width=0.99\columnwidth]{./Figures/tv dubin}{
%    \psfrag{a}[c][c]{$x_1$}
%  	\psfrag{b}[c][c]{\rotatebox{-180}{$x_2$}}
%    \psfrag{c}[l][c]{zonotope method}
%    \psfrag{v}[l][c]{our approach}
%    \psfrag{e}[l][c]{constant parameters}
%    }
%    \vspace{-10pt}
%    \caption{Comparison of enclosures for the final reachable set for the Dubins car with time-varying parameters computed with our approach and with the zonotope method \cite{Althoff2011a}, where the exact reachable set for the constant parameter case is shown for comparison.}
%    \label{fig:dubin car time-varying}
%\end{figure} 
%\end{comment}

\begin{figure}
   \centering
   \setlength{\belowcaptionskip}{-15pt}
   \hspace{-8pt}
   \includegraphics[width=0.5\columnwidth]{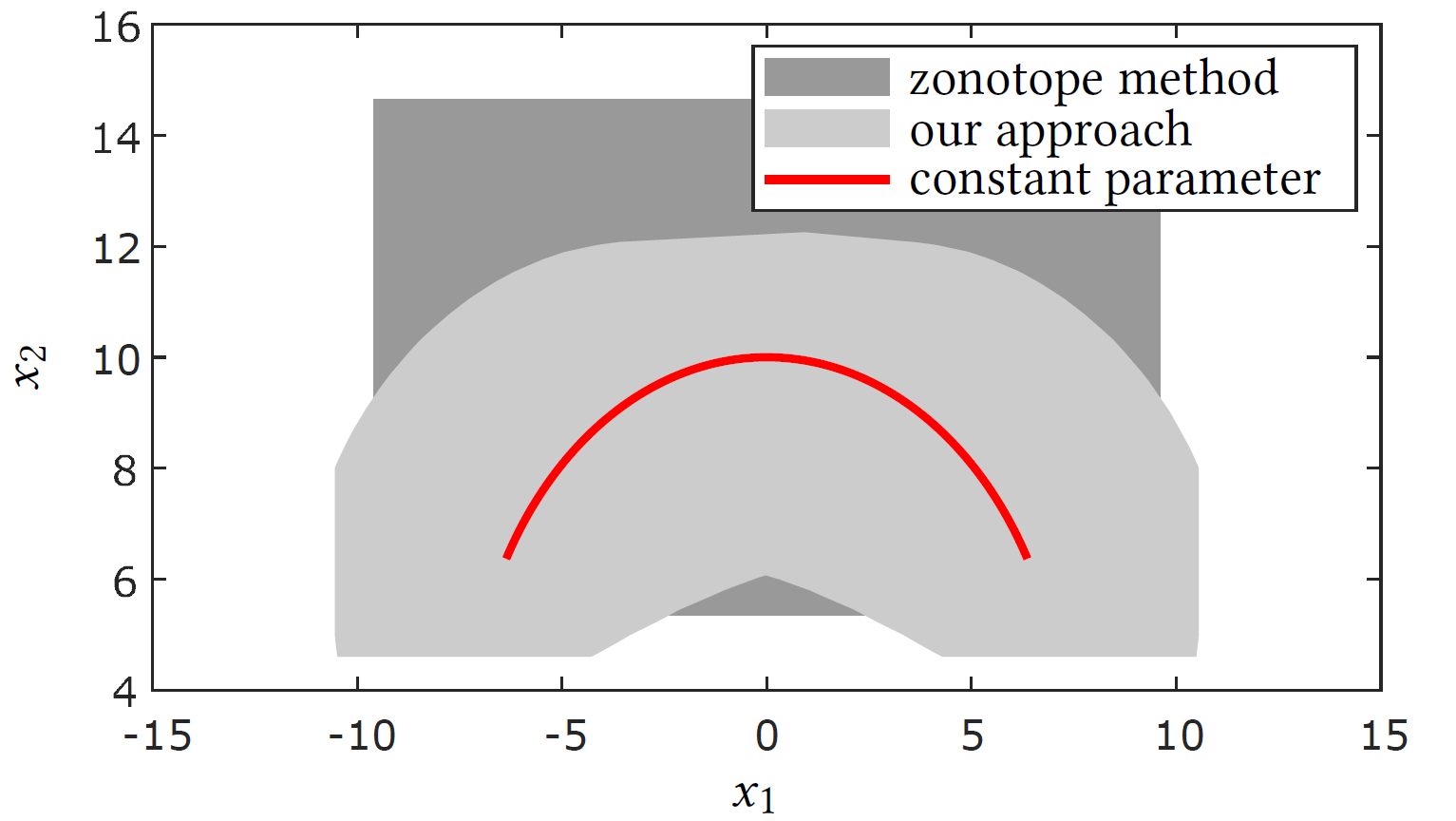}
   \vspace{-10pt}
    \caption{Comparison of enclosures for the final reachable set for the Dubins car with time-varying parameters computed with our approach and with the zonotope method \cite{Althoff2011a}, where the exact reachable set for the constant parameter case is shown for comparison.}
    \label{fig:dubin car time-varying}
\end{figure}
\begin{figure}[ht]
   \centering
   \setlength{\belowcaptionskip}{-15pt}
   \hspace{-10pt}
   \includegraphics[width=0.5\columnwidth]{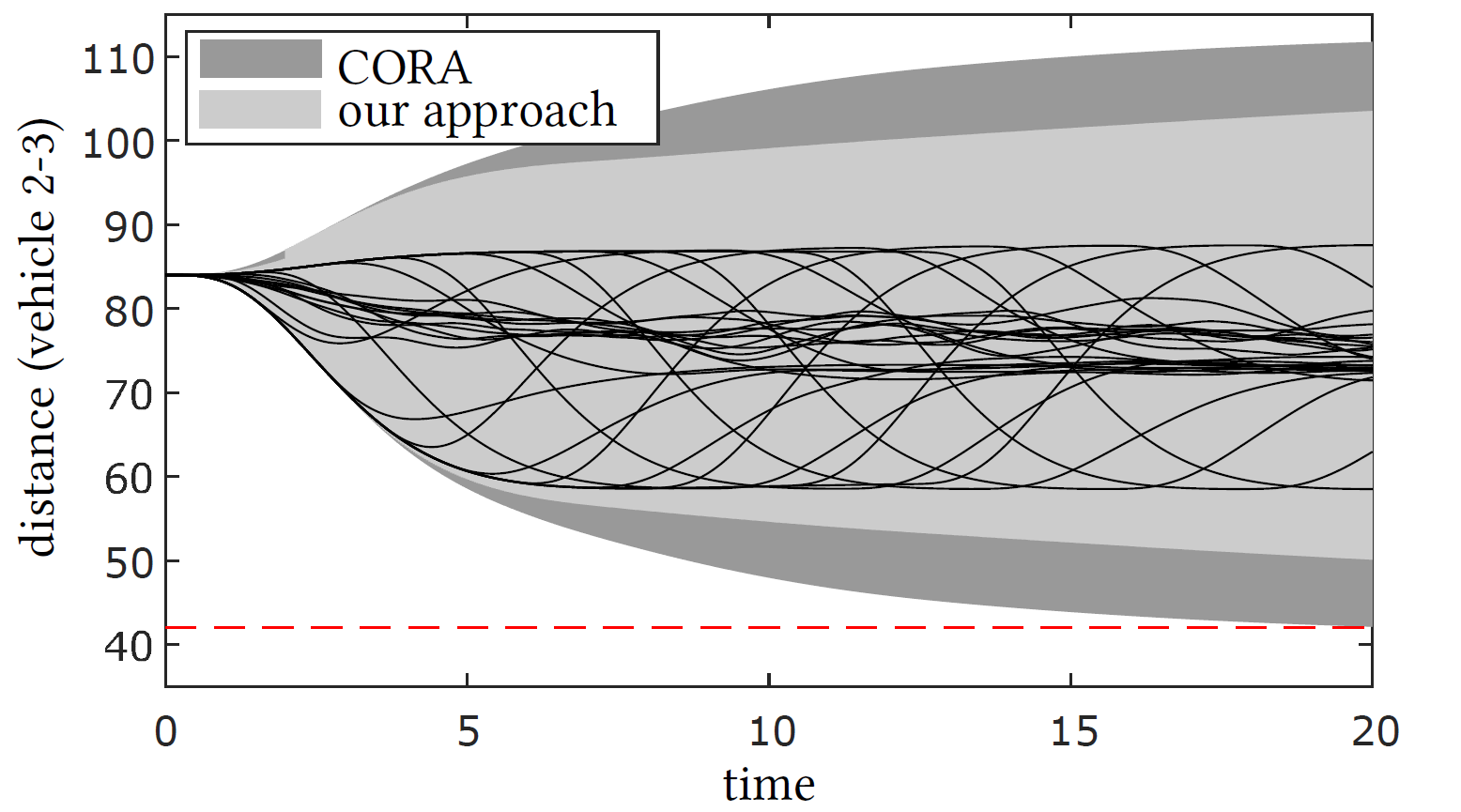}
   \vspace{-10pt}
    \caption{Comparison of the reachable set enclosure of the platoon benchmark computed with our approach and the CORA toolbox \cite{Althoff2015a}, where random simulations are shown in black and the required safe distance of $42$m is visualized in red.}
    \label{fig:Platoon}
\end{figure}
%\begin{comment}
%\begin{figure}[t!]
%    \centering
%    \setlength{\belowcaptionskip}{-15pt}
%    \psfragfig[width=\columnwidth]{./Figures/tv_dubin}{
%    \psfrag{a}[c][c]{$x_1$}
%  	\psfrag{b}[c][c]{$x_2$}
%    \psfrag{c}[l][c]{zonotope method}
%    \psfrag{e}[l][c]{our approach}
%    \psfrag{v}[l][c]{constant parameter}
%    }
%    \vspace{-18pt}
%    \caption{Comparison of enclosures for the final reachable set for the Dubins car with time-varying parameters computed with our approach and with the zonotope method \cite{Althoff2011a}, where the exact reachable set for the constant parameter case is shown for comparison.}
%    \label{fig:dubin car time-varying}
%\end{figure} 
%\end{comment}
\vspace{\secminus}
\subsection{Dubins Car}

First, we consider the example of a Dubins car shown in Fig.~\ref{fig:dubinsCarSchematic} to demonstrate how well our reachabilty algorithm captures the non-convexity of the reachable set. According to \cite[Sec. 2.2]{Bak2022}, the dynamics of the Dubins car can be described as a four-dimensional linear parametric system. We consider a time horizon of $t_{end} = 2 \si{\second}$ together with the initial state $\mathcal{X}_0 = [0~0~0~10 \si{\meter \per \second}]$.
%\begin{figure}    
%    \psfragfig[width=0.6\columnwidth]{./Figures/time-point_zoom}{
%    \psfrag{a}[c][c]{$x_1$}
%  	\psfrag{b}[c][c]{{$x_2$}}
%  	}
%  	\label{fig:dubin car comparison time interval with exact solution.}
%    \caption{Dubins car reachable set comparison.}
%\end{figure}
Fig. \ref{fig:dubin car comparison time interval} compares the reachable set enclosure for case with constant parameters computed with our algorithm and the zonotope method \cite{Althoff2011b}, which is implemented in the reachablity toolbox CORA \cite{Althoff2015a}. Our polynomial zonotope approach tightly captures the non-convexity of the reachable set, while the zonotope method yields a large convex over-approximation. Moreover, for our approach we can compute the reachable set using a single time step together with $\kappa = 60$ Taylor terms and a desired zonotope order of $\rho_d = 60$, which takes $0.25 \si{\second}$. For the zonotope approach, we obtained the tightest results for a time step size of $\Delta t = 0.05 \si{\second}$, $\kappa = 6$ Taylor terms, and desired zonotope order $\rho_d = 100$, which takes $0.18 \si{\second}$ to compute.  

%As long as the initial set is very small which is a single state and there is no input, we compute the whole reachable set in a single time step within 0.17833s with 60 Taylor terms and reduction order of 60. While the zonotope approach uses a much smaller step size of 0.05 and takes 0.31218s with 20 Taylor terms and reduction order of 30. At the same time, increasing the reduction order of the zonotope approach does not improve the accuracy. 

As a second experiment, we consider the scenario where the parameters are time-varying. Since reachability analysis is more challenging in this case, we reduce the time horizon to $t_{end} = 1 \si{\second}$. We compare the approach presented in Sec.~\ref{sec:time varying parameter} to the zonotope method \cite{Althoff2011a}, which again is implemented in the reachability toolbox CORA \cite{Althoff2015a}. For the zonotope method we obtained the tightest result with the settings $\Delta t = 0.05 \si{\second}$, $\kappa = 4$, and $\rho_d = 30$, for which the calculation of the reachable set takes $0.069 \si{\second}$. For our approach, we use the same time step size but increase the other settings to $\kappa = 6$, and $\rho_d = 80$. While the calculation of the reachable set takes with $4 \si{\second}$ longer than for the zonotope method, our approach yields an enclosure that is in most regions more accurate as shown in Fig.~\ref{fig:dubin car time-varying}. Moreover, while the zonotope method only computes a very rough convex enclosure of the reachable set, our approach still captures the non-convexity of the reachable set to some extent. 

\vspace{\secminus}
\subsection{Vehicle Platoon}
To demonstrate the scalability of our approach, we now consider the PLAA01-BND42 instance of the $9$-dimensional platoon benchmark from the 2021 ARCH competition \cite[Sec.~3.7]{ARCH21:linear}, which describes a platoon consisting of four vehicles.
The verification task for the benchmark is to show that even in case of a communication loss the vehicles still keep a safe distance of 42m from each other.
This can be formulated as a reachability problem, where the corresponding parametric system encloses both the system with and without communication loss. 
The system has one uncertain input, which is the acceleration of the leading vehicle. 
For our experiments we slightly modify the benchmark by considering constant instead of time-varying parameters.
We compare our approach with the state of the art reachability toolbox CORA \cite{Althoff2015a}, which uses the zonotope method \cite{Althoff2011b} and requires a computation time of $15 \si{\second}$.
For our approach, calculation of the reachable set using Alg.~\ref{alg:mainReach} with $\Delta t = 0.036 \si{\second}$, $\kappa = 6$, and $\rho_d = 50$ takes $9 \si{\second}$, and is therefore faster than the CORA toolbox.
On top of that, as shown in Fig.~\ref{fig:Platoon} the reachable set enclosure computed with our approach is significantly tighter than the one computed with CORA.
In summary, our approach is therefore both faster and more accurate than the existing state of the art reachability tool.
\vspace{\secminus}
\section{Scalable Optimization for Multi-Affine Zonotopes}
\label{sec::scalable optimization}

\subsection{Reachability sets are multi-affine zonotopes}
Let us reconsider Algorithm \ref{alg:tvReach} from section 5.1. Let us make a (mild) assumption that the matrix sets   $\mathbfcal{A}$ and $\mathbfcal{B}$,    initial set $\mathcal{X}_0$,  input set $\mathcal{U}$ are indenpendent (true if they do not share a common uncertain factor).
Then, 
the resulting polynomial zonotope has a curiously useful property; the exponents over all the factors, minus the time factor, are at most 1.
This means that if the time factor is replaced with numerical values (as they are, after all, known), then \emph{the resulting reachable set is a multiaffine zonotope (Def \ref{def:multiaffine-zonotope}).} 
%, which means they are independent from each other, there is an interesting observation from Algorithm \ref{alg:tvReach}. 
%The time point reachable set results in a multi-affine zonotope. 
%This is because for a time interval reachable set, represented as a polynomial zonotope obtained by Algorithm \ref{alg:tvReach}, the only dependent factor with an order higher than 1 is the time factor. 
Intuitively, this is because we call $\textproc{fresh}$ operator on the power of $\mathbfcal{A}$, 
%as mentioned at footnote \ref{footnote2},
and because the particular solution is added through Minkowski sum for each time interval. 
%Thus, by evaluating the time interval reachable by calculating the time parameter's numerical value, the result is a multi-affine zonotope. 

Since multi-affine zonotopes are a type of  polynomial zonotopes, it is of course possible to apply existing set operations for polynomial zonotopes; however, usually more efficient methods exist when restricted to this special class.
%on multi-affine zonotopes lacks optimization. \\
Theoretically, as shown in \cite{huang2023difficulty}, optimization and intersection checking of polynomial zonotopes is NP-complete. This is also true of plotting polynomial zonotopes of dimension 2.
%the set in 2 dimensions faces a similar difficulty. 
This is because both tasks rely on a 
%the plotting algorithm of a polynomial zonotope relies on the same 
splitting algorithm \cite{huang2023difficulty,Bak2022} 
%as used for intersection checking and demonstrates the 
which models polynomial zonotopes as a
union of zonotopes, created by sequentially splitting the original nonconvex set.
%over-approximations of the split sets. 
Given the importance of this splitting method, we now investigate how its efficiency can be improved over multiaffine zonotopes.

\begin{figure}
   \centering
   \setlength{\belowcaptionskip}{-15pt}
   \hspace{-8pt}
   \includegraphics[width=0.5\columnwidth]{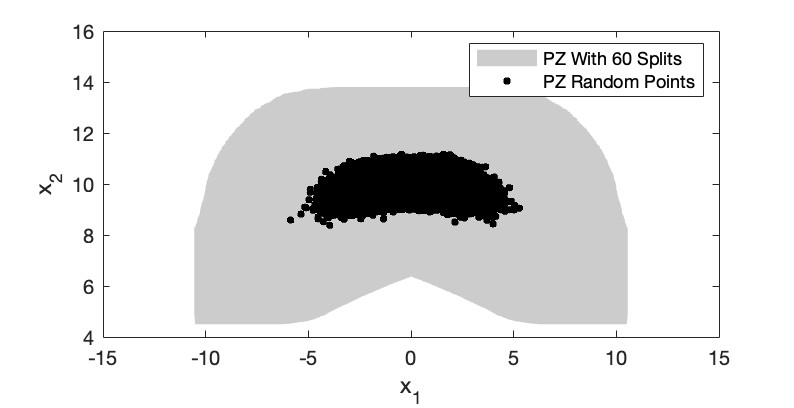}
   \vspace{-10pt}
    \caption{Time point reachable set of the Dubins car with time-varying parameters plotted through the default polynomial zonotope approach with 60 splits costing around 2 hours.}
    \label{fig:multi-affine motivation}
\end{figure}

\subsection{Splitting methods for polynomial zonotopes}
The core idea behind splitting methods for polynomial zonotopes is the observation that any nonconvex set can be represented (or overapproximated) as a union of zonotopes. This is a useful representation, since tasks such as intersection checking and 2-D plotting are much easier for zonotopes (can be done through quick index search) than polynomial zonotopes (requires nonconvex optimization). However, a drawback is that, even with careful implementation, the number of zonotopes may increase significantly, and as mentioned in \citep{huang2023difficulty}, it computationally demanding to achieve an accurate overrepresentation; in some implementations, it may even lead to an unbounded number of sets by not monotonically decreasing the overrepresentation set size in each round. 
%,   but also the problem of not preserving the monotonicity of the zonotope over-approximation of the split sets. 

As we see in Fig.\ref{fig:multi-affine motivation},  applying the built-in plotting algorithm for polynomial zonotope relying on splitting to plot the same time instance set as shown in Fig. \ref{fig:dubin car time-varying} takes around 2 hours, and may still result in a bad over-approximation.
%it still seems to exist a gap between the demonstrated set with the randomly sampled points. 
One idea to improve the plotting process is to model the convex hull of the set as an intersection of hyperspaces by finding the supporting hyperplanes at various points on the boundary of the set.
%have a stronger way of computing the supports. After that, it is feasible to over-approximate its boundary by collecting support lines. 
These supporting hyperplanes can be derived by solving multi-affine optimization problems.

\subsection{Improved splitting methods for multiaffine zonotopes}
\label{sec:improve split for mtaff}
In this section we propose a scalable optimization algorithm for minimizing linear functions over multiaffine zonotopes. Let us start by defining the multi-affine optimization problem
%\begin{example} (Figure~\ref{fig:Single PZ Filled})\label{PZ_example}
%    Consider a polynomial zonotope defined as:
%    $$
%        \mathcal{MAZ} = \Bigg \{ \alpha_1 \alpha_2 \alpha_3 \begin{bmatrix}
%           2 \\
%           0
%         \end{bmatrix} + \alpha_3 \alpha_4 \begin{bmatrix}
%           1 \\
%           2
%         \end{bmatrix} + \alpha_5 \alpha_6 \begin{bmatrix}
%           2 \\
%           2
%         \end{bmatrix} + \alpha_7\begin{bmatrix}
%           2 \\
%           2
%         \end{bmatrix} ~\bigg|~ \alpha_i \in [-1,1]  \Bigg \}.
%    $$
%\end{example}
\begin{definition}\label{multi-affine-def}
    Consider the polynomial defined as:
    \begin{align}\label{MultiAffine}
     p(\alpha_1,\alpha_2,\cdots,\alpha_n) = \sum_{I \subseteq \{1,2,\cdots,n \}} g_I \prod_{i \in I} \alpha_i,   
    \end{align}
    where $a_I \in \mathbb{R}$. The multi-affine polynomial optimization problem is:
    \begin{align}
        \min_{\substack{\alpha_1,...,\alpha_n \\ \alpha_i \in [-1,1] }} p(\alpha_1,\alpha_2,\cdots,\alpha_n).
        \label{eq:min_multiaffine}
    \end{align}
\end{definition}
Given a set \(C \subseteq \{1,2,\ldots, n\}\) and a multi-affine polynomial \(p\), as defined in Equation \eqref{MultiAffine}, we formalize the restriction of \(p\) to the set \(C\), denoted \(p|_{C}\), with the following expression:
\begin{equation}\label{eq:MultiaffineRestrict}
p|_{C} = \sum_{I \subseteq C} g_I \prod_{i \in I} \alpha_i
\end{equation}

Note that we can use multi-affine optimization to find a tight over-approximation of multi-affine zonotope. Suppose we have a multi-affine zonotope $\mathcal{MAZ}$. Let us start with a vector $\textbf{d}_1$. To find the support, we minimize the following:
\begin{equation}
\textbf{x}_{1} = \arg \min_{ \textbf{x} \in \mathcal{MAZ}} \textbf{d}_1^T \textbf{x}
\label{eq:proj_mtaff}
\end{equation}
Therefore, the half-space $\mathcal{H}_1$ outlined below include the multi-affine zonotope, and the associated supporting hyperplane, denoted as $\partial \mathcal{H}_1$, will be
\[
\mathcal{H}_1 = \{ \textbf{x} ~  | ~  \textbf{d}_1^{T} (\textbf{x}_1 - \textbf{x}) \leq 0 \}, \quad \partial \mathcal{H}_1 = \{ \textbf{x} ~  | ~ \textbf{d}_1^{T} (\textbf{x}_1 - \textbf{x}) = 0 \}
\]
Subsequently we can calculate $\mathcal{H}_2$. This iterative process will continue until we calculate $\mathcal{H}_m$. The specific procedure of how to choose $\textbf{d}_1,\cdots,\textbf{d}_m$ is so called the Kamenev method and being discussed in \cite{lotov2004interactive}. Hence this can be formalized to some incremental approach to find the directions to perform optimization, until the consecutive optimal solution values are close to each other under some small tolerance. As a result, $\mathcal{H}_1 \bigcap \mathcal{H}_2 \cdots \bigcap \mathcal{H}_m$ will be an over-approximation of the multi-affine zonotope. This type of convex over-approximation that can be combined with other methods, like  polynomial zonotopes splitting. This combination can make the single polynomial zonotope split method more effective, as we show in Section \ref{sec:scalable evaluation}. By merging these approaches, we can get a tighter approximation than using the method based on polynomial zonotope splits alone. A visualization of the above procedure is in figure \ref{fig:countor}.
\begin{figure}
   \centering
   \setlength{\belowcaptionskip}{-15pt}
   \hspace{-8pt}
   \includegraphics[width=0.5\columnwidth]{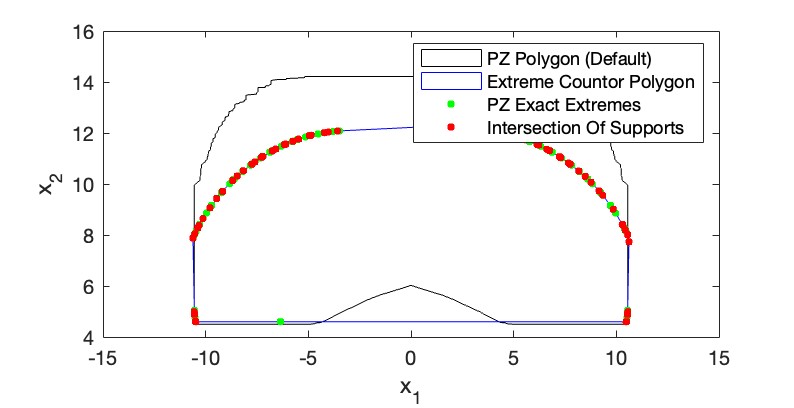}
   \vspace{-10pt}
    \caption{In the diagram, we initiate with a predefined series of directions identified as $d_1$. Subsequently, we identify a sequence of supports $\partial \mathcal{H}_1, \partial \mathcal{H}_2, \ldots, \partial \mathcal{H}_m$ and the respective points of intersection $\textbf{x}_1, \textbf{x}_2, \ldots, \textbf{x}_m$ with the multi-affine zonotope, label as green dots. The red dots are used to mark the points where adjacent supports intersect, specifically at $\partial \mathcal{H}_1 \bigcap \partial \mathcal{H}_2, \partial \mathcal{H}_2 \bigcap \partial \mathcal{H}3, \ldots, \partial \mathcal{H}_{m-1} \bigcap \partial \mathcal{H}_m$.}
    \label{fig:countor}
\end{figure}

We now present our method for solving \eqref{eq:min_multiaffine}.
Since all variables in a multi-affine optimization problem have an exponent of one, the partial derivative along each variable cannot change sign.
This means that the optimal value must occur on one of the corners of the $n$-dimensional box of the domain, and it is sufficient to consider this finite set when optimizing.
\begin{align*}
        \min_{\substack{\alpha_1,...,\alpha_n \\\alpha_i \in [-1,1] }} p(\alpha_1,\alpha_2,\cdots,\alpha_n) = \min_{\substack{\alpha_1,...,\alpha_n \\\alpha_i \in \{-1\} \cup \{1\}}} p(\alpha_1,\alpha_2,\cdots,\alpha_n).
\end{align*}
The above problem is, in general, NP-complete due to the following theorem:
\begin{theorem} \citep{huang2023difficulty}
Optimization over bilinear polynomial optimization is NP-complete where bilinear polynomial is defined as 
\begin{align}
\label{eq:bilinear_opt}
    \min_{\substack{\alpha_1,...,\alpha_n \\ \alpha_i \in \{-1\} \cup \{1\} }} \sum_{i=1}^{n} \sum_{j=i+1}^{n} g_{i,j} \alpha_i \alpha_j.
\end{align}
\end{theorem}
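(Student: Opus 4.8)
The plan is to prove the two directions required for \textsf{NP}-completeness separately: membership in \textsf{NP} and \textsf{NP}-hardness. Since the statement is phrased as an optimization problem, I would first cast it as the decision problem ``given the coefficients $g_{i,j}$ and a threshold $t$, does there exist $\alpha \in \{-1,1\}^n$ with $\sum_{i<j} g_{i,j}\alpha_i\alpha_j \le t$?''. Membership in \textsf{NP} is then immediate: a candidate assignment $\alpha \in \{-1,1\}^n$ is a certificate of size $O(n)$, and evaluating the objective requires summing the $O(n^2)$ products $g_{i,j}\alpha_i\alpha_j$, which is polynomial in the input size. Hence the decision problem lies in \textsf{NP}.

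For hardness, I would reduce from the decision version of \textsc{Max-Cut}, a classical \textsf{NP}-complete problem. Given a graph $G=(V,E)$ with $V=\{1,\dots,n\}$, I would introduce one variable $\alpha_i$ per vertex, interpret the sign of $\alpha_i$ as the side of the partition on which vertex $i$ lands, and set $g_{i,j}=1$ whenever $(i,j)\in E$ and $g_{i,j}=0$ otherwise. The key algebraic observation is the standard identity relating a cut to the bilinear form: for the partition $S=\{i : \alpha_i = 1\}$,
\[
  \mathrm{cut}(S) \;=\; \sum_{(i,j)\in E}\frac{1-\alpha_i\alpha_j}{2} \;=\; \frac{|E|}{2} - \frac12\sum_{(i,j)\in E}\alpha_i\alpha_j,
\]
since an edge is cut exactly when $\alpha_i\alpha_j=-1$. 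Consequently maximizing the cut is equivalent to minimizing $\sum_{(i,j)\in E}\alpha_i\alpha_j$, and $G$ admits a cut of size at least $k$ if and only if the minimum of the bilinear form is at most $|E|-2k$. This is a polynomial-time many-one reduction, so the bilinear minimization is \textsf{NP}-hard; together with membership in \textsf{NP} this yields \textsf{NP}-completeness.

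I do not expect a deep obstacle here --- the proof is a textbook-style reduction. The points that require care are (i) correctly translating between the \emph{maximization} of the cut and the \emph{minimization} of the bilinear objective, in particular computing the matching threshold $|E|-2k$ and checking that it is an integer so the reduction is exact; and (ii) remaining honest about the model of computation, i.e.\ that the coefficients $g_{i,j}$ are allowed to be arbitrary reals in the general problem, but that the $\{0,1\}$ coefficients produced by the reduction already suffice for hardness. I would also remark that the symmetry $\alpha \mapsto -\alpha$ leaves the objective invariant, which is precisely the expected two-fold symmetry of \textsc{Max-Cut} (swapping the two sides of the partition) and therefore causes no difficulty. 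Finally, the reduction uses only quadratic interaction terms and no linear or diagonal terms, matching the exact form $\sum_{i<j} g_{i,j}\alpha_i\alpha_j$ in the statement.
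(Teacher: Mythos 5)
Your proposal is correct: the decision-version formulation, the \textsf{NP}-membership certificate, and the reduction from \textsc{Max-Cut} via the identity $\mathrm{cut}(S) = \tfrac{|E|}{2} - \tfrac12\sum_{(i,j)\in E}\alpha_i\alpha_j$ (with threshold $|E|-2k$) are all sound. Note that the paper itself gives no proof of this theorem---it is imported verbatim from the cited reference \citep{huang2023difficulty}---and your argument is the standard Ising/\textsc{Max-Cut} reduction, which is essentially the route taken in that cited work, so there is no substantive divergence to report.
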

We can design a brute force algorithm to find the optimal value of the multi-affine polynomial. 
\begin{algorithm}[H]
\caption{Brute Force Minimization of Multi-Affine Polynomial}
\label{alg:non-recursive-brute-force}
\begin{algorithmic}[1]
\Function{Minimize}{$p$, $n$} 
\State $m \gets \infty$
\For{$i \gets 0$ \textbf{to} $2^n-1$} 
    \For{$j \gets 1$ \textbf{to} $n$}
    \State $\alpha_j = 2 \left( \left\lfloor \frac{i}{2^j} \right\rfloor \mod 2 \right) - 1$
    \EndFor
    \State $m \gets \min(m,p(\alpha_1, \alpha_2, \ldots, \alpha_n))$
\EndFor
\State \Return $m$
\EndFunction
\end{algorithmic}
\end{algorithm}

The algorithm, as defined, has exponential running time, aligning with the NP-Completeness of the problem at hand. In the worst case, 
%, it presents the potential issues that it always 
it may check all  $2^n$ possible values for $\{\alpha_i\}_i^n$.
However, it is possible that some problems can be solved more efficiently.
%possibility which is not efficient since some problems are easier to solve. 
Specifically, the work in \citep{del2023complexity} highlights that the complexity of the multi-affine polynomial optimization problem is closely linked to the associated dependency graph structure. 
%Let us start reviewing some mathematical definitions.
\begin{definition}
Given a multiaffine polynomial \( p \) as defined in \eqref{MultiAffine}, the associated \textbf{factor dependency graph} \( G = (V, E) \) is constructed as follows:
\begin{itemize}
    \item The vertex set \( V = \{1,...,n\}\) represents  the indices of the variables \( \alpha_1, \alpha_2, \ldots, \alpha_n \) in the polynomial \( p \).
    \item The edge set \( E \subset V\times V\) contains an edge \( e = \{i, j\} \in E\) 
    %between vertices \( i \) and \( j \) 
    if there exists a monomial in the polynomial \( p \) \eqref{MultiAffine} that includes both coefficients $a_i$ and $a_j$.
    %with a non-zero coefficient \( a_I \) such that the indices \( i \) and \( j \) are both in the subset \( I \). 
\end{itemize}
\end{definition}
This graph represents the pairwise dependencies between variables in the polynomial optimization problem, where an edge signifies that the connected variables jointly affect the value of the polynomial.

Suppose we have a multi-affine polynomial \( p \) and its associated dependency graph \( G = (V, E) \). If \( G \) is disconnected, e.g. consisting of disconnected   components \( G_1 = (V_1, E_1), G_2 = (V_2, E_2), \ldots, G_m = (V_m, E_m) \), (e.g. where there is no edge $\{i,j\}\in E$ where $i\in V_k$, $j\in V_l$, $k\neq l$), then the optimization problem can be decomposed over each 
%expressed as a sum of optimizations over each 
connected component:
\[
\min_{\alpha_i \in [-1,1],i \in V} p = \min_{\alpha_i \in [-1,1],i \in V_1} p_1 + \min_{\alpha_i \in [-1,1],i \in V_2} p_2 + \cdots + \min_{\alpha_i \in [-1,1],i \in V_m} p_m
\]
Here, $p_i$ only contains the monomials from $p$ that involve coefficients represented by nodes in $V_i$.
Then, the runtime reduces from $2^n$ to $\sum_i 2^{|V_i|}$, which is often significant in practice.
%The running time for minimizing each sub-problem is generally less than minimizing the original problem. 
Let's illustrate this with the following example:

\begin{example}\label{example:multi}
Consider a multi-affine polynomial \( p \) and its associated dependency graph \( G = (V, E) \) defined as follows:
\begin{align*}
    p &= g_1 \alpha_1 \alpha_2 \alpha_3 + g_2 \alpha_3 \alpha_4 + g_3 \alpha_5 \alpha_6 + g_4 \alpha_7, \\
    V &= \{\alpha_1, \alpha_2, \alpha_3, \alpha_4, \alpha_5, \alpha_6, \alpha_7\}, \\
    E &= \{\{\alpha_1, \alpha_2\}, \{\alpha_1, \alpha_3\}, \{\alpha_2, \alpha_3\}, \{\alpha_3, \alpha_4\}, \{\alpha_5, \alpha_6\}\}. % Edges are pairs that are part of a hyperedge in the Multi-Affine polynomial
\end{align*}

The dependency graph \( G \) can be divided into the following connected components:
\begin{itemize}
    \item \( G_1 = (V_1, E_1) \) with \( V_1 = \{\alpha_1, \alpha_2, \alpha_3, \alpha_4\} \) and \( E_1 = \{\{\alpha_1, \alpha_2\}, \{\alpha_1, \alpha_3\}, \{\alpha_2, \alpha_3\}, \{\alpha_3, \alpha_4\}\} \),
    \item \( G_2 = (V_2, E_2) \) with \( V_2 = \{\alpha_5, \alpha_6\} \) and \( E_2 = \{\{\alpha_5, \alpha_6\}\} \),
    \item \( G_3 = (V_3, E_3) \) consisting of the isolated vertex \( V_3 = \{\alpha_7\} \) with no edges.
\end{itemize}

The polynomial \( p = p_1 + p_2 + p_3\) can be decomposed into sub-polynomials \( p_1 \), \( p_2 \), and \( p_3 \) corresponding to these components:
\begin{align} \label{example:dependency}
    p_1 &= g_1 \alpha_1 \alpha_2 \alpha_3 + a_2 \alpha_3 \alpha_4, \\
    p_2 &= g_3 \alpha_5 \alpha_6, \\
    p_3 &= g_4 \alpha_7.
\end{align}

Thus, the original optimization problem can be decomposed into:
\begin{align*}
    \min_{\alpha_i \in [-1,1], i \in V} p = \min_{\alpha_1, \alpha_2, \alpha_3, \alpha_4 \in [-1,1]} p_1 + \min_{\alpha_5, \alpha_6 \in [-1,1]} p_2 + \min_{\alpha_7 \in [-1,1]} p_3.
\end{align*}
Without decomposition, Alg. \ref{alg:non-recursive-brute-force}
%Using the divide-and-conquer algorithm, we 
would enumerate \( 2^7 = 128 \) possibilities; however, through decomposition,
%by observing the decomposition into connected components, 
we only need to enumerate \( 2^4 + 2^2 + 2 = 22 \) possibilities, which is a significant reduction.
\end{example}
Now suppose an associated graph is \emph{almost} disconnected; for example, there is only one vertex, that, if removed, would disconnect the graph.
%However, if the associated dependency graph is already a connected component, 
Then we propose to remove this vertex so that the dependency graph can be partitioned into multiple connected components. Let us illustrate it by considering the following example:

\begin{example}\label{example:removal}
Consider the multi-affine polynomial optimization problem defined by the polynomial:
\[
p(\alpha_1, \alpha_2, \alpha_3, \alpha_4, \alpha_5, \alpha_6, \alpha_7,\alpha_8) = (g_1 \alpha_1 \alpha_2 \alpha_3 + a_2 \alpha_3 \alpha_4
+ g_3 \alpha_5 \alpha_6, + g_4 \alpha_7) \alpha_8.
\]
where \( g_1, g_2, g_3, \) and \( g_4 \) are real coefficients. The associated dependency graph \( G = (V, E) \) for this polynomial is given by:
\begin{itemize}
    \item $ V = \{\alpha_1, \alpha_2, \alpha_3, \alpha_4, \alpha_5, \alpha_6, \alpha_7,\alpha_8\} $
    \item $E = \{\{\alpha_1, \alpha_2\}, \{\alpha_1, \alpha_3\}, \{\alpha_2, \alpha_3\}, \{\alpha_3, \alpha_4\}, \{\alpha_5, \alpha_6\}\{\alpha_1, \alpha_8\},$ \\
    $ \{\alpha_2, \alpha_8\},\{\alpha_3, \alpha_8\},\{\alpha_4, \alpha_8\},\{\alpha_5, \alpha_8\},\{\alpha_6, \alpha_8\},\{\alpha_7, \alpha_8\}\}$ 
\end{itemize}

In this dependency graph, vertex \( \alpha_8 \) is a key node connecting multiple edges. The removal of \( \alpha_8 \) leads to partitioning the graph into separate connected components. The remaining graph, after removing \( \alpha_8 \), can be separated into three distinct connected components
%is characterized by:

%\begin{itemize}
%    \item \( V' = \{\alpha_1, \alpha_2,\alpha_3 \alpha_4, \alpha_5, \alpha_6, \alpha_7\} \)
%    \item \( E' = \{\{\alpha_1, \alpha_2\}, \{\alpha_1, \alpha_3\}, \{\alpha_2, \alpha_3\}, \{\alpha_3, \alpha_4\}, \{\alpha_5, \alpha_6\}\} \)
%\end{itemize}

%This results in Three distinct connected components:
\[
V_1 = \{\alpha_1, \alpha_2, \alpha_3,\alpha_4\}, \quad V_2 = \{\alpha_5, \alpha_6\} \quad V_3 = \{\alpha_7\}.
\]
More specifically, if $\alpha_8=1$
\begin{align*}
    \min_{\alpha_i \in [-1,1], i \in V} p_+ = \min_{\alpha_1, \alpha_2, \alpha_3, \alpha_4 \in [-1,1]} p_1 + \min_{\alpha_5, \alpha_6 \in [-1,1]} p_2 + \min_{\alpha_7 \in [-1,1]} p_3.
\end{align*}
where 
\[
p_1 = g_1 \alpha_1 \alpha_2 \alpha_3 +  g_2 \alpha_3 \alpha_4, \qquad p_2 =  g_3 \alpha_5 \alpha_6, \qquad p_3 = g_4 \alpha_7.
\]
%$p_1,p_2,p_3$ are defined in \eqref{example:dependency}
Similarly, if $\alpha_8=-1$, we simply flip signs where needed, and minimize
\begin{align*}
    \min_{\alpha_i \in [-1,1], i \in V} p_- = \min_{\alpha_1, \alpha_2, \alpha_3, \alpha_4 \in [-1,1]} (-p_1)  +\min_{\alpha_5, \alpha_6 \in [-1,1]} (-p_2) + \min_{\alpha_7 \in [-1,1]} (-p_3).
\end{align*}
Then the minimum of $p$ is $\min\{p_+,p_-\}$.
\end{example}
\begin{figure}[ht]
    \centering
    \begin{subfigure}[b]{0.45\linewidth}        \includegraphics[width=\linewidth]{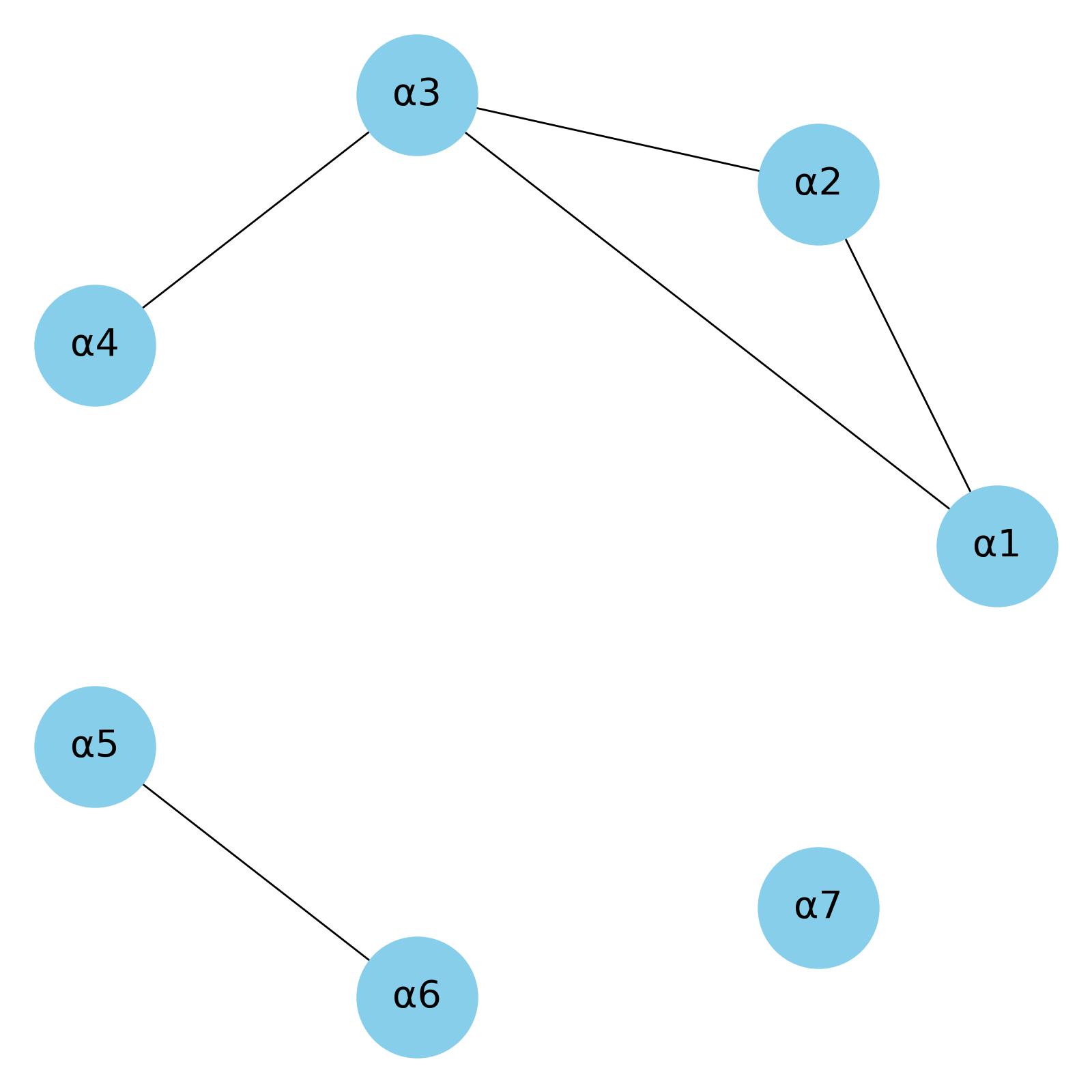}
        \caption{ Factor dependency graph for Example \ref{example:multi}}
        \label{fig:sub1}
    \end{subfigure}
    \hfill
    \begin{subfigure}[b]{0.45\linewidth}
    \includegraphics[width=\linewidth]{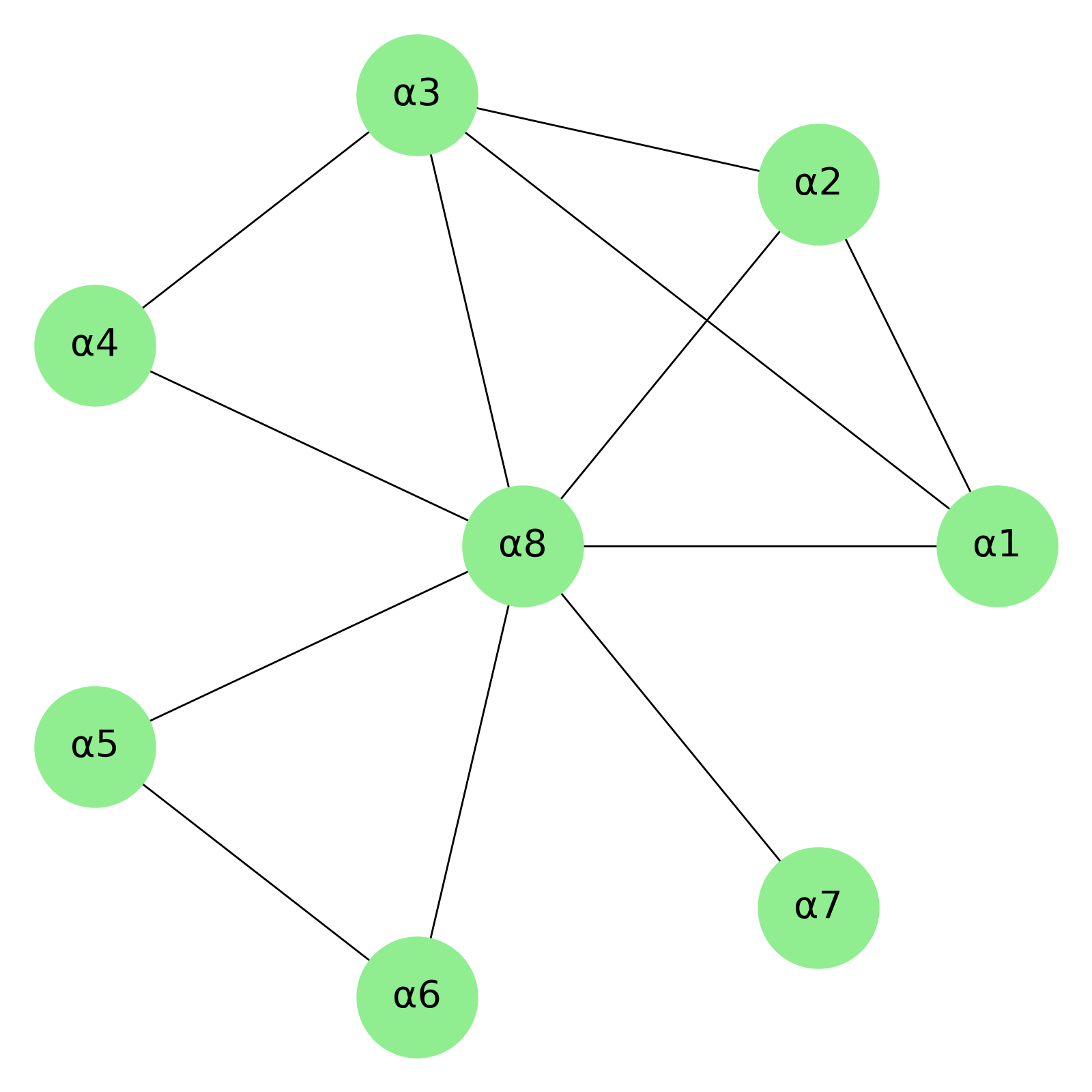}
        \caption{Factor dependency graph for Example \ref{example:removal}}
        \label{fig:sub2}
    \end{subfigure}
    \caption{The visualization shows the dependency graph. By removing the node $\alpha_8$ from the graph on the right, we see that the graph is split into three separate parts, as shown on the left.}
    \label{fig:graphs}
\end{figure}
Given the above example, we design the Algorithm \ref{alg:multi} and \ref{alg:multione} to solve the multi-affine optimization problem efficiently.

\begin{algorithm}
\caption{Minimize a Multi-affine Polynomial Based on Splitting}
\label{alg:multi} 
\begin{algorithmic}[1]
\Function{SplitMin}{$p$, $g$, $t$}
    \State \textbf{Input}: A multi-affine polynomial $p$, associated dependency graph $g = (v, e)$, threshold $t$.
    \State \textbf{Output}: Optimal value of polynomial $p$.
    
    \If{$|v| \leq t$}
        \State \Return $\Call{Minimize}{p}$ 
    \EndIf
    
    \State $m \gets 0$
    \State $gc \gets \Call{ConnectedComponents}{g}$ 
    
    \ForAll{$c \in gc$}
        \State $m \gets m + \Call{MinOneComponent}{p, g, c, t}$
    \EndFor
    
    \State \Return $m$
\EndFunction
\end{algorithmic}
\end{algorithm}

\begin{algorithm}
\caption{Function for Minimizing Over a Single Component}
\label{alg:multione}
\begin{algorithmic}[1]
\Function{MinOneComponent}{$p$, $g$, $c$, $t$}
 \State \textbf{Input}: A multi-affine polynomial $p$, associated dependency graph $g = (v, e)$, threshold $t$ and
 a connected component $c$ from graph $g$. 
\State \textbf{Output}: Optimal value of polynomial $p|_c$ defined in \eqref{eq:MultiaffineRestrict}
    \If{$|c| \leq t$}
        \State \Return $\Call{Minimize}{p|_c}$
    \Else
        \State $ct \gets \Call{FindMinVertexCuts}{g, c}$\footnotemark
        \Comment{Cut on the connected component c from g}
        \State Initialize a stack $s$ with $p|_c$ at the top
        \ForAll{$nd \in ct$}
        \Comment{Each vertex return by the cut}
            \State New stack $s'$ is empty
            \While{$s$ is not empty}
                \State $p' \gets s.pop()$
                \For{$i \in \{-1,1\}$}
                    \State $p' \gets \Call{Substitute}{p', nd, i}$ 
                    \State $s'.push(p')$
                \EndFor
            \EndWhile
            \State $s \gets s'$ 
        \EndFor
        \State $m \gets \infty$
        \While{$s$ is not empty}
            \State $p' \gets s.pop()$
            \State $g' \gets \Call{RemoveVertices}{g, ct}$
            \State $m \gets \min(m, \Call{SplitMin}{p', g', t})$
        \EndWhile
        \State \Return $m$
    \EndIf
\EndFunction
\end{algorithmic}
\end{algorithm}
\footnotetext{The algorithm returns the minimum vertex cut \cite{west2001introduction}  of the connected component $c$.}
\subsection{Evaluation}
\label{sec:scalable evaluation}
In this section, we perform two experiments to demonstrate how the multi-affine zonotope specialized optimization algorithm \ref{alg:multi} and \ref{alg:multione} can help relax the difficulties existing in several crucial operations, which are based on the splitting stragety \cite{Bak2022}, of polynomial zonotopes, including 2-d plotting, support function computation and intersection checking. We compare our method with the newest CORA v2024.1.3, which is a state-of-the-art polynomial zonotopes tool box. Our optimization method is implemented in Python 3.9 and experiments run on an Apple M2 Pro processor with 16GB memory.
\footnote{In contrast, the reachability evaluation in section \ref{sec:Application} compares against CORA toolbox which is the one at the time of the HSCC conference paper getting reviewed, for the purpose of consistency.} 

For the first experiment, we notice that in figure \ref{fig:multi-affine motivation} the 2-d reachable set projection plotted by the CORA built-in splitting based  approach for polynomial zonotope does not provide a tight over-approximation of the reachable set after running around 2 hours. 
Instead, as we notice that time-point reachable set is actually a multi-affine zonotope, we can apply our multi-affine optimization approach to provide a tighter over-approximation which is given in figure \ref{fig:multi-affine}.
The grey region corresponds to CORA polynomial zonotope plotting approach with 60 splits, which takes time more than 2 hours. The blue contour corresponds to getting the convex enclosure taking the idea from Sec. \ref{sec:improve split for mtaff} and apply through our multi-affine optimization approach. After that we intersect our convex contour with the non-convex over-approximated polygon obtained through the CORA plotting and splitting process with 20 splits. 
The computation time of getting the convex contour, including the Kamenev approach to find the direction to optimize with tolerance of $10^{-7}$, and the graph based multi-affine optimization approach itself, combined with the computation time of the CORA built-in function with 20 splits takes $1063s$ in total. Through this experiment, we demonstrate from a visualized perspective that the splitting based algorithm would cost huge computation time when the split number is high for the purpose of getting a tighter over-approximation enclosure. However, the reality is that though huge computation budget is spent, the gain is limited. 

In this second experiment, we are going to make the same argument about splitting based approaches, which have the imbalance of the accuracy improvement with the high computation cost, through numerical evaluation of the supports functions. To demonstrate the inadequacies of the splitting approach not only exist in the time point reachable set from the Dubin's car dynamic with time-varying parameters, but also is a more general problem. We experiment over a different 5-D benchmark system with time-varying parameters from section 5.1 of \citep{Althoff2011b}. We perform reachability analysis through Algorithm. \ref{alg:tvReach} by using a high zonotope order of 500. Then we get the third time point reachable set, and reduce it to a polynomial zonotope (which is multi-affine) with $495$ dependent generators, $5$ independent generators, $96$ dependent factors and project to the first two dimension as our experiment target. The projection would not prevent us from showing the scalability of our result since one always need to project the set representation onto some direction to perform optimization process (by eq. \ref{eq:proj_mtaff}). 

We run the incremental Kamenev approach to find 6 directions to optimize with tolerance of $10^{-2}$. Then test our optimization approach for each direction and comparing against the CORA splitting based support function computation approach for $10$, $20$, $30$ splits. The results are given in Table \ref{tab:all_cases}. Since multi-affine optimization always returns the optimal solution, we set the optimal values we get as a comparison standard $f^{*}$ for the CORA method. Let the optimal value given by CORA support function as $f$. We define the relative error as $\frac{|f - f^{*}|}{|f^{*}|}$. In our study, we found that when we increase the number of splits in the support function, which makes the computation time longer, the improvement in error reduction is not as significant as expected. This result shows a clear imbalance between the high computational cost from more splits and the small gains in accuracy. On the other hand, directly apply our multi-affine optimization approach, guarantees to find the optimal solution while having very few computation time.

\begin{figure}
   \centering
   \setlength{\belowcaptionskip}{-15pt}
   \hspace{-8pt}
   \includegraphics[width=0.5\columnwidth]{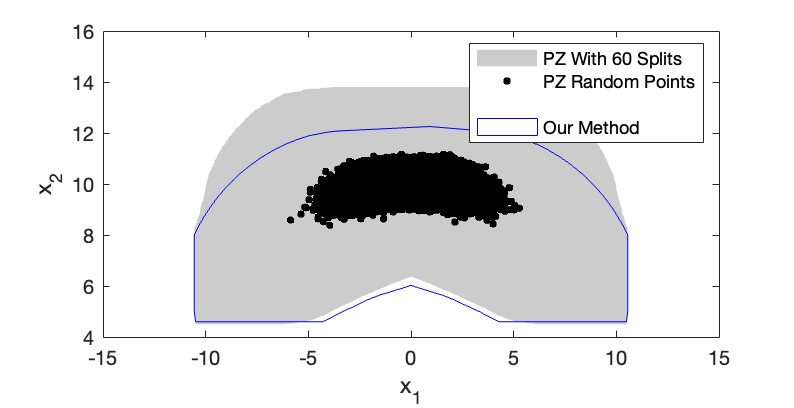}
   \vspace{-10pt}
    \caption{Time point reachable set of the Dubins car dynamics with time varying parameters. The grey region corresponds to CORA polynomial zonotope plotting approach with 60 splits, takes time more than 2 hours. The blue line corresponds to by using our multi-affine optimization approach with polynomial approach with 20 splits.}
    \label{fig:multi-affine}
\end{figure}
\begin{table}[htbp]
\centering
\caption{Relative error and Running Time for different methods.}
\label{tab:all_cases}
\renewcommand{\arraystretch}{1.3} % More space between rows

% First tabular for cases 1, 2, and 3
\begin{tabular}{@{}llll@{}} % Remove space to the vertical edges
\toprule
Method & Case 1 (Error/ Time) & Case 2 (Error/ Time) & Case 3 (Error/ Time) \\
\midrule
10 Splits & $0.41\%$ / \textbf{0}s & $0.6\%$ / \textbf{1}s & $0.49\%$ / \textbf{1}s \\
20 Splits & $0.36\%$ / 9s & $0.54\%$ / 54s & $0.49\%$ / 344s \\
30 Splits & $0.34\%$ / 227s & $0.52\%$ / 5818s & $0.49\%$ / 3087s \\
Ours & \textbf{0} / 16s & \textbf{0} / 16s & \textbf{0} / 15s \\
\bottomrule
\end{tabular}

\vspace{5mm} % Adds some vertical space between the two tabulars

% Second tabular for cases 4, 5, and 6
\begin{tabular}{@{}llll@{}}
\toprule
Method & Case 4 (Error / Time) & Case 5 (Error/ Time) & Case 6 (Error/ Time) \\
\midrule
10 Splits & $0.98\%$ / \textbf{0}s & $0.74\%$ / \textbf{1}s & $0.79\%$ / \textbf{0}s \\
20 Splits & $0.88\%$ / 43s & $0.72\%$ / 358s & $0.64\%$ / 330s \\
30 Splits & $0.83\%$ / 1156s & Exceeds limit & $0.59\%$ / 521s \\
Ours & \textbf{0} / 16s & \textbf{0} / 15s & \textbf{0} / 15s \\
\bottomrule
\end{tabular}

\end{table}

\section{Conclusion}
\label{sec:conclusion}
In this work, we used polynomial zonotopes---a set representation originally developed to analyze nonlinear systems---in order to examine four different flavors of linear systems.
Since polynomial zonotopes preserve dependencies among both symbolic variables and time, the method computes tight non-convex enclosures of reachable sets.
Comparisons on two benchmark systems illustrate that our reachability algorithm yields much tighter enclosures than other state-of-the-art methods.
For linear time invariant systems, our method can represent the reachable set for the entire time horizon with a single polynomial zonotope. 
We extend the conference paper in to this journal version by introducing an efficient and scalable algorithm to optimize over multi-affine zonotopes, which is a special case of polynomial zonotopes.
Through experiments we demonstrate the efficiency and scalability of our algorithm and reveal the problem of existing state-of-the-art method. However, it still faces some problems. For instance, if the way we split the data (vertex cut) is uneven, the algorithm takes longer to run. Creating an algorithm that always splits the data evenly (balanced cut) is also difficult. For future work, we want to explore the possibility of making a random algorithm that can split the data evenly, or we might try a different method that doesn't require checking every possible option for each factor.

\bibliographystyle{plainnat}
\bibliography{kochdumper,cpsGroup, newRef}

\vspace{20pt}

%\noindent {\Large{\textbf{APPENDIX}}}

%\vspace{5pt}

%\noindent We now provide the proof for Lemma~\ref{lemma:time}, which is concerned with the Taylor order $\kappa$ required to satisfy a certain error bound $\psi$ for the enclosure of the homogeneous solution $\mathcal{H}(\tau_0) = e^{A \bm{\mathcal{T}}}\, \mathcal{X}_0$.

\end{document}